\DeclareMathOperator*{\bigtimes}{\vartimes}
\renewcommand{\mathsf}[1]{#1}
\renewcommand{\mathcal}[1]{#1}
\newcommand{\meet}{\sqcap}
\newcommand{\consist}{\ \uparrow\ }
\newcommand{\temporaryRemoved}[1]{}
\newcommand{\lo}[1]{\operatorname{#1} \,}
\newcommand{\rmin}{\lo{min}}
\newcommand{\rmax}{\lo{max}}
\newcommand{\ttt}{\lo{tt}}
\newcommand{\ff}{\lo{f\!f}}
\newcommand{\pr}{\lo{pr}}
\newcommand{\integral}{\lo{int}}
\newcommand{\osup}{\lo{sup}}
\newcommand{\oper}[1]{\operatorname{#1}}
\newcommand{\St}{\operatorname{St}}
\newcommand{\In}{\operatorname{In}}
\newcommand{\llderiv}{\operatorname{L}}
\newcommand{\cepsilon}{\operatorname{\epsilon}}
\newcommand{\Comp}{\mathrm{Comp}}
\newcommand{\nat}{\mathbb{N}}
\newcommand{\interval}{\mathbb{I}}
\newcommand{\realLine}{\mathbb{R}}
\newcommand{\realDom}{\mathbb{IR}}
\newcommand{\dualDom}{\mathbb{DR}}
\newcommand{\Dom}{\mathbb{D}}
\newcommand{\dua}{\mathord{\hbox{\makebox[0pt][l]{\raise .6mm
                           \hbox{$\uparrow$}}$\uparrow$}}}
\newcommand{\lDeriv}{{L}}
\newcommand{\real}[1]{(#1)^\mathbf{s}}
\newcommand{\infi}[1]{(#1)^\mathbf{i}}
\newcommand{\dual}[1]{(#1)^{\mathbf{d}}}
\newcommand{\sem}[1]{\llbracket{#1} \rrbracket}
\newcommand{\bsem}[1]{\mathcal{B}\sem{#1}}
\newcommand{\esem}[1]{\mathcal{E}\sem{#1}}
\begin{document}

%%%Note the beginning and end of the frontmatter section that starts here%%%%%
\begin{frontmatter}
\title{A Language for Evaluating Derivatives of Functionals Using Automatic Differentiation\thanksref{ALL}}
\thanks[ALL]{This work has been partially supported by the Italian MUR project PRIN 20228KXFN2 “STENDHAL”.}   %%Text of \thanks[ALL} here..
 %%%%%%%%%%%%%%%%%%%%%%%%%%%%			This Thanks is optional.
  %%%%Now the author(s) names(s)%%%%%
  \author{Pietro Di Gianantonio\thanksref{a}\thanksref{aemail}}	%%Note NO SPACE between 
   \author{Abbas Edalat\thanksref{b}\thanksref{bemail}}		%last name and \thanksref{...} 
   \author{Ran Gutin\thanksref{b}\thanksref{cemail}}		%last name and \thanksref{...} 
    %%%Next come the addresses%%%%
   \address[a]{Univesity of Udine \\ Udine, Italy}  							
   \thanks[aemail]{Email: \href{mailto:pietro.digianantonio@uniud.it} {\texttt{\normalshape
        pietro.digianantonio@uniud.it}}} 
   %%%Note: if both authors share same institution, only list the address once, after the second 
   %%%author. 
   %%%There also is a link from the first author to the co-author's address to show how to list 
   %%%affiliations to more than one institution, when needed. 
  \address[b]{Imperial College London \\ London, UK} 
  \thanks[bemail]{Email:  \href{mailto:a.edalat@imperial.ac.uk} {\texttt{\normalshape
        a.edalat@imperial.ac.uk}}}
   \thanks[cemail]{Email:  \href{mailto:r.gutin20@imperial.ac.uk} {\texttt{\normalshape
        r.gutin20@imperial.ac.uk}}}

% \maketitle
\begin{abstract}
% A dual number approach to functional derivative.
We present a simple functional programming language, called Dual PCF, that implements forward mode automatic differentiation using dual numbers in the framework of exact real number computation.  The main new feature of this language is the ability to evaluate correctly up to the precision specified by the user --  in a simple and direct way -- the directional derivative of functionals as well as first order functions. In contrast to other comparable languages, Dual PCF also includes the recursive operator for defining functions and functionals. We provide a wide range of examples of Lipschitz functions and functionals that can be defined in Dual PCF.  We use domain theory both to give a denotational semantics to the language and to prove the correctness of the new derivative operator using logical relations.   To be able to differentiate functionals---including on function spaces equipped with their compact-open topology that do not admit a norm---we develop a domain-theoretic directional derivative that is Scott continuous and extends Clarke's subgradient of real-valued locally Lipschitz maps on Banach spaces to real-valued continuous maps on Hausdorff topological vector spaces. Finally, we show that  we can express arbitrary computable linear functionals in Dual PCF.
% Keywords:
% Automatic differentiation
% Typed lambda-calculus
% Gateaux derivative
% Domain theory
% Exact real number computation
% Clarke's gradient
% Compact-open topology
% Topological vector spaces
% Logical relations
% Riesz-Markov-Kakutani Representation Theorem
\end{abstract}

\begin{keyword}
Automatic differentiation, Typed lambda-calculus, Gateaux derivative, Domain theory, Exact real number computation, Clarke's gradient
\end{keyword}
\end{frontmatter}

\section{Introduction}

In this paper, we describe a language for performing automatic differentiation on a wide set of functions, including higher-order functions and non-differentiable Lipschitz functions. This language combines the dual numbers---i.e., the algebra consisting of numbers of the form $a + b \varepsilon$ where $a$ and $b$ are real and $\varepsilon^2 = 0$ --- with a domain-theoretic directional derivative which we introduce in this paper. The domain-theoretic directional derivative can properly and correctly handle higher-order functions and Lipschitz functions like the absolute value function or ReLU used in machine learning. The dual numbers are used to incorporate automatic differentiation into the language in a straightforward manner.

Due to its wide range of applications, automatic differentiation has been an active theoretical and practical area of research in recent years~\cite{griewank2008evaluating}. While there is a large body of work in the subject, automatic differentiation is not always implemented in a sound and rigorous way.
A standard example is the if-then-else constructor, ubiquitous in numerical programs, 
which gives incorrect results when evaluated in automatic differentiation software \cite{abadiPlotkin19,MazzaPagani21}. 

In this work,  we develop a language, based on dual numbers and called Dual PCF, which has a rich set of definable functions. This set includes locally Lipschitz functions such as the absolute value function and ReLU that are not differentiable everywhere but are widely used in applications and have a set-valued generalised derivative called the Clarke subgradient, which generalises the classical gradient ~\cite{clarke1990optimization}.

Several formal calculi, containing a primitive for evaluating the derivative of functions, have been proposed in the literature. 
All together, four main features distinguish our calculus from these other work: developing an exact computation framework using domain theory, having a recursive operator in the language, the possibility to deal with functions that are not infinitely differentiable, and the ability to compute the derivative of functionals on real-valued functions. None of the existing languages accommodate all these four main features together.

The recursive operator allows the recursive definition of functions on real numbers. Most of the programming languages in the literature, instead, assume the existence of a sufficiently rich set of basic functions over the real numbers and construct the other functions by composition, $\lambda$-abstraction, and test operator \cite{AlvarezOng20,PicalloOng21,HuotSV22,KPK22,ShermanMC21}. Since the recursive operator is missing, these languages are not Turing complete from the point of view of real number computation. 
%The introduction of a recursive operator is left as an future  in \cite{KPK22}. 
In \cite{KPK22}, it is claimed that the approach used in~\cite{MazzaPagani21}, together with other techniques, can be used to solve the problem of dealing with the recursive definition. However, the actual solution is left as future work. 

To accommodate the recursive operator, the denotational semantics   
 needs to describe partial elements, either in the form of partially defined functions over the real line as in \cite{abadiPlotkin19,MazzaPagani21} or as partial real numbers, as in the present work. The use of partial real numbers unifies our treatment of the derivative operator to the field of exact real number computation. Other approaches assume that real numbers form a basic type where the arithmetic operations can be computed in constant time, giving the exact result. Therefore, the computation on reals is idealised, and the problem of the infinitary nature of real numbers is wholly avoided. 
  
  A second aspect by which our work differs from existing work is that we also consider functions that are Lipschitz but not differentiable, such as the absolute value function, which plays a key role in all applications. The presence of non-differentiable functions makes the repeated application of derivative operator more complex.  The calculi in~\cite{AlvarezOng20,EhrhardR03} assume that all functions are infinitely differentiable; as a consequence, they cannot accommodate, in a coherent way, functions like absolute value or min (evaluating the minimum of two real numbers) that are Lipschitz but not everywhere differentiable.
The formal languages developed in~\cite{abadiPlotkin19,MazzaPagani21} accommodate a larger set of functions including the if-then-else constructor, but define a restricted domain for the input-values and only ensure the correct evaluation of the derivative in the restricted domain. 
%We use dual numbers to develop a provably correct framework for automatic differentiation of functions and functionals. 

% In this work, we define a language featuring the dual numbers that allows for the expression of a rich set of functions. This set of functions includes many locally Lipschitz but non-differentiable functions like the absolute value function. We define a generalisation of the derivative, called the \emph{generalised directional derivative}, that extends Clarke's gradient (itself a generalisation of the derivative) to enable one to differentiate such functions at every point. At any point where the classical derivative fails to exist, the generalised directional derivative produces a well-defined \emph{partial value}. We show that automatic differentiation extends gracefully to this generalised derivative.

However, the main feature of our calculus is the ability to compute the derivative of functionals on real-valued functions. We therefore extend the mechanism of automatic differentiation to provide the directional derivative of functionals.  To this end, we have developed a domain-theoretic Scott continuous directional derivative for real-valued functions on Hausdorff topological vector spaces that extends the Clarke subgradient to functionals defined on function spaces where the Scott topology does not admit a norm. In addition, we are able to reduce the problem of definability of linear functionals in the language to the definability of first-order functions.

 Automatic differentiation on functionals has also been considered in~\cite{frechet,HuotDV20,KPK22,ShermanMC21}. In \cite{frechet}, an approach quite different from ours is used: functions on reals are represented through a base of Chebyshev polynomials, and thereby the directional derivative of a functional is evaluated. In \cite{HuotDV20,ShermanMC21} no recursive definition of functions on reals is possible and we have already commented on \cite{KPK22}.  

Because both recursion on real numbers and non-differentiable Lipschitz maps are included in our language, we cannot use the approach adopted by \cite{HuotSV22,ShermanMC21} in the denotational semantics, which creates a serious challenge for defining a denotational semantics as we undertake in this work. Here, we  establish the correctness of the results given by automatic differentiation of higher-order functions and relate them to mathematical notions of derivatives. 

 The new domain-theoretic notion of directional derivative developed in the paper computes the support function of the well-established mathematical concept of the generalized Lebourg's subgradient which reduces, when this subgradient is a singleton, to the Gateaux derivative of real-valued maps on topological vectors spaces. In contrast, in \cite{ShermanMC21,HuotSV22} the directional derivatives developed using category theory are not related to any established mathematical notions of differentiation on topological vector spaces.
In this respect, in \cite{HuotSV22},   Section 7.2., which has the title ``Canonical derivatives of higher order functions?", concludes with the following remark:
``We hope that an exploration of such techniques might lead to an appropriate notion of computable derivative, even for higher order functions."
Thus, the authors leave the problem of defining a canonical notion of derivative for higher order functions as an open problem. We claim that our approach satisfies at least some of the requirements for a canonical notion of directional derivative since we have shown the coincidence between the standard mathematical notion of derivative and a constructive notion of derivative obtained through domain theory, which, in turn, induces an effective way to differentiate functionals.

Dual PCF is simply a functional programming language with an extra basic type of dual numbers. 
In fact, dual numbers are at the basis of a standard approach to automatic differentiation, in which one evaluates the derivative of a function at a given point and along one variable.  In our language, however, we show also that dual numbers can be used to obtain the directional derivative of a function in several variables, and the directional derivative of a functional. An important contribution is a formal proof of correctness for the computation of the directional derivative inside the language. 

We also note that there exists a trick, well-known in functional analysis, to reduce the problem of evaluating the derivative of a functional along a given direction to the problem of evaluating the derivative of a first order function.  Given a functional $F : (\realLine \to \realLine) \to \realLine$, the derivative of $F$ at $f : \realLine \to \realLine$ in the direction $g : \realLine \to \realLine$ can be reduced to evaluating the derivative of $\lambda x . F(  \lambda y . f(y) + x \cdot g(y))$.  However, this technique has its own problems. Without an analysis of how functionals are described in automatic differentiation, there is no guarantee that the above technique will evaluate the derivative correctly. More specifically, our language contains higher-order primitives such as integration or supremum, and it is necessary to check that automatic differentiation is correctly implemented on these primitives. Furthermore, it is useful to be able to evaluate the derivative of a functional directly from the functional itself, without having to plunge the functional into another expression just to extract the derivative.  

The wide range of applications of differentiation of functionals, some of which we elaborate in the paper, include:
 calculus of variations~\cite{variations}, numerical solution of differential equations using Newton's method over function spaces~\cite{frechet}, optimal control theory which features some non-differentiable functions such as the absolute value function~\cite{clarke}, physical applications in analytical mechanics, Lagrangian and Hamiltonian mechanics~ \cite{sicm,analytical} and in quantum chemistry~ \cite{kineticenergy}.
 
As one of our key applications, we show how in Dual PCF one can solve initial value problems in the theory of ordinary differential equations~\cite{hirsch1974differential}. Using the constant for integration in the language, we are able to construct Picard's functional for solving an initial value problem in the framework of domain theory~\cite{edalat2007domain}. To our knowledge, this is the first time a (theoretical) functional programming language can undertake this task.  

\subsection{Related works}

Several simple calculi with a derivative operator as a primitive have been presented in the literature.  A few of these do not aim to implement automatic differentiation \cite{EhrhardR03,DigEdalat13}.
Forward-mode automatic differentiation is realized in \cite{Manzyuk12,ManzyukPRRS19,ShermanMC21}.

In recent years, driven by the application of reverse-mode automatic differentiation in deep learning, a series of formal calculi have been proposed, some of them based on reverse-mode automatic differentiation \cite{BrunelMP20,Elliott18,MakOng20}, others implementing both forward-mode and reverse-mode automatic differentiation \cite{MazzaPagani21}. In these works, the correctness of automatic differentiation is proved in the context of first order languages \cite{abadiPlotkin19}, or higher order functional languages \cite{AlvarezOng20,BrunelMP20,Elliott18,KPK22,MakOng20,MazzaPagani21}.
Calculi that use logical relations to prove the correctness of derivative evaluation, a key aspect of our work, are provided in \cite{BartheCDG20,DigEdalat13}.
In several of these works, the semantics of differentiation is given using a categorical setting, \cite{AlvarezOng20,HuotDV20,MakOng20}, in contrast to using the more concrete space of real numbers or its extension in domain theory. 
With the exception of \cite{DigEdalat13,ShermanMC21}, none of these formalisms uses the notion of Clarke subgradient. 
The main new result over our previous work \cite{DigEdalat13} is the extension of the notion of derivative to functionals. This extension involves, among other things, transitioning from the Clarke's subgradient for Banach space to generalized Lebourg's subgradient for topological vector space. Also, we consider the use of dual numbers to provide an alternative operational semantics. Additionally, we show that all computable linear functionals can be expressed in Dual PCF.

The paper is organised as follows. In the rest of this section, we recall the elementary facts about dual numbers, followed by the domain-theoretic notions and results required in this paper. In Section~\ref{directional-l-derivative}, we develop the domain-theoretic generalization of Clarke's subgradient for Hausdorff topological vector spaces. In Section~\ref{star-op}, we introduce the domain of dual numbers. In Section~\ref{language}, we present the syntax of Dual PCF, with its denotational and operational semantics, and prove adequacy. In Section~\ref{examples}, a wide range of examples of functions and functionals definable in Dual PCF are presented. In section~\ref{consistency-section}, we define the notion of local consistency between the real and dual parts of a function on the dual domain. In Section~\ref{correctness}, we show that the semantic interpretations of the functions definable in Dual PCF are locally consistent. Finally, in Section~\ref{lin-func}, we prove that all linear functionals on real functions that vanish at infinity are definable in Dual PCF.

All proofs for the results are provided in the full version of this paper~\cite{DEG22}.

\subsection{Dual-number preliminaries}
 
The dual numbers are one of only three 2-dimensional ``number systems'' that extend the real numbers (\cite{kantor89}). A dual number is an expression of the form $a+\varepsilon b$, where \(a\) and  \(b\)  are reals, and \(\varepsilon\) is a new type of imaginary number with \(\varepsilon^2=0\).

The \(\varepsilon\) can also be thought of as representing some very small number (\cite{bell_2008}). The number is not \emph{so small} as to be zero, but it is small enough that its product with itself is zero. This leads to an intuitive picture where the dual numbers represent a one-dimensional number line where each number on that line is surrounded by a set of numbers which are infinitely close to it. We call the term \(a\) in \(a + b\varepsilon\) the \emph{standard part}, and the term \(b\) the \emph{infinitesimal part}. This terminology is consistent with that of nonstandard analysis.

To appreciate the properties of dual numbers, let \(f(x)\) be some polynomial. It is easy to check that
\begin{equation} \label{consistent}
    f(x + \varepsilon x') = f(x) + \varepsilon x' \cdot f'(x),
\end{equation} 
where $f'$ is the derivative of $f$. On a purely algebraic level, the above equation shows that the dual numbers are able to ``accidentally'' differentiate an arbitrary function. This feature of the dual numbers can be used to ``induce'' a computer language into computing the derivative of a subroutine, essentially by exploiting operator overloading (\cite{Corliss93operatoroverloading}). The above presentation can be seen as an alternative formulation of ``forward-mode automatic differentiation'' (\cite{hitchhiker}), a method that is notable for not introducing any numerical approximations, and also for avoiding the exponential overhead of naive symbolic differentiation. 
%This method is remarkable for not introducing any numerical approximations, and also for avoiding the exponential overhead of naive symbolic differentiation. The term for this method is ``forward-mode automatic differentiation'' (\cite{hitchhiker}).

\section{Scott topology and directional derivative on topological vector spaces}\label{directional-l-derivative}

We first present the elements of domain theory and topology required here; see~\cite{abramsky1995domain} and~\cite{gierz2003continuous} for basic references to domain theory. We denote the closure and interior of a subset $S$ of a topological space by $\overline{S}$ and $S^\circ$ respectively. For a map $f:X\to Y$ of topological spaces $X$ and $Y$, denote the image of any subset $S\subset X$ by $f[S]$. The compact-open topology of the function space $(X\to Y)$ has sub-basic open sets of the form $(C,O)=\{f:f[C]\subset O\}$, with $C$ compact and $O$ open.  If $X$ and $Y$ are metric spaces, $f$ is {\em locally Lipschitz} if for any $x\in X$ there exist an open neighbourhood $O$ of $x$ and  $k\geq 0$ such that $d_Y(f(x_1),f(x_2))\leq kd_X(x_1,x_2)$ for all $x_1,x_2\in O$. If $I\subset \realLine$ is a non-empty compact real interval, we write $I=[I^-,I^+]$.

A directed complete partial order $D$ is a partial order in which every directed set $A\subset D$ has a lub (least upper bound) or supremum $\bigsqcup A$. The way-below relation $\ll$ in a dcpo $(D,\sqsubseteq)$ is defined by $x\ll y$ if whenever there is a directed subset $A\subset D$ with $y\sqsubseteq \bigsqcup A$, then there exists $a\in A$ with $x\sqsubseteq a$. A subset $B\subset D$ is a basis if for all $y\in D$ the set $\{x\in B:x\ll y\}$ is directed with lub $y$. By a {\em domain} we mean a dcpo with a basis. Domains are also called continuous dcpo's. If $D$ has a countable base then it is called a countably based domain.  In a domain $D$ with basis $B$, we have the interpolation property: the relation $x\ll y$, for $x,y\in D$, implies there exists $z\in B$ with $x\ll z\ll y$. 

A subset $A\subset D$ is {\em bounded} if there exists $d\in D$ such that for all $x\in A$ we have $x\sqsubseteq d$. If a pair of elements $d_1,d_2\in D$ is bounded above (consistent), we write $d_1\uparrow d_2$ and refer to the predicate $\uparrow$ as the {\em consistency relation}. If any bounded subset of $D$ has a lub then $D$ is called bounded complete. In particular a bounded complete domain has a bottom element $\bot$ that is the lub of the empty subset. A bounded complete domain $D$ has the property that any non-empty subset $S\subset D$ has an infimum or greatest lower bound $\bigsqcap S$. All domains in this paper are bounded complete and countably based. 

The set of non-empty compact intervals of the real line ordered by reverse inclusion and augmented with the whole real line as bottom is the prototype bounded complete domain for real numbers denoted by $\realDom$, in which $I\ll J$ iff $J \subset I^\circ$. It has a basis consisting of all intervals with rational endpoints.  For two non-empty compact intervals $I$ and $J$, their infimum $I\sqcap J$ is the convex closure of $I\cup J$. The Scott topology on a domain $D$ with basis $B$ has sub-basic open sets of the form $\dua b:=\{x\in D:b\ll x\}$ for any $b\in B$. The upper set of an element $x\in D$ is given by $\uparrow x=\{y\in D:x\sqsubseteq y\}$. 

The lattice of Scott open sets of a bounded complete domain is continuous. The basic Scott open sets for $\realDom$ are of the form $\{J\in \realDom:J\subset I^\circ\}$ for any $I\in \realDom$. The maximal elements of $\realDom$ are the singletons $\{x\}$ for $x\in \realLine$ which we identify with real numbers, i.e., we write $\realLine\subset \realDom$, as the mapping $x\mapsto \{x\}$ is a topological embedding when $\realLine$ is equipped with its Euclidean topology and $\realDom$ with its Scott topology. Similarly, $\interval [a,b]$ is the domain of non-empty compact intervals of $[a,b]$ ordered with reverse inclusion. 

If $X$ is any topological space with some open set $O\subset X$ and $d\in D$ lies in the domain $D$, then the single-step function $d\chi_O:X\to D$, defined by $d\chi_O(x)=d$ if $x\in O$ and $\bot$ otherwise, is a Scott continuous function. The partial order on $D$ induces by point-wise extension a partial order on continuous functions of type $X\to D$ with $f\sqsubseteq g$ if $f(x)\sqsubseteq g(x)$ for all $x\in X$.  For any two bounded complete domains $D$ and $E$, the function space $(D\to E)$ consisting of Scott continuous functions from $D$ to $E$ with the extensional order is a bounded complete domain with a basis consisting of lubs of bounded and finite  families of single-step functions. If the lattice $\Omega X$ of open sets of $X$ is a domain and if $D$ is a bounded complete domain then for any continuous function $f:X\to D$ we have $d\chi_O\ll f$ iff $O\ll_{\Omega X} f^{-1}(\dua d)$.~\cite[Proposition II-4.20(iv)]{gierz2003continuous}. 

\begin{proposition}\label{envelope} \cite[Exercise II-3.19]{gierz2003continuous}
 If $h:A\subset Y\to D$ is any map from a dense subset $A$ of $Y$ into a bounded complete domain $D$, then its envelope 
 \[h^\star:Y\to D\]
 given by $h^\star(x)=\bigsqcup\{\bigsqcap h[O]: x\in O \mbox{ open}\}$ is a continuous map with $h^\star(x)\sqsubseteq h(x)$, and in addition $h^\star(x)=h(x)$ if $h$ is continuous at $x\in A$. Moreover, $h^\star$ is the greatest continuous function $p:Y\to D$ with $p(x)\sqsubseteq h(x)$ for all $x\in Y$.
\end{proposition}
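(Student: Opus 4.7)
The plan is to verify the four assertions in turn using the interpolation property in $D$, density of $A$ in $Y$, and the standard criterion of Scott continuity via preimages of basic opens $\dua d$.

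First I would check that the defining family $\{\bigsqcap h[O] : x \in O \text{ open in } Y\}$ is well-defined and directed: density of $A$ ensures every nonempty open $O$ satisfies $O \cap A \neq \emptyset$, so each $h[O]$ is a nonempty subset of $D$ and its infimum exists by bounded completeness. If $x \in O_1$ and $x \in O_2$, then $O_1 \cap O_2$ is an open neighborhood of $x$ whose infimum $\bigsqcap h[O_1 \cap O_2]$ dominates both $\bigsqcap h[O_i]$, so the family is directed and $h^\star(x)$ is a legitimate element of $D$. Scott continuity of $h^\star$ then reduces to showing $(h^\star)^{-1}(\dua d)$ is open for each $d$: if $d \ll h^\star(x)$, interpolation together with the way-below relation against the directed supremum produces an open $O \ni x$ with $d \ll \bigsqcap h[O]$; the element $\bigsqcap h[O]$ itself belongs to the defining family for every $y \in O$, so $d \ll \bigsqcap h[O] \sqsubseteq h^\star(y)$ for all $y \in O$.

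Next I would compare $h^\star$ with $h$ on $A$. For $x \in A$ each $h[O]$ with $x \in O$ contains $h(x)$, so $\bigsqcap h[O] \sqsubseteq h(x)$ and hence $h^\star(x) \sqsubseteq h(x)$. If $h$ is continuous at $x \in A$, then for each $d \ll h(x)$ Scott continuity produces an open $O \ni x$ with $h[O \cap A] \subseteq \dua d$, so $d \sqsubseteq \bigsqcap h[O] \sqsubseteq h^\star(x)$; taking the supremum over $d \ll h(x)$ yields the reverse inequality and hence equality.

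Finally, for maximality, I would take a continuous $p : Y \to D$ with $p \sqsubseteq h$ on $A$, fix $x \in Y$, and for each $d \ll p(x)$ use continuity of $p$ to get an open $O \ni x$ with $d \sqsubseteq p(y)$ for every $y \in O$; restricting to $y \in O \cap A$ gives $d \sqsubseteq p(y) \sqsubseteq h(y)$, whence $d \sqsubseteq \bigsqcap h[O] \sqsubseteq h^\star(x)$, and supping over $d \ll p(x)$ finishes. No single step is difficult; the main obstacle is bookkeeping between $h$, defined only on the dense subset $A$, and the double supremum--infimum definition of $h^\star$, together with systematic use of density to keep all meets in question taken over nonempty subsets of the bounded complete domain $D$.
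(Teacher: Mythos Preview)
The paper does not supply its own proof of this proposition: it is stated with a citation to \cite[Exercise II-3.19]{gierz2003continuous} and no argument appears in the appendix. There is therefore nothing to compare against.

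Your proof is correct and is the standard one. Each of the four ingredients is handled properly: density of $A$ guarantees the meets $\bigsqcap h[O]$ are over nonempty sets; directedness follows from intersecting neighbourhoods; Scott continuity follows from interpolating $d \ll d' \ll h^\star(x)$ and then finding a single neighbourhood $O$ witnessing $d' \sqsubseteq \bigsqcap h[O] \sqsubseteq h^\star(y)$ for every $y \in O$; and the comparison with $h$ and the maximality clause are exactly as you describe. One small point of presentation: you might make explicit that $h[O]$ is shorthand for $h[O \cap A]$, since $h$ is defined only on $A$; the paper uses this abbreviation silently, and it is the only place a careless reader could stumble.
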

Since $\realLine\subset \realDom$ is dense, any continuous map $f:\realLine\to \realLine\subset \realDom$, considered as a continuous map $f:\realLine\to \realDom$, has a maximal extension $f^\star:\realDom\to\realDom$ given by $f^\star(x)=f[x]$. We also have the following.
\begin{proposition}\label{function-sp-ext}
The set of functions \[\{f^\star:\realDom\to \realDom:f\in (\realLine\to \realLine)\}\] is dense in $(\realDom\to \realDom)$ with respect to the Scott topology.
  \end{proposition}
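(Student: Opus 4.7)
My plan is to show that every non-empty Scott-open set $W \subseteq (\realDom \to \realDom)$ meets the set $\{f^\star : f \in (\realLine \to \realLine)\}$. Fix $F \in W$. Because $(\realDom \to \realDom)$ is a continuous dcpo with basis the finite consistent lubs of single-step functions $d\,\chi_O$---with $d = [c,e]$ a rational compact interval and $O = \dua[a,b]$ a sub-basic Scott open of $\realDom$---writing $F$ as the directed supremum of basis elements way-below it and using Scott-openness of $W$ produces a basis element
\[
g \;=\; \bigsqcup_{i=1}^n d_i\,\chi_{\dua[a_i,b_i]}, \qquad d_i = [c_i,e_i],
\]
satisfying $g \in W$ and $g \ll F$. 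Since $W$ is an up-set, it then suffices to build a continuous $f : \realLine \to \realLine$ with $g \sqsubseteq f^\star$, equivalently $f[x] \subseteq d_i$ for every $i$ and every compact $x \subset (a_i,b_i)$.

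The next step is to exploit $g \ll F$ to obtain strong pointwise control. The criterion $d\,\chi_O \ll h \iff O \ll_{\Omega \realDom} h^{-1}(\dua d)$ quoted in the preamble, combined with the fact that $\uparrow[a_i,b_i]$ is the Scott-compact saturation of $\dua[a_i,b_i]$, applied to each summand $d_i\,\chi_{\dua[a_i,b_i]} \ll F$, gives $F(y) \subset d_i^\circ$ for every $y \sqsupseteq [a_i,b_i]$. Specialising to singletons $y = \{r\}$ with $r \in [a_i,b_i]$ and intersecting over $i$, for every real $r$
\[
F(\{r\}) \;\subseteq\; \bigcap_{i \,:\, r \in [a_i,b_i]} (c_i,e_i),
\]
which is therefore a non-empty open interval. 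Setting $L(r) = \max\{c_i : r \in [a_i,b_i]\}$ and $U(r) = \min\{e_i : r \in [a_i,b_i]\}$ (with $\mp\infty$ when the index set is empty), $L$ is upper semi-continuous, $U$ is lower semi-continuous, and $L < U$ throughout $\realLine$.

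Kat\v{e}tov--Tong insertion on the normal space $\realLine$ now delivers a continuous $f$ with $L \leq f \leq U$, i.e.\ $f(r) \in [c_i,e_i]$ whenever $r \in [a_i,b_i]$; alternatively, since the breakpoints $\{a_i,b_i\}$ are finite, $f$ can be built explicitly by taking a constant value in each interval of constancy of $(L,U)$ and linearly interpolating through the critical points, the semi-continuity inequalities $L(p) \geq L(p\pm)$ and $U(p) \leq U(p\pm)$ ensuring the interpolant stays within the allowed corridor on both sides. Then for $x \in \dua[a_i,b_i]$ we have $f^\star(x) = f[x] \subseteq f[[a_i,b_i]] \subseteq d_i$, whence $g \sqsubseteq f^\star$ and $f^\star \in W$ by upward-closure of $W$. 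The hard part is precisely this selection step: with only the weaker hypothesis $g \sqsubseteq F$ one obtains pointwise control solely on the \emph{open} intervals $(a_i,b_i)$, and simple two-piece examples---with opens $\dua[0,1]$ and $\dua[1,2]$ meeting only at the boundary $r=1$ and incompatible $d_i$'s---admit no continuous gluing across the shared endpoint. It is the strong consistency derived from the way-below hypothesis $g \ll F$, which propagates the constraints from the open intervals $(a_i,b_i)$ to their closures $[a_i,b_i]$, that places the selection problem within the reach of Kat\v{e}tov--Tong.
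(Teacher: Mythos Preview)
Your argument is correct. The reduction to a basis element $g \ll F$ with $g \in W$, followed by the extraction of closed-interval constraints via the equivalence $\dua[a_i,b_i] \ll_{\Omega\realDom} U \iff \uparrow[a_i,b_i] \subset U$ (the same step the paper uses in the proof of Proposition~\ref{top-embed}), and finally the continuous selection between the semicontinuous envelopes $L$ and $U$, is a clean and complete proof of density.

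The paper's own proof proceeds somewhat differently. It starts directly from a step function $h=\bigsqcup_i a_i\chi_{\dua A_i}$ with $\dua h\neq\emptyset$ and never invokes the witness $F$ or its point values. Instead it restricts $h$ to $\realLine$, analyses the connected components of the non-bottom region, observes that Scott continuity of the induced step function at each internal partition point forces the adjacent target intervals to be nested, and then builds a piecewise-linear $G$ by threading through the midpoints of these nested intervals; the pieces are glued across components and extended constantly to $\pm\infty$, yielding $h\ll G^\star$. So where the paper works structurally with the step function and produces the selection by hand, you externalise the consistency information through $F(\{r\})$ and then hand the selection problem to Kat\v{e}tov--Tong (or, in your alternative sketch, to essentially the same piecewise-linear interpolation the paper carries out). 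Your route makes the role of the way-below hypothesis especially transparent---your closing remark that mere $g\sqsubseteq F$ would only constrain the open intervals $(a_i,b_i)$ and can fail at shared endpoints is exactly the obstruction the paper's nesting observation is circumventing---while the paper's route is slightly more self-contained, needing no external insertion theorem.
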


  In this paper, we construct a language for differentiation of functions of first-order and functionals of second-order. The input and output of these functions and functionals will be given by bounded complete domains whose set of maximal elements consists of real numbers, in the case of functions, or contains the continuous real-valued functions, in the case of functionals. We will first study the topological properties of the space of first-order real-valued functions in this section.
\subsection{Scott topologies on function spaces}

The space of maximal elements of a bounded complete domain, equipped with its relative Scott topology, is Hausdorff~\cite{kamimura1984total} and is in fact a complete separable metrisable space~\cite{lawson1997spaces}. Moreover, the domains encountered in this paper are constructed from $\realDom$ by using Cartesian product and function space construction, and therefore they inherit the operations for addition and scalar multiplication from interval arithmetic operations on $\realDom$; e.g., for $f,g:\realDom\to\realDom$ we have $f+g:\realDom\to\realDom$, with $(f+g)(x)=f(x)+g(x)$, where $f(x)+g(x)$ is the sum of two intervals $f(x)$ and $g(x)$.

As a major example that we work with in this paper, consider the function space $(\realLine\to \realLine)$ of all continuous real-valued functions on the real line. This function space is a real vector space with the usual operations of addition of functions and multiplication by real numbers. It is in one to one correspondence with the subset of functions in the maximal elements of the bounded complete domain $(\realDom \to \realDom)$ consisting of the maximal extensions of continuous real-valued functions. This subset of maximal elements inherits the relative Scott topology from the domain, but does not admit a norm (see Proposition~\ref{not-normable} below).

The compact-open topology on the function space $(Y\to Z)$, the collection of all continuous functions between topological spaces $Y$ and $Z$, has sub-basic open sets of the form
\[{ U}(C,O):=\{f:Y\to Z: f[C]\subset O\},\]
for any compact subset $C\subset Y$ and open $O\subset Z$~\cite{dugundji1966topology}. There is a simple characterisation of the compact-open topology for the function space $(\realLine \to \realLine)$ or $([0,1] \to \realLine)$. 
\begin{lemma}\label{comp-open}
The compact-open topology on $(\realLine\to\realLine)$ is generated by the sub-basis consisting of subsets of the form $U(\overline{O_1},O_2)$ where $O_1$ and $O_2$ are open intervals with compact closures.
\end{lemma}

Let $\operatorname{Max}(D)$ denote the set of maximal points of a bounded complete domain $D$ and
let $(\_)^\star:(\realLine\to\realLine)\to (\realDom\to \realDom)$ be the maximal extension (envelope) operator given in Proposition~\ref{envelope}, where $(\realLine\to\realLine)$ is equipped with the compact-open topology.
\begin{proposition}\label{top-embed}
 The map $(\_)^\star$ is a topological embedding, i.e. it is injective, continuous and is an open map onto its image in $ \operatorname{Max}(\realDom\to \realDom)$ with respect to the relative Scott topology on $\operatorname{Max}(\realDom\to \realDom)$.
\end{proposition}

The function space $(\realLine\to \realLine)$, equipped with the compact-open topology, is an example of a Hausdorff topological vector space. Recall that a Hausdorff topological vector space is a vector space with a Hausdorff topology with respect to which addition of vectors and scalar multiplication are continuous operations. 

\begin{proposition}\label{not-normable}
 
The function space $(\realLine\to \realLine)$ with the compact-open topology does not admit a norm.
\end{proposition}

We make two additional remarks here. The sup norm topology on the function space  $([0,1]\to\realDom)$ coincides with the compact-open topology as is easy to check. However, the compact-open topology on the function space $(\realLine\to_{{b}} \realLine)$, the set of bounded continuous maps, is strictly weaker than the sup norm topology~\cite[p. 284]{dugundji1966topology}. The same is true for the space $C_0(\realLine):=(\realLine\to_0 \realLine)$, the set of continuous maps of type $\realLine\to \realLine$ that vanish at infinity.

Therefore, for computational reasons, we will work with Hausdorff topological vector spaces which are more general than normed vector spaces and give a unifying framework for function spaces as well as the more basic finite dimensional Euclidean spaces we consider in this paper. 

Finally, we have the following result which follows from Proposition~\ref{function-sp-ext}. Consider the function space $(\realLine\to\realLine)\to \realLine$ with its compact-open topology.
\begin{corollary}\label{ext-functionals}
 Any continuous functional $F: (\realLine\to\realLine)\to \realLine$ has a continuous extension $F^\star:(\realDom\to\realDom)\to \realDom$ obtained by first extending $F$ by $F(f^\star)=F(f)$ for any $f\in (\realLine\to \realLine)$, which defines $F$ on a dense subset of $(\realDom\to\realDom)$.
  
\end{corollary}

Let $f:X\to \realLine$ be a real-valued map on a Hausdorff topological vector space $X$ (i.e., a vector space with a Hausdorff topology with respect to which addition and scalar multiplication are continuous operations). We have the following new notion of directional derivative, which generalises Clarke's generalised directional derivative to real-valued maps on topological vector spaces.
\begin{definition} \label{directionalDerivative}
  The {\em domain-theoretic directional derivative} $Lf:X\times X\to \realDom$  of $f:X\to \realLine$ at $x\in X$ in the direction $x'\in X$ is defined as
  %\[Lf(x,x')=\bigcap\{C: f\in \Delta(O\times O',C),x\in O,x'\in O'\}.\]
  \[Lf(x,x'):=\left[\liminf_{\stackrel{y\to x,z\to x'}{r\to 0^+} }\frac{f(y+rz)-f(y)}{r},\limsup_{\stackrel{y\to x,z\to x'}{r\to 0^+} }\frac{f(y+rz)-f(y)}{r}\right],\]
  if both endpoints above are real numbers; otherwise define $Lf(x,x')=(-\infty,\infty)=\bot$. We say $Lf(x,x')$ is {\em bounded} if it is a compact (i.e., non-bottom) interval. 
    \end{definition}
In the full version of the paper~\cite{DEG22}, we show that $Lf$ is Scott continuous, extends Clarke's notion of generalised directional derivative on a Banach space (i.e., a complete normed vector space)~\cite{clarke1990optimization} and, like the latter, satisfies a weaker calculus (with equality replaced by $\sqsubseteq$) compared to the classical derivative; this calculus includes a weaker chain rule for composition of two functions~\cite[section 2.2]{DEG22}. 

 We will work with a well-behaved family of so-called locally Lipschitzian maps on Hausdorff topological vector spaces defined as follows. Say $Lf:X\times X\to \realDom$ is {\em bounded} in the open set $O\times O'\subset X\times X$ if there exists a compact interval $C\in \realDom$ such that $ C\sqsubseteq Lf(x,x')$ for all $(x,x')\in O\times O'$.
   We say $f:X\to \realLine$ is {\em locally Lipschitzian} at $x\in X$ if there exists an open neighbourhood $O\subset X$ of $x$ and an open neighbourhood $O'$ of the origin such that the directional derivative $Lf(-,-):X\times X\to \realDom$ is bounded for $(x,x')\in O\times O'$. For a Banach space the two notions of locally Lipschitzian map and locally Lipschitz map coincide~\cite[section 2.2]{DEG22}. If $f$ is locally Lipschitzian, then for each $x\in X$, there exists, as in the construction by Lebourg in~\cite{lebourg1979generic},  a non-empty weak* compact subset, denoted $\partial f(x)$, of the dual space of $X$, such that for all $x'\in X$, we have $Lf(x,x')=\{Ax':A\in \partial f(x)\} $, which is a compact real interval~\cite[Theorem 1(i)]{DEG22}. If $X$ is a Banach space, $\partial f(x)$ coincides with Clarke's subgradient~\cite{clarke1990optimization}.

\temporaryRemoved{
\subsection{Scott topologies on function spaces}
The space of maximal elements of a bounded complete domain, equipped with its relative Scott topology, is Hausdorff~\cite{kamimura1984total} and is in fact a complete separable metrisable space~\cite{lawson1997spaces}. Moreover, the domains encountered in this paper are constructed from $\realDom$ by using Cartesian product and function space construction, and therefore they inherit the operations for addition and scalar multiplication from interval arithmetic operations on $\realDom$; e.g., for $f,g:\realDom\to\realDom$ we have $f+g:\realDom\to\realDom$, with $(f+g)(x)=f(x)+g(x)$, where $f(x)+g(x)$ is the sum of two intervals $f(x)$ and $g(x)$.

As a major example that we work with in this paper, consider the function space $(\realLine\to \realLine)$ of all continuous real-valued functions on the real line. This function space is a real vector space with the usual operations of addition of functions and multiplication by real numbers. It is in one to one correspondence with the subset of functions in the maximal elements of the bounded complete domain $(\realDom \to \realDom)$ consisting of the maximal extensions of continuous real-valued functions. This subset of maximal elements inherits the relative Scott topology from the domain, but does not admit a norm (Proposition~\ref{not-normable}).

The compact-open topology on the function space $(Y\to Z)$, the collection of all continuous functions between topological spaces $Y$ and $Z$, has sub-basic open sets of the form
\[{ U}(C,O):=\{f:Y\to Z: f[C]\subset O\},\]
for any compact subset $C\subset Y$ and open $O\subset Z$~\cite{dugundji1966topology}. There is a simple characterisation of the compact-open topology for the function space $(\realLine \to \realLine)$ or $([0,1] \to \realLine)$. 
\begin{lemma}\label{comp-open}
The compact-open topology on $(\realLine\to\realLine)$ is generated by the sub-basis consisting of subsets of the form $U(\overline{O_1},O_2)$ where $O_1$ and $O_2$ are open intervals with compact closures.
\end{lemma}

Let $\operatorname{Max}(D)$ denote the set of maximal points of a bounded complete domain $D$ and
let $(\_)^\star:(\realLine\to\realLine)\to (\realDom\to \realDom)$ be the maximal extension (envelope) operator given in Proposition~\ref{envelope}, where $(\realLine\to\realLine)$ is equipped with the compact-open topology.
\begin{proposition}\label{top-embed}
 The map $(\_)^\star$ is a topological embedding, i.e. it is injective, continuous and is an open map onto its image in $ \operatorname{Max}(\realDom\to \realDom)$ with respect to the relative Scott topology on $\operatorname{Max}(\realDom\to \realDom)$.
\end{proposition}

The function space $(\realLine\to \realLine)$, equipped with the compact-open topology, is an example of a Hausdorff topological vector space. Recall that a Hausdorff topological vector space is a vector space with a Hausdorff topology with respect to which addition of vectors and scalar multiplication are continuous operations. 

\begin{proposition}\label{not-normable}
 
The function space $(\realLine\to \realLine)$ equipped with the compact-open topology does not admit a norm.
\end{proposition}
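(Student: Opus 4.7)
The plan is to invoke Kolmogorov's normability criterion, which states that a Hausdorff topological vector space is normable if and only if it admits a bounded neighbourhood of the origin, where a set $B$ is \emph{bounded} in the topological vector space sense when every neighbourhood $W$ of $0$ absorbs it, i.e., $B\subset \lambda W$ for some scalar $\lambda>0$. Since $(\realLine\to\realLine)$ with the compact-open topology is already known to be a Hausdorff topological vector space, it suffices to exhibit, for each neighbourhood $V$ of the zero function, a companion neighbourhood $W$ which no scalar multiple $\lambda W$ can contain $V$.

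First I would use Lemma~\ref{comp-open} to reduce to a basic neighbourhood: any neighbourhood $V$ of $0$ contains a set of the form $U(\overline{O_1},O_2)$, where $O_1,O_2$ are open intervals with compact closure and $0\in O_2$. Pick any point $x_0\notin \overline{O_1}$ (for instance, take $\overline{O_1}\subset[-n,n]$ and set $x_0=n+1$). By a standard bump-function construction (or by Tietze's extension theorem), build, for each positive integer $n$, a continuous function $f_n:\realLine\to\realLine$ with $f_n\equiv 0$ on $\overline{O_1}$ and $f_n(x_0)=n$. Each $f_n$ lies in $U(\overline{O_1},O_2)\subset V$ because $f_n[\overline{O_1}]=\{0\}\subset O_2$.

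Next, take the test neighbourhood $W:=U(\{x_0\},(-1,1))$, which is open in the compact-open topology since $\{x_0\}$ is compact. If $V$ were bounded, there would exist $\lambda>0$ with $V\subset \lambda W=U(\{x_0\},(-\lambda,\lambda))$. Applying this to $f_n$ gives $|f_n(x_0)|=n<\lambda$ for every $n$, an immediate contradiction. Hence no neighbourhood of $0$ is bounded, and the Kolmogorov criterion yields non-normability.

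The only subtlety is citing or briefly justifying Kolmogorov's criterion and explicitly producing the witness functions $f_n$; both are routine, so I do not anticipate any real obstacle. If the reader prefers to avoid Kolmogorov, an equivalent phrasing is: a norm would give a single continuous seminorm generating the topology, but each basic neighbourhood $U(\overline{O_1},O_2)$ only controls values on a compact set, so no single such seminorm can dominate the whole system, and the same witness sequence $f_n$ shows that the would-be norm cannot be continuous.
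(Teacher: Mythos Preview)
Your argument is correct and follows essentially the same route as the paper: both show that no neighbourhood of $0$ is bounded (the paper phrases this via Rudin's criterion ``normable $\iff$ locally convex and locally bounded''), by exhibiting functions that vanish on the controlling compact set yet take arbitrarily large values at a chosen point outside it. Two minor tightenings: Kolmogorov's criterion as usually stated asks for a bounded \emph{convex} neighbourhood, though you only use the trivial direction ``normable $\Rightarrow$ bounded neighbourhood of $0$''; and Lemma~\ref{comp-open} gives only a sub-basis, so $V$ a priori contains a finite intersection $\bigcap_{i} U(\overline{O_{1,i}},O_{2,i})$---but taking $\overline{O_1}$ a compact interval containing $\bigcup_i \overline{O_{1,i}}$ and $O_2\subset\bigcap_i O_{2,i}$ recovers the single sub-basic set you use.
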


We make two additional remarks here. The sup norm topology on the function space  $([0,1]\to\realDom)$ coincides with the compact-open topology as is easy to check. However, the compact-open topology on the function space $(\realLine\to_{{b}} \realLine)$, the set of bounded continuous maps, is strictly weaker than the sup norm topology~\cite[p. 284]{dugundji1966topology}. The same is true for the space $C_0(\realLine):=(\realLine\to_0 \realLine)$, the set of continuous maps of type $\realLine\to \realLine$ that vanish at infinity.

Therefore, for computational reasons, we will work with Hausdorff topological vector spaces which are more general than normed vector spaces and give a unifying framework for function spaces as well as the more basic finite dimensional Euclidean spaces we consider in this paper. 

Finally, we have the following result which follows from Proposition~\ref{function-sp-ext}. Consider the function space $(\realLine\to\realLine)\to \realLine$ with its compact-open topology.
\begin{corollary}\label{ext-functionals}
 Any continuous functional $F: (\realLine\to\realLine)\to \realLine$ has a continuous extension $F^\star:(\realDom\to\realDom)\to \realDom$ obtained by first extending $F$ by $F(f^\star)=F(f)$ for any $f\in (\realLine\to \realLine)$, which defines $F$ on a dense subset of $(\realDom\to\realDom)$.
  
\end{corollary}
\subsection{Directional derivative: topological vector spaces} 
 Let $f:X\to \realLine$ be a continuous map on a Banach space $X$. The {\em generalised directional derivative} of $f$ at $x$ in the direction of $v\in X$ is defined as the extended real number:~\cite[p. 25]{clarke1990optimization}~\footnote{For a topological space $Y$ and function $g:Y\to \realLine$, recall that the limit superior $\limsup_{u\to v}g(u)$ is defined as the unique $a\in[-\infty,\infty]$ with the following two properties: (i) for all $\epsilon>0$, there exists an open neighbourhood $O$ of $v$ such that $g(x)\leq a+\epsilon$ for all $x\in O$, (ii) for all $\epsilon>0$ and all neighborhood $O$ of $v$ there exists $x\in O$ with $a-\epsilon<g(x)$. Dually we have the definition of limit inferior $\liminf$.}
\begin{equation}\label{gen-dir-derv}f^\circ(x;v):=\limsup_{y\to x,t\to 0^+}\frac{f(y+tv)-f(y)}{t}.\end{equation}
There is a rich theory of generalised subgradient for a real-valued locally Lipschitz function $f:X\to \realLine$. For $f$ to be locally Lipschitz, it means that for any $x\in X$, there exists an open neighbourhood $O\subset X$ of $x$ and a constant $k>0$ such that $|f(x_1)-f(x_2)|\leq k\|x_1-x_2\|$ for all $x_1,x_2\in O$. In particular, for such functions the generalised directional derivative is always a real number. 

Let $X^*$ be the dual of $X$: the space of all continuous linear functionals $A:X\to \realLine$ equipped with the weak* topology, i.e., the weakest topology on $X^*$ that makes all the linear functionals $\hat {x}:X^*\to \realLine$ with $\hat{x}(A)=A(x)$ continuous for any $x\in X$. The {\em Clarke subgradient} $\partial_c f(x)\subset X^*$ of $f$ at $x\in X$ is defined as
\begin{equation}\label{c-sub}\partial_c f(x)=\{A\in X^*:f^\circ(x;v)\geq A(v), \mbox{ for all }v\in X\},\end{equation}
which is a non-empty convex and weak* compact subset of $X^*$~\cite[2.1.2]{clarke1990optimization}. E.g., when $X=\realLine$ and $f(x)=|x|$, 

\[f^\circ (x;v)=\left\{\begin{array}{cc}
   v  & x>0 \\
  -v   & x<0\\
  |v| &x=0\\
\end{array}\right.\qquad\partial_cf(x)=\left\{\begin{array}{cc}
    \{1\} &  x>0\\
    \{-1\} & x<0\\
   \mbox{[-1,1]} & x=0\\
\end{array}\right.\]

The generalised directional derivative $f^\circ(x,\dot):X\to \realLine$ is the {\em support function} of the convex set $\partial_cf(x)$, i.e., $f^\circ(x;v)=\sup\{A(v):A\in\partial_cf(x)\}$~\cite[p. 28]{clarke1990optimization}. The Clarke subgradient satisfies a weaker calculus compared with the classical gradient; e.g., we have the following subadditivity property: $\partial_c(f+g)(x)\subset \partial_c f(x)+\partial_c g(x)$~\cite[p. 39]{clarke1990optimization}. For example if $X=\realLine$ and $f(x)=|x|$ and $g(x)=-|x|$, then $\partial_c(f+g)(0)=0$, whereas $\partial_cf(0)+\partial_c g(0)=2[-1,1]$.

Since the function spaces we deal with---such as the space $(\realLine\to\realLine)$ of all real-valued continuous functions of a real variable with its compact-open topology---may not admit a norm, we will develop and adopt a domain-theoretic directional derivative for real-valued maps on Hausdorff topological spaces and obtain its calculus and properties that generalise those for Banach spaces. In the case of the so-called locally Lipschitzian maps, as we will see, the domain-theoretic directional derivative is equivalent to that given in~\cite{lebourg1979generic}, which itself is a generalisation of the Clarke construction. From now on in this section, we consider a real-valued continuous map $f:X\to \realLine$ on a Hausdorff topological vector space $X$.

 Recall the following notions~\cite[p. 30]{clarke1990optimization} of derivatives of functions which also hold on topological vector spaces. A map $f:X\to \realLine$ has a (one-sided) {\em directional derivative} at $x$ in the direction of $x'$ if  $\lim_{r\to 0^+}\frac{f(x+rx')-f(x)}{r}$ exists, in which case we write $ f'(x;x')=\lim_{r\to 0^+}\frac{f(x+rx')-f(x)}{r} $. We say $f$ has {\em Gateaux derivative} at $x\in X$ if there exists a continuous linear functional $Df(x,-)\in X^*$ such that for all $x'\in X$, the directional derivative of $f$ at $x$ in the direction $x'$ exists and is given by $Df(x,x')$, i.e., $Df(x,x')=f'(x;x')$.

Given $x'\in X$ we aim to define a domain-theoretic directional derivative of $f$ in the direction of $x'$ at a point $x\in X$, which gives the organising tool for deriving the classical directional derivative of $f$. 
  
  \begin{definition} \label{directionalDerivative}
   %For convex open set $O\times O'\subset X\times X$ and $C\in \realDom$, the {\em directional tie} $\Delta(O\times O',C)\subset (X\to \realLine)$ of $O\times O'$ and $C$ is defined as,
 % $f\in \Delta(O\times O',C)$ if \[\frac{f(x+rx')-f(x)}{r}\in C,\]whenever $x,x+rx'\in O,x'\in O',r>0$.

  The {\em domain-theoretic directional derivative} $Lf:X\times X\to \realDom$  of $f:X\to \realLine$ at $x\in X$ in the direction $x'\in X$ is defined as, \hspace{.5cm}$Lf(x,x'):=$
  %\[Lf(x,x')=\bigcap\{C: f\in \Delta(O\times O',C),x\in O,x'\in O'\}.\]
  \[\left[\liminf_{\stackrel{y\to x,z\to x'}{r\to 0^+} }\frac{f(y+rz)-f(y)}{r},\limsup_{\stackrel{y\to x,z\to x'}{r\to 0^+} }\frac{f(y+rz)-f(y)}{r}\right],\]
  if both endpoints above are real numbers; otherwise define $Lf(x,x')=(-\infty,\infty)$. We say $Lf(x,x')$ is {\em bounded} if it is a compact (i.e., non-bottom) interval. 
    \end{definition}
   
  If $X$ admits a norm (e.g. $X=\realLine$) with respect to which $f$ is Lipschitz, then $Lf(x,x')$ is a non-empty compact interval. In fact, for $r>0$, we then have:
  \[\frac{|f(y+rz)-f(y)|}{r}\leq k\|z\|\]
  where $k\geq 0$ is a Lipschitz constant for $f$, and thus $Lf(x,x')\subseteq \left[-k\|x'\|,k\|x'\|\right]$.
 
 \begin{lemma}\label{scott-cont}
$Lf:X\times X\to \realDom$ is Scott continuous. 
 \end{lemma}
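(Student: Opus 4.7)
The plan is to unpack Scott continuity of $Lf$ directly at the level of basic Scott open sets of $\realDom$, and then reduce openness of each preimage to standard semicontinuity properties of limsup and liminf taken over the neighborhood filter of $(x,x',0^+)$ in $X\times X\times (0,\infty)$.

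Recall that $\realDom$ has basic Scott open sets of the form $\dua[a,b]=\{J\in\realDom:J\subseteq(a,b)\}$ for $a<b$ in $\realLine$. Writing $\overline{L}(x,x')$ and $\underline{L}(x,x')$ for the limsup and liminf appearing in Definition~\ref{directionalDerivative} (viewed as $[-\infty,+\infty]$-valued functions on $X\times X$), I would first verify the identity
\[Lf^{-1}(\dua[a,b])=\{(x,x'):\underline{L}(x,x')>a\}\cap\{(x,x'):\overline{L}(x,x')<b\}.\]
The key observation making this hold on the nose is that if both strict inequalities are satisfied then $\underline{L}(x,x')$ and $\overline{L}(x,x')$ are automatically finite, so $Lf(x,x')=[\underline{L}(x,x'),\overline{L}(x,x')]\subseteq (a,b)$ and the ``otherwise'' clause returning $\bot=(-\infty,\infty)$ in the definition of $Lf$ cannot interfere.

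It then suffices to prove that $\overline{L}$ is upper semicontinuous and $\underline{L}$ is lower semicontinuous on $X\times X$. I would argue the first directly from the definitions: if $\overline{L}(x,x')<b$, then because $\overline{L}(x,x')$ is an infimum over all triples $(U,V,\delta)$ with $U$ an open neighborhood of $x$, $V$ an open neighborhood of $x'$, and $\delta>0$, some such triple already achieves
\[\sup_{y\in U,\ z\in V,\ 0<r<\delta}\frac{f(y+rz)-f(y)}{r}<b.\]
For any $(x_0,x_0')\in U\times V$, the same $U,V$ are still open neighborhoods of $x_0$ and $x_0'$, so the infimum defining $\overline{L}(x_0,x_0')$ is bounded above by the same supremum, hence also strictly less than $b$. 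Thus $U\times V$ is an open neighborhood of $(x,x')$ contained in $\{\overline{L}<b\}$. The argument for $\underline{L}$ is symmetric, replacing ``sup'' by ``inf'' and ``$<b$'' by ``$>a$''. Intersecting the two resulting open sets yields openness of $Lf^{-1}(\dua[a,b])$.

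The main care required is to set up the limsup/liminf filter-theoretically rather than sequentially, since the Hausdorff topological vector spaces $X$ of real interest here---in particular $(\realLine\to\realLine)$ in its compact-open topology, and more generally the maximal elements of function-space domains---need not be first countable. The definition in the paper is already phrased in terms of the neighborhood filter, so once one reads it that way the semicontinuity arguments above are essentially one line each; there is no additional obstacle beyond being careful that the infimum/supremum inside the definitions of $\overline{L}$ and $\underline{L}$ is attained in the limit along the neighborhood filter, which is simply a restatement of the definition of infimum.
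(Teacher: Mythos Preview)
Your proposal is correct and follows essentially the same approach as the paper's own proof. The paper argues pointwise: given $(x,x')$ with $Lf(x,x')$ compact and $\epsilon>0$, it extracts open sets $O\times O'$ and $U$ from the definitions of $\limsup$ and $\liminf$ on which the difference quotient lies in $(Lf(x,x')^--\epsilon,\,Lf(x,x')^++\epsilon)$, and notes this shows continuity at $(x,x')$ (the case $Lf(x,x')=\bot$ being trivial). Your version repackages exactly this observation as upper semicontinuity of $\overline{L}$ and lower semicontinuity of $\underline{L}$, then reads off openness of the preimage of a basic Scott open set; the only difference is that you make explicit the step that the same $U\times V$ serves as a witnessing neighborhood for every $(x_0,x_0')\in U\times V$, which the paper leaves implicit in its phrase ``this shows that $Lf$ is continuous at $(x,x')$''.
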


 % \begin{corollary}\label{limsup-liminf} If $Lf(x,x')$ is compact for some $(x,x')\in X\times X$, then
  %\[(Lf(x,x'))^+=\limsup_{\stackrel{y\to x,z\to x'}{r\to 0^+}}\frac{f(y+rz)-f(y)}{r}\qquad
 % (Lf(x,x'))^-=\liminf_{\stackrel{y\to x,z\to x'}{r\to 0^+}}\frac{f(y+rz)-f(y)}{r}\]
  %\end{corollary}
  \begin{proposition}\label{dir-point}
    If $Lf(x,x')$ is a point for some $x\in X$ and some $x'\in X$, then $f$ has a directional derivative at $x$ and $x'$ given by  $f'(x;x')=Lf(x,x')$.
  \end{proposition}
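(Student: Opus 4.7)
The plan is to exploit the fact that the one-variable limit defining $f'(x;x')$ is a specialisation of the joint three-variable limit defining $Lf(x,x')$, obtained by pinning $y=x$ and $z=x'$ instead of letting them vary. If the joint liminf and limsup already collapse to a common value $a\in\realLine$, then this specialisation can only push the restricted liminf upward and the restricted limsup downward, forcing both to equal $a$ and hence forcing the one-sided directional derivative to exist and equal $a$.

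Concretely, assume $Lf(x,x')=[a,a]$ for some $a\in\realLine$. Applying characterisation~(i) of the limsup from the footnote to the joint limsup at $(x,x',0^+)$, for every $\epsilon>0$ there exist open neighbourhoods $U\ni x$, $V\ni x'$ and a $\delta>0$ such that $\frac{f(y+rz)-f(y)}{r}\le a+\epsilon$ whenever $y\in U$, $z\in V$, and $0<r<\delta$. Specialising to $y=x$ and $z=x'$, which trivially lie in $U$ and $V$, gives $\frac{f(x+rx')-f(x)}{r}\le a+\epsilon$ for $0<r<\delta$; letting $\epsilon\to 0$ yields $\limsup_{r\to 0^+}\frac{f(x+rx')-f(x)}{r}\le a$.

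The symmetric argument using the dual characterisation~(ii) of liminf gives $\liminf_{r\to 0^+}\frac{f(x+rx')-f(x)}{r}\ge a$. Combined with the trivial inequality $\liminf\le\limsup$, both one-variable extrema equal $a$, so $\lim_{r\to 0^+}\frac{f(x+rx')-f(x)}{r}$ exists and equals $a=Lf(x,x')$, which is exactly $f'(x;x')$. There is no substantive obstacle in the argument: the only care needed is in tracking the correct direction of inequality when specialising---the joint limsup upper-bounds the restricted limsup while the joint liminf lower-bounds the restricted liminf---after which the conclusion is immediate, requiring no additional continuity or Lipschitz hypothesis on $f$ beyond what is already built into the definition of $Lf$.
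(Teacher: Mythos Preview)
Your argument is correct and is essentially the same as the paper's, only with the details spelled out: the paper's proof is the single line ``from $(Lf(x,x'))^+=(Lf(x,x'))^-$ we obtain that $\lim_{r\to 0^+}\frac{f(x+rx')-f(x)}{r}$ exists and equals this common value,'' which is exactly the specialisation-of-limits inequality you unpack explicitly.
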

 
 Using the basic rules for limit superiors and limit inferiors of sequences, one can easily deduce the following homogenous, sublinear, subadditive and submultiplicative  calculus rules for all continuous $f,g:X\to \realLine$ and $x,x'y,z\in X$ with $t\in \realLine$:\[\begin{array}{rlll}
   Lf(x,tx')&=&tLf(x,x')\qquad &  \\[1ex]
   Lf(x,y)+Lf(x,z)&\sqsubseteq& Lf(x,y+z) & \\[1ex]
Lf(x,x')+Lg(x,x')&\sqsubseteq& L(f+g)(x,x')&\\[1ex]
g(x)Lf(x,x')+f(x)Lg(x,x')&\sqsubseteq &L(f\cdot g)(x,x')&
 \end{array}\]

 Finally, we have the following chain rule which is weaker than the classical chain rule and generalises the corresponding rule when $X$ is a Banach space~\cite[2.3.9]{clarke1990optimization}.
 \begin{proposition}\label{chain} {\em (The Chain rule.)}
For continuous functions $h:X\to \realLine^n$ and $g:\realLine^n\to \realLine$, with $f=g\circ h:X\to \realLine$, we have: \begin{equation}\label{The chain-rule}(Lg)^\star(h(x),Lh(x,x'))\sqsubseteq Lf(x,x'),\end{equation}  where $(x,x')\in X\times X$ and $(Lg)^\star:\realLine^n\times \realDom^n\to \realDom$ is the maximal extension (envelope) of $Lg$ in its second component, while $Lh(x,x')=\bigtimes_{i=1}^nLh_i(x,x')$.
 \end{proposition}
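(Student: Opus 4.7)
The plan is to establish the reverse-inclusion
\[
Lf(x,x')\;\subseteq\;(Lg)^\star(h(x),Lh(x,x')),
\]
which is exactly the required $\sqsubseteq$ in $\realDom$. If some component of $Lh(x,x')$ equals $\bot$, the right-hand side collapses to $\bot$ and there is nothing to prove, so I assume throughout that $Lh(x,x')$ is a bounded compact interval vector.

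First, I unfold the envelope. Combining Proposition~\ref{envelope} with the density of $\realLine^n\times\realLine^n$ in $\realLine^n\times\realDom^n$ and the fact that the basic Scott open sets of $\realDom^n$ at $Lh(x,x')$ have the form $\dua K$ for compact $K$ with $Lh(x,x')\subset K^\circ$, I obtain
\[
(Lg)^\star(h(x),Lh(x,x'))=\bigsqcup\bigl\{\bigsqcap Lg[U\times K^\circ]:\ h(x)\in U\text{ open in }\realLine^n,\ Lh(x,x')\subset K^\circ\bigr\},
\]
a directed supremum of nested compact intervals. It therefore suffices to fix such a pair $(U,K)$ and prove $Lf(x,x')\subseteq\bigsqcap Lg[U\times K^\circ]$.

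Second, I realise an endpoint $L$ of $Lf(x,x')$ as a genuine limit: choose sequences $y_k\to x$, $z_k\to x'$, $r_k\to 0^+$ with $(f(y_k+r_kz_k)-f(y_k))/r_k\to L$, and set $u_k:=h(y_k)$ and $u'_k:=(h(y_k+r_kz_k)-h(y_k))/r_k$. Continuity of $h$ gives $u_k\to h(x)$, so $u_k\in U$ eventually. I then pick an intermediate compact $K_1$ with $Lh(x,x')\subset K_1^\circ\subset K_1\subset K^\circ$; the componentwise $\liminf$/$\limsup$ definition of $Lh$ forces $u'_k\in K_1^\circ$ eventually, and compactness of $K_1$ yields a subsequence along which $u'_k\to u'$ with $u'\in K_1\subset K^\circ$. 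Along this subsequence $(u_k,u'_k,r_k)\to(h(x),u',0^+)$ and $(g(u_k+r_ku'_k)-g(u_k))/r_k\to L$, so the very definition of $Lg(h(x),u')$ gives $L\in Lg(h(x),u')$. Since $(h(x),u')\in U\times K^\circ$, this yields $L\in\bigsqcap Lg[U\times K^\circ]$. Running the argument at the other endpoint and using convexity of $\realDom$-intervals gives $Lf(x,x')\subseteq\bigsqcap Lg[U\times K^\circ]$, and the directed sup over $(U,K)$ yields the claim. The case $Lf(x,x')=\bot$ is handled analogously by feeding divergent accumulation values into the same construction, forcing some $Lg(h(x),u')=\bot$ and hence $(Lg)^\star(h(x),Lh(x,x'))=\bot$.

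The step I expect to be most delicate is placing the subsequential limit $u'$ strictly inside $K^\circ$ rather than merely in the closure $K$: without this, a limit on the boundary would not be covered by the family $\{Lg(v,w):v\in U,\,w\in K^\circ\}$ and the essential inclusion $Lg(h(x),u')\subseteq\bigsqcap Lg[U\times K^\circ]$ would fail. Interposing the compact $K_1$ strictly between $Lh(x,x')$ and $K^\circ$ is the device that sidesteps this, and it is enabled by the fact that $Lh(x,x')$ itself is assumed to lie in $K^\circ$ rather than merely in $K$.
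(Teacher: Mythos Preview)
Your strategy is genuinely different from the paper's. The paper argues ``forward'': it fixes a compact $C\ll (Lg)^\star(h(x),Lh(x,x'))$, uses the Scott continuity of $Lg$ together with compactness of $Lh(x,x')$ to produce open sets $O,O'$ with $(g(z+rz')-g(z))/r\in C$ for $(z,z')\in O\times O'$, then invokes Scott continuity of $Lh$ to trap the $h$-quotients in $O'$ and hence the $f$-quotients in $C$. You argue ``backward'': you realise an endpoint of $Lf(x,x')$ by a specific sequence, push it through $h$, extract a limiting direction $u'$ by compactness, and land inside $Lg(h(x),u')$. Your interposition of $K_1$ to keep $u'$ inside $K^\circ$ is exactly the right device. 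The paper's route never needs to name a limit direction; yours produces one explicitly, which is arguably more informative.

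Two repairs are needed, however. First, your envelope formula is not what the statement means. The proposition asks for the envelope \emph{in the second component only}, so
\[
(Lg)^\star(h(x),Lh(x,x'))=\bigsqcup\Bigl\{\bigsqcap\{Lg(h(x),w):w\in K^\circ\}:\ Lh(x,x')\subset K^\circ\Bigr\},
\]
with no neighbourhood $U$ of $h(x)$ in the first slot. Your displayed formula, which also varies $U$, is in general only $\sqsubseteq$ the correct one, so proving $Lf(x,x')\subseteq\bigsqcap Lg[U\times K^\circ]$ for all $(U,K)$ would be too weak. Fortunately your argument actually establishes the stronger fact $L\in Lg(h(x),u')$ with $u'\in K^\circ$, hence $L\in\bigsqcap\{Lg(h(x),w):w\in K^\circ\}$, which is exactly what the correct formula requires; simply drop $U$ from the statement and the proof goes through. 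Second, $X$ is an arbitrary Hausdorff topological vector space, not assumed first-countable, so the $\limsup$ defining $Lf(x,x')^+$ need not be attained along a \emph{sequence}. Replace your sequences $(y_k,z_k,r_k)$ by nets; the compactness of $K_1\subset\realLine^n$ still yields a convergent subnet for $u'_k$, and the rest of the argument is unchanged. The paper's proof avoids this issue by working directly with neighbourhoods rather than extracting limits.
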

 
To have a counterpart for Clarke's subgradient, we would require more conditions. As in~\cite{lebourg1979generic}, we define a notion of local  Lipschitzian function that coincides with the usual notion of a locally Lipschitz map when $X$ admits a norm.  We say $Lf:X\times X\to \realDom$ is {\em bounded} in the open set $O\times O'\subset X\times X$ if there exists a compact interval $C\in \realDom$ such that $ C\sqsubseteq Lf(x,x')$ for all $(x,x')\in O\times O'$.
  \begin{definition}
   We say $f:X\to \realLine$ is {\em locally Lipschitzian} at $x\in X$ if there exists an open neighbourhood $O\subset X$ of $x$ and an open neighbourhood $O'$ of the origin such that the generalised directional derivative $Lf(-,-):X\times X\to \realDom$ is bounded for $(x,x')\in O\times O'$. 
  \end{definition}
  The simplest case is when $X=\realLine$ and we have:
  \begin{proposition}\label{1dim}
   If $f:\realLine\to \realLine$ and $Lf(x,x')$ is bounded at a point $(x,x')\in \realLine^2$, then $f$ is locally  Lipschitzian and locally Lipschitz at $x$.
  \end{proposition}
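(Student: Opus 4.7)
The plan is to exploit the homogeneity rule $Lf(x, t x') = t \cdot Lf(x, x')$ listed among the calculus rules for the domain-theoretic derivative: setting $t = 0$, and using $0 \cdot [a,b] = \{0\}$ in interval arithmetic, we obtain $Lf(x, 0) = \{0\}$, so the hypothesis can be replaced by a bound at the distinguished direction $0$. Unfolding the definitions of $\liminf$ and $\limsup$, this means that for any $\epsilon > 0$ there exist an open neighbourhood $U$ of $x$, an open interval $V = (-\eta, \eta)$ around $0$, and $\delta > 0$ such that
\[\left| \frac{f(y + rz) - f(y)}{r} \right| \leq \epsilon \quad \text{for all } (y, z, r) \in U \times V \times (0, \delta).\]

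To prove classical local Lipschitzness at $x$, set $O := U \cap (x - \eta \delta/2,\ x + \eta \delta/2)$ and consider any distinct $x_1, x_2 \in O$. Since $|x_2 - x_1| < \eta \delta$, the open interval $(|x_2 - x_1|/\eta,\ \delta)$ is non-empty; pick any $r$ in it and set $z := (x_2 - x_1)/r$. Then $y := x_1 \in U$, $|z| < \eta$ so $z \in V$, and $y + rz = x_2$, whence the uniform bound yields $|f(x_2) - f(x_1)| \leq \epsilon\, r$. Taking the infimum over the admissible range of $r$ gives $|f(x_2) - f(x_1)| \leq (\epsilon/\eta)\, |x_2 - x_1|$, establishing local Lipschitzness on $O$ with constant $\epsilon/\eta$.

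For the locally Lipschitzian property, apply the Scott continuity of $Lf$ (Lemma~\ref{scott-cont}) to the point value $Lf(x, 0) = \{0\}$: this singleton lies in the Scott-open set $\dua[-1, 1] = \{J \in \realDom : J \subseteq (-1, 1)\}$, so the preimage $Lf^{-1}(\dua[-1,1])$ is an open neighbourhood of $(x, 0)$ in $\realLine \times \realLine$ and hence contains a product $O_1 \times O_2$ with $x \in O_1$ and $0 \in O_2$. On this product, $Lf(x_0, x_0') \subseteq [-1, 1]$, i.e. $[-1,1] \sqsubseteq Lf(x_0, x_0')$, which is precisely the required boundedness. The main obstacle is passing from a single pointwise bound at $(x, x')$ to a genuine Lipschitz estimate; the reduction to $x' = 0$ via homogeneity is the key manoeuvre, since otherwise one faces a case split between $x' \neq 0$ (where the choice $r = (x_2 - x_1)/x'$, $z = x'$ is natural) and $x' = 0$ (which forces the infimum-over-$r$ trick), whereas both cases are subsumed in the cleaner argument above.
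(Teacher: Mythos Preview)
Your proof is correct and follows essentially the same approach as the paper: use the homogeneity rule for $Lf$ together with Scott continuity of $Lf$ to upgrade the pointwise bound at $(x,x')$ to a uniform bound on a product neighbourhood of $(x,0)$. Your reduction via $t=0$ to the single point $(x,0)$ is slightly cleaner than the paper's version, which instead bounds $Lf(x,tx')\subset I\sqcap(-I)$ for all $t\in[-1,1]$ before invoking Scott continuity, but the idea is the same.

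Where your proof genuinely adds content is the local Lipschitz conclusion. The paper simply asserts ``locally Lipschitz with Lipschitz constant $k$'' without justification (the natural route via Corollary~\ref{equi-loc-lip} and the mean value theorem would be a forward reference at this point in the text). Your direct $\epsilon$--$\delta$ argument---choosing $r$ in the nonempty interval $(|x_2-x_1|/\eta,\delta)$ so that $z=(x_2-x_1)/r$ lands in $V$, then passing to the infimum over $r$---is a genuine and self-contained proof of the Lipschitz estimate on $O$, and is a useful addition.
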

  More generally, as we will see in Corollary~\ref{equi-loc-lip}, if $X$ is normed any locally Lipschitzian map is locally Lipschitz. 
  
 We use the notation in Equation~(\ref{gen-dir-derv}), defined for a continuous real-valued function on a Banach space, for continuous maps on a topological vector space. From now, assume $f:X\to \realLine$ is locally Lipschitzian. 
  
  \begin{proposition}\label{lebourg} \cite[1.5]{lebourg1979generic} If $f:X\to \realLine$ is locally Lipschitzian, then for each $x\in X$ the set \[\partial f(x):=\{A\in X^*:\,\forall x'\in X.\, A(x')\leq f^\circ(x;x')\}\]  is a non-empty, convex and weak* compact subset of $X^*$. The map $\partial f:X\to {\bf C}(X^*)$, where ${\bf C}(X^*)$ is the set of non-empty weak* compact, convex subsets of $X^*$ ordered by reverse inclusion,  is upper semi-continuous; any $x\in X$ has a neighbourhood sent by $\partial f$ to a weak* compact set.
\end{proposition}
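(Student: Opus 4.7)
The proof strategy is the standard Hahn--Banach route, with two ingredients specialised from our domain-theoretic setup: (i) $f^\circ(x;\cdot):X\to\realLine$, defined by Equation~(\ref{gen-dir-derv}), is a sublinear functional of its second argument, and (ii) by local Lipschitzianity there is a constant $M$ and neighbourhoods $O\ni x$, $O'\ni 0$ (which we may assume balanced) with $f^\circ(y;z)\leq M$ for all $(y,z)\in O\times O'$. Positive homogeneity of $f^\circ(x;\cdot)$ follows from the calculus rule $Lf(x,tx')=tLf(x,x')$ applied to the upper endpoint. Subadditivity $f^\circ(x;y+z)\leq f^\circ(x;y)+f^\circ(x;z)$ I would verify by splitting
\[
\frac{f(u+t(y+z))-f(u)}{t}=\frac{f((u+ty)+tz)-f(u+ty)}{t}+\frac{f(u+ty)-f(u)}{t},
\]
taking $\limsup$ as $u\to x$, $t\to 0^+$, and using the substitution $v=u+ty\to x$ in the first term. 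The identities $f^\circ(x;0)=0$ and $-f^\circ(x;x')\leq f^\circ(x;-x')$ drop out from these.

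For non-emptiness I would define $A_0(tx_0'):=tf^\circ(x;x_0')$ on the line $\realLine x_0'$, for any fixed $x_0'\in X$; sublinearity forces $A_0\leq f^\circ(x;\cdot)$ on this subspace. Hahn--Banach yields a linear extension $A:X\to\realLine$ with $A\leq f^\circ(x;\cdot)$ everywhere, and $A\in X^*$ because a linear functional on a topological vector space bounded above by $M$ on a neighbourhood of $0$ is automatically continuous. Convexity of $\partial f(x)$ is immediate since it is the intersection over $x'\in X$ of the half-spaces $H_{x'}:=\{A:A(x')\leq f^\circ(x;x')\}$, each of which is weak* closed because the evaluation map $A\mapsto A(x')$ is weak* continuous by definition. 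Hence $\partial f(x)$ is weak* closed; moreover, for any $A\in\partial f(x)$ and $z\in O'$ the bounds $A(z)\leq M$ and $-A(z)=A(-z)\leq M$ yield $|A(z)|\leq M$ uniformly (equicontinuity at $0$), so by Banach--Alaoglu $\partial f(x)$ is contained in a weak* compact set, and therefore is weak* compact.

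For the upper semicontinuity clause, the uniform bound on $O\times O'$ delivers the same equicontinuity estimate for every $A\in\bigcup_{y\in O}\partial f(y)$, so by Banach--Alaoglu this union is contained in a weak* compact set, which is exactly the local-compactness statement. Upper semicontinuity of $\partial f$ in the reverse-inclusion order---for every weak* open $U\supseteq\partial f(x)$ there is a neighbourhood $V$ of $x$ with $\partial f[V]\subseteq U$---then follows by a net argument: if failure occurs, pick $A_\alpha\in\partial f(y_\alpha)\setminus U$ with $y_\alpha\to x$; by the local compactness just obtained, pass to a weak*-convergent subnet $A_\alpha\to A\notin U$; but upper semicontinuity of $f^\circ(\cdot;x')$ in its first variable, which is immediate from the $\limsup$ definition and is also implied by Scott continuity of $Lf$ (Lemma~\ref{scott-cont}), forces $A(x')\leq f^\circ(x;x')$ for every $x'$, giving the contradiction $A\in\partial f(x)\subseteq U$. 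The main obstacle is carrying out these steps outside the Banach setting: the norm-based boundedness arguments of Lebourg must be replaced by the equicontinuity formulation of Banach--Alaoglu, and the requisite upper semicontinuity of $f^\circ$ has to be extracted from the Scott continuity of $Lf$ rather than from a Lipschitz modulus.
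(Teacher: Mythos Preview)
The paper does not give its own proof of this proposition: it is quoted directly from Lebourg~\cite[1.5]{lebourg1979generic} and used as a black box. Your argument is correct and is essentially the standard Hahn--Banach/Alaoglu--Bourbaki route that Lebourg himself uses; the only cosmetic point is that in the general topological-vector-space setting the compactness step is the Alaoglu--Bourbaki theorem (polars of $0$-neighbourhoods are weak* compact) rather than Banach--Alaoglu proper, but your equicontinuity formulation is exactly the right hypothesis for it.
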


The non-empty, convex and weak* compact set $\partial f(x)$ is called the {\em generalised subgradient} of $f$ at $x$. By its weak* compactness, for any $x'\in X$ the set $\{A(x'):A \in \partial f(x)\}\subset \realLine$ will be a compact interval, a key requirement for our denotational semantics later on. It generalises Clarke's subgradient of a real-valued locally Lipschitz map on a Banach space as defined in Equation~(\ref{c-sub}) to real-valued locally Lipschitzian functions on topological vector spaces. Note that if $X=\realLine$ and $f:\realLine\to\realLine$, then $Lf(x,-)=\partial f(x)$.
\begin{theorem}\label{dir-thm} Suppose $f:X\to \realLine$ is locally Lipschitzian. 
\begin{enumerate}[label=(\roman*)]
\item $Lf(x,x')=\{A(x'):A\in \partial f(x)\}$.
\item If $Lf(x,x')$ is a point for some $x\in X$ and all $x'\in X$, then $f$ has a Gateaux derivative at $x$ given by  $Df(x,-)=Lf(x,-):X\to \realLine$.
\item If $Lf(x,x')$ is a point for $x$ in an open set $O\subseteq X$ and all $x'\in X$, then the map $Df(-,-):O\to X^*$ with $x\mapsto Df(x,-)$ is continuous with respect to the weak* topology on $X^*$.
\item If $X$ is a finite dimensional Euclidean space and $Lf(x,x')$ is a point for $x$ in an open set $O\subseteq X$ and all $x'\in X$, then $f$ is differentiable in $O$ with $f'(x)=Df(x,-)$ for $x\in O$.
\end{enumerate}
\end{theorem}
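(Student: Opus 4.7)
The plan is to prove the four parts in sequence, using Lebourg's construction from Proposition~\ref{lebourg}, the Scott continuity of $Lf$ from Lemma~\ref{scott-cont}, and Proposition~\ref{dir-point}.

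For (i), I would identify the two endpoints of $Lf(x,x')$ with $f^\circ(x;x')$ and $-f^\circ(x;-x')$, and then apply the Hahn--Banach duality for support functions. First, under the local Lipschitzian hypothesis one may freely let $z$ vary near $x'$ in the definition of $Lf$: the difference quotient $\frac{f(y+rz)-f(y+rx')}{r}$ can be uniformly controlled on a product of neighbourhoods using the boundedness of $Lf$, so
\[
\limsup_{y\to x,\,z\to x',\,r\to 0^+}\frac{f(y+rz)-f(y)}{r}\;=\;\limsup_{y\to x,\,r\to 0^+}\frac{f(y+rx')-f(y)}{r}\;=\;f^\circ(x;x').
\]
A symmetric calculation, performed via the substitution $y'=y+rz$ together with the continuity of $f$, identifies the liminf with $-f^\circ(x;-x')$. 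Next, $f^\circ(x;\cdot)$ is a sublinear continuous functional on $X$, so by the Hahn--Banach theorem (which is the content of Proposition~\ref{lebourg}), $f^\circ(x;x')=\sup\{A(x'):A\in\partial f(x)\}$ and $-f^\circ(x;-x')=\inf\{A(x'):A\in\partial f(x)\}$. Finally, since $\partial f(x)$ is weak$^*$ compact and convex, its continuous linear image $\{A(x'):A\in\partial f(x)\}\subset\realLine$ is a compact convex set, hence a closed interval with these exact endpoints; this yields $Lf(x,x')=\{A(x'):A\in\partial f(x)\}$.

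For (ii), applying Proposition~\ref{dir-point} for every $x'\in X$ makes $x'\mapsto Lf(x,x')$ the one-sided directional derivative of $f$ at $x$. By (i), $Lf(x,x')$ being a singleton for every $x'$ forces all elements of $\partial f(x)$ to agree on every $x'$, so $\partial f(x)=\{A_0\}$ is a singleton and $Lf(x,x')=A_0(x')$. Since $Lf(x,-x')=-Lf(x,x')$, the one-sided directional derivative extends to the two-sided Gateaux derivative $Df(x,-)=A_0\in X^*$. For (iii), the weak$^*$ topology on $X^*$ is the topology of pointwise evaluation, so weak$^*$ continuity of $x\mapsto Df(x,-)$ on $O$ amounts to continuity of the real-valued map $x\mapsto Df(x,x')=Lf(x,x')$ for each fixed $x'\in X$. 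By Lemma~\ref{scott-cont}, $Lf(\cdot,x'):X\to\realDom$ is Scott continuous, and by hypothesis it takes maximal (hence real) values on $O$. Because the relative Scott topology on the maximal elements of $\realDom$ coincides with the Euclidean topology on $\realLine$, Scott continuity delivers ordinary continuity of $x\mapsto Df(x,x')$ on $O$ for every $x'$.

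For (iv), in the finite-dimensional Euclidean space $X=\realLine^n$ the weak$^*$ topology on $X^*$ coincides with the norm topology, so (iii) gives that $Df:O\to X^*$ is norm continuous. The classical theorem that a Gateaux differentiable function on an open subset of $\realLine^n$ whose Gateaux derivative is norm continuous is in fact $C^1$ then yields Frech\'et differentiability with $f'(x)=Df(x,-)$ for $x\in O$.

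The main obstacle is part (i), specifically the identification of the two endpoints of $Lf(x,x')$ with $\pm f^\circ(x;\mp x')$ in the absence of a norm: without ordinary local Lipschitz constants, one must carefully leverage the boundedness of $Lf$ on a product of neighbourhoods, supplied by the locally Lipschitzian hypothesis, to justify eliminating the limit in $z$. Once this reduction is complete, the Hahn--Banach duality step and the interval identification are essentially standard consequences of Lebourg's framework.
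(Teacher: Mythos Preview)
Your proposal is correct and follows essentially the same route as the paper. In particular, the identification of the endpoints of $Lf(x,x')$ with $f^\circ(x;x')$ and $-f^\circ(x;-x')$ that you carry out in part (i) is precisely the content of the paper's Proposition~\ref{loc.lip} (proved in the appendix via Lebourg's equi-continuity property~\cite[1.2]{lebourg1979generic}, rather than directly from boundedness of $Lf$ as you suggest), after which the support-function identity from Proposition~\ref{lebourg} finishes the argument; your treatments of (ii)--(iv) match the paper's almost line for line, the only cosmetic difference being that the paper cites~\cite[Corollary, p.~33]{clarke1990optimization} for (iv) where you invoke the equivalent classical $C^1$ criterion.
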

Item (i), says that for a locally Lipschitzian map $f$, the domain-theoretic directional derivative $Lf$ factors out with respect to its two components as for a classical differentiable map, and that $f^\circ(x;-):X\to \realLine$ is the support function of $\partial f(x)$ as in the case when $X$ is a Banach space. 

We finally state the mean-value theorem below from~\cite{lebourg1979generic}. For $a,b\in X$, the closed and open line segments are defined respectively as $[a,b]:=\{(1-t)a+tb:0\leq t\leq 1\}$ and $(a,b):=\{(1-t)a+tb:0<t<1\}$.
\begin{theorem}(Mean value theorem)~\cite[Theorem 1.7]{lebourg1979generic}\label{meanValueTheorem}
Suppose $f:X\to \realLine$ is a locally Lipschitzian function defined on an open set $O\subset X$ which contains the line segment $[a,b]$. Then, there exist $t\in (a,b)$ and $M\in \partial f(t)$ such that $f(a)-f(b)=M(a-b)$.
\end{theorem}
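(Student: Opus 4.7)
The plan is to reduce the theorem to the classical one-dimensional case via an auxiliary function on $[0,1]$, in the spirit of Lebourg's original proof. Define $\phi:[0,1]\to \realLine$ by
\[\phi(s) = f(a + s(b-a)) + s\bigl(f(a) - f(b)\bigr).\]
Then $\phi$ is continuous with $\phi(0) = f(a) = \phi(1)$, so by compactness of $[0,1]$ it attains a global extremum (maximum or minimum) at some interior point $s_0 \in (0,1)$, just as in the proof of Rolle's theorem. The target point will be $t := a + s_0(b-a) \in (a,b)$.

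First I would verify that $\phi$ is locally Lipschitzian on $(0,1)$ in the domain-theoretic sense of this section. Writing $\gamma(s) = a + s(b-a)$, the line segment $\gamma[[0,1]] = [a,b]$ is contained in $O$; covering $[a,b]$ by a finite family of neighbourhoods on which $Lf$ is bounded and combining with the trivial boundedness of the affine term $s(f(a)-f(b))$, I apply Proposition~\ref{chain} (chain rule) and the subadditivity of $L$ under sums to conclude that $L\phi(s,s')$ is bounded for $(s,s')$ in an open neighbourhood of each point of $(0,1)$. In particular, by Proposition~\ref{1dim}, $\phi$ is locally Lipschitz on $(0,1)$ and $\partial\phi(s_0)$ is a non-empty compact interval by Proposition~\ref{lebourg}.

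Second, I would show that $0\in\partial\phi(s_0)$. Since $s_0$ is a local extremum of $\phi$, the classical one-sided difference quotients satisfy $\liminf_{r\to 0^+}(\phi(s_0+r)-\phi(s_0))/r \leq 0 \leq \limsup$ and likewise from the left, so by Definition~\ref{directionalDerivative} both $L\phi(s_0,1)$ and $L\phi(s_0,-1)=-L\phi(s_0,1)$ contain $0$; hence $0\in L\phi(s_0,1)$, and by Theorem~\ref{dir-thm}(i) this gives $0\in\partial\phi(s_0)$.

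Finally, I would invoke the sum and chain rules for the generalised subgradient to pass from $0\in\partial\phi(s_0)$ to the desired statement. Since $s\mapsto s(f(a)-f(b))$ is linear with derivative $f(a)-f(b)$ and $\gamma$ is affine with derivative $b-a$, the calculus of $\partial$ yields an element $A\in\partial f(t)$ with $A(b-a) + (f(a) - f(b)) = 0$, i.e. $f(a)-f(b)=A(a-b)$, so $M:=A$ works. The main obstacle lies in this last step: Proposition~\ref{chain} is stated as a containment in one direction only (subadditive/chain inequality for $L$), so I would need the matching reverse containment in the special situation where the outer map is an exact linear functional on $\realLine$. This holds because an exact linear composition is handled tightly by the definition of $L$ in the direction of the unit vector, which is precisely what lets Lebourg's argument go through in the present topological vector space setting; alternatively one can cite Lebourg's~\cite[Theorem 1.7]{lebourg1979generic} verbatim, since his hypotheses coincide with ours once $Lf$ is expressed via $\partial f$ by Theorem~\ref{dir-thm}(i).
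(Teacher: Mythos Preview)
The paper does not supply its own proof of this theorem: it is quoted verbatim from Lebourg~\cite[Theorem~1.7]{lebourg1979generic} and used as a black box (e.g.\ in the proof of Corollary~\ref{equi-loc-lip} and Proposition~\ref{localConsistencyDerivative2}). Your sketch is essentially Lebourg's original argument, so in spirit you are doing exactly what the cited source does.

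There is one concrete slip worth noting. You invoke Proposition~\ref{chain} to bound $L\phi$, but that proposition is stated for compositions $g\circ h$ with $h:X\to\realLine^n$ and $g:\realLine^n\to\realLine$; here you have the opposite pattern, $\gamma:\realLine\to X$ as the inner map and $f:X\to\realLine$ as the outer one, so Proposition~\ref{chain} does not apply as stated. The boundedness of $L\phi(s,s')$ can instead be read off directly from Definition~\ref{directionalDerivative}: since $\gamma$ is affine, the difference quotient for $f\circ\gamma$ at $(s,s')$ is literally a difference quotient for $f$ at $(\gamma(s),s'(b-a))$, and the local Lipschitzian hypothesis on $f$ (together with homogeneity $Lf(x,tx')=tLf(x,x')$ and Scott continuity of $Lf$) then gives the required bound. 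The same direct substitution handles your final step without needing a reverse containment in Proposition~\ref{chain}: from $0\in L\phi(s_0,1)$ one gets $-(f(a)-f(b))\in L(f\circ\gamma)(s_0,1)=Lf(t,b-a)$, and Theorem~\ref{dir-thm}(i) then furnishes $M\in\partial f(t)$ with $M(b-a)=f(b)-f(a)$. With this correction your argument is complete and coincides with Lebourg's; your closing suggestion simply to cite~\cite[Theorem~1.7]{lebourg1979generic} is exactly what the paper does.
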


\begin{corollary}\label{equi-loc-lip}
If $X$ admits a norm then a locally Lipschitzian map is locally Lipschitz and its generalised subgradient coincides with the Clarke subgradient.
\end{corollary}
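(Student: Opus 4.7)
The plan is to derive the quantitative Lipschitz estimate from the qualitative domain-theoretic bound using the mean-value theorem (Theorem~\ref{meanValueTheorem}) and the homogeneity of $L$, and then to observe that once the two notions of ``locally Lipschitz'' agree, the two definitions of subgradient become formally identical.

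For the first claim I would argue as follows. Assume $f$ is locally Lipschitzian at $x$, witnessed by open neighbourhoods $O\ni x$, $O'\ni 0$ and a compact interval $C=[c_-,c_+]$ with $Lf(y,z)\subseteq C$ for every $(y,z)\in O\times O'$. Since $X$ is normed, shrink $O$ to a convex ball $B_\rho(x)\subseteq O$ and pick $\delta>0$ with $B_\delta(0)\subseteq O'$. For any two points $x_1,x_2\in B_{\rho/2}(x)$, the segment $[x_1,x_2]$ lies in the convex open set $B_\rho(x)$, so Theorem~\ref{meanValueTheorem} yields some $t\in(x_1,x_2)\subset O$ and $M\in\partial f(t)$ with $f(x_1)-f(x_2)=M(x_1-x_2)$. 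Writing $v=x_1-x_2$ (nonzero) and $\lambda=\delta/(2\|v\|)$, the scaled direction $\lambda v$ has norm $\delta/2$ and so lies in $O'$; by homogeneity of $L$ we have $Lf(t,v)=\lambda^{-1}Lf(t,\lambda v)\subseteq \lambda^{-1}C$, and Theorem~\ref{dir-thm}(i) gives $M(v)\in Lf(t,v)$. Hence $|f(x_1)-f(x_2)|=|M(v)|\le K/\lambda=(2K/\delta)\|v\|$ with $K=\max(|c_-|,|c_+|)$, exhibiting $f$ as Lipschitz on $B_{\rho/2}(x)$.

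For the second claim, observe that the Clarke subgradient of Equation~(\ref{c-sub}) and the generalised subgradient $\partial f(x)$ of Proposition~\ref{lebourg} are defined by exactly the same formula, namely $\{A\in X^*:A(v)\le f^\circ(x;v)\text{ for all }v\in X\}$, and the text stipulates that $f^\circ$ on a topological vector space is given by the Banach-space expression~(\ref{gen-dir-derv}). Hence once $f$ is known to be both locally Lipschitz (so Clarke's definition applies) and locally Lipschitzian (so Lebourg's does), the two sets coincide by inspection.

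The main obstacle is the first step: converting the set-theoretic bound $Lf(y,z)\subseteq C$ into a pointwise norm estimate $|f(x_1)-f(x_2)|\le k\|x_1-x_2\|$. Homogeneity of $L$ in its second argument is what makes this possible, since it lets an arbitrary direction $v$ be rescaled into the fixed neighbourhood $O'$ of the origin where the uniform bound applies; combined with Theorem~\ref{dir-thm}(i), which realises values of $Lf(t,\cdot)$ by a genuine functional $M\in X^*$ supplied by the mean-value theorem, this rescaling trick converts a qualitative domain-theoretic statement into the classical Lipschitz bound.
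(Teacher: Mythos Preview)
Your argument is correct. Both your proof and the paper's hinge on the mean-value theorem (Theorem~\ref{meanValueTheorem}) to produce $M\in\partial f(t)$ with $f(x_1)-f(x_2)=M(x_1-x_2)$, but you bound $|M(v)|$ differently. The paper invokes the last clause of Proposition~\ref{lebourg}: some neighbourhood of $x$ is mapped by $\partial f$ into a weak*-compact set $C\subset X^*$, and since weak*-compact subsets of the dual of a normed space are norm-bounded, $\|M\|\le k$ for all $M\in C$, whence $|f(a)-f(b)|\le k\|a-b\|$. You instead rescale the direction $v$ into the fixed neighbourhood $O'$ via homogeneity of $L$ and read off the bound $|M(\lambda v)|\le K$ directly from $Lf(t,\lambda v)\subseteq C$ together with Theorem~\ref{dir-thm}(i). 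Your route is a bit more elementary, since it avoids the Banach--Steinhaus-type fact that weak*-compact sets are norm-bounded; it is also more explicit about shrinking $O$ to a convex ball so that the segment $[x_1,x_2]$ genuinely lies in the domain where the mean-value theorem applies, a point the paper's proof leaves implicit. Your treatment of the second claim is likewise correct and slightly more explicit than the paper, which does not spell it out.
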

}  %end of temporary removed

\section{Domain of dual number intervals} \label{star-op}

% The reason to introduce the limited domains $\mathcal{I}$ is that several recursively defined function, used in the following, converge, i.e. make sense, only when defined on $\mathcal{I}$.  The domain $\mathcal{R}$ is then used to deal with general real values and functions. 
% With a similar spirit, we introduce three domains for dual numbers: $(\mathcal{I},\mathcal{I})$, $(\mathcal{R},\mathcal{I})$, and $(\mathcal{R},\mathcal{R})$, composed by pairs of elements. The first component is the standard part of a dual number (albeit interval valued instead of single-valued) and the second component is the infinitesimal part.  So the elements on the domain $(\mathcal{R},\mathcal{I})$ have an standard part belonging to $(\mathcal{R}$ and an infinitesimal part belonging to $(\mathcal{I})$, similarly for the domain  $(\mathcal{I},\mathcal{I})$, $(\mathcal{R},\mathcal{R})$.

This section introduces a hierarchy of continuous domains for dual numbers, $\Dom_\tau $,  and defines a family of mapping, $\dual{-}_\tau$,  that embeds the spaces of functions on real numbers and those of functionals on functions into these domains.  The domain-theoretic directional derivative of a function $f$ on reals defines the \emph{infinitesimal} part of $\dual{f}_\tau$.  On the other hand, we define two families of maps, $\real{-}_\tau$ and $\infi{-}_\tau$,  that \emph{extract}, from a \emph{total} function $g$ on $\Dom_\tau$, the function $f$ on real numbers that $g$ represents as well an infinitesimal perturbation of $f$ that is used in the computation of directional derivatives.

% This section introduces a domain for dual numbers and defines a series of functions relating the hierarchy of functions on dual domains to the hierarchy of functions on reals.

\temporaryRemoved{
Given a continuous function $f:\mathbb R \to \mathbb R$, we associate to it a function $\dual{f} : \dualDom \rightarrow \dualDom$ defined by $\dual{f} = g^\star$ where $g$ is the function on maximal elements of $\dualDom$ defined by $g(x+\varepsilon x') = f(x) + \varepsilon Lf(x, x')$. 
We recall that $(\_)^\star$ is the envelope operator of Proposition~\ref{envelope}, and $L$ is the generalized directional derivative.

% Note that in this definition, we are implicitly taking the maximal extension of $f$ and $Lf$ over partial values in $\realDom$.
% When $f$ and $g$ are differentiable, we have that $(f \circ g)^\mathsf{d} = (f)^\mathsf{d} \circ (g)^\mathsf{d}$.

The $\dual{\_}$ operation extends a continuous function on real numbers to dual numbers in a natural way.
When either $f$ or $g$ are not differentiable, the $\dual{\_}$ operation does not distribute over composition of functions. Instead, we have that  $\dual{f} \circ \dual{g}\sqsubseteq \dual{f \circ g}  $. In some cases the inequality is strict, as is illustrated by various examples in Section~\ref{examples}. This discussion around the distributivity (or lack thereof) of $\dual{\_}$ over composition of functions is equivalent to the same discussion over the weakening of the chain rule. Essentially, if $\dual{\_}$ distributes over $f \circ g$, then the classical chain rule holds, and vice versa; otherwise both the chain rule and the distributivity of $\dual{\_}$ are weaker.

% Now that we've described how to convert a function over real numbers to one over dual numbers, we continue by showing how to do the opposite. Given a function $g : \dualDom \to \dualDom$, it is possible to extract its \emph{standard part} $\real{g}$ and its \emph{infinitesimal part} $\infi{g}$. These are defined such that if $g(x + 0\varepsilon) = y + y' \varepsilon$ then $\real{g}(x) = y$ and $\infi{g}(x) = y'$. It is immediate that given a locally Lipschitzian function $f : \realDom \to \realDom$, we have that $\real{(f^\mathsf{d})} = f$. Notice the connection to differential geometry: the space of dual numbers behaves like the \emph{tangent bundle} of the real line, while $\real{g}$ behaves like the basepoint of a tangent vector, and $\infi{g}$ like a displacement from the basepoint.

In the reverse direction, given a function $g : \dualDom \to \dualDom$ we can extract its real part $\real{g}$ and its infinitesimal part $\infi{g}$ defined by $\real{g}(x) = y$ and $\infi{g}(x) = y'$ where $g(x + \varepsilon 0) = y + \varepsilon y'$.  It is immediate that $\real{\dual{f}} = f$, and therefore the function $\real{g}$ describes the standard value contained in  $g$. Moreover, for every $f$ and $x$, we have: $\infi{\dual{f}}(x) = 0$, so $\infi{g}$ gives a measure of how much $g$ differs from a standard function. 
Dual numbers can be described as elements of the tangent space bundle over the real line. Similarly, $\real{g}$ and $\infi{g}$ can be seen as the components of an element in the tangent space bundle.

Next we extend the functions $\dual{\_}, \real{\_}, \infi{\_}$ to the spaces of functions on several arguments and to the set of functionals over the real line and over the domain of dual numbers.
The type of these function spaces are given by the grammar $ \tau := \delta \ |\ \tau \to \tau $, where $\delta$ stands for the space of dual numbers. 
To shorten the notations, given a list of types $\vec{\tau} = \tau_1, \ldots, \tau_n$ we denote by 
$\vec{\tau} \to \delta$ the type $\tau_1 \to ( \ldots \to (\tau_n \to \delta ) \ldots )$. Similarly if $f$ is a function, and $\vec{d}$ a list of elements, we denote by $f \vec{d}$ the multiple application $f(d_1)\ldots (d_n)$. Moreover, we will apply operations, like $+$, scalar product, point-wise to list of elements.  We write also $\vec{d} \ll \vec{e}$, to denote that any element in the list $\vec{d}$ is way-below the corresponding element in the list $\vec{e}$.  } 

The maps $\dual{\_}_\tau$, $\real{\_}_\tau$, and $\infi{\_}_\tau$ are given on a limited hierarchy of types $\tau$ formally defined by the following. 
We define \emph{first-order function types} to be the types having the form $\delta \to ( \ldots \to (\delta \to \delta))$, and \emph{second-order function types} to be the types having the form 
$\tau_1 \to ( \ldots \to (\tau_n \to \delta ) \ldots )$ with $\tau_1,\ldots,\tau_n$ either a first-order function type or equal to $\delta$.  Notice that, by the above definition, a first-order function type is always a second-order function type. 
We define a \emph{first-order function} to be a function $f$ having first-order function type. We define a \emph{second-order function}, or a \emph{functional}, to be a function $F$ having a strictly second-order function type. By uncurrying, a first-order function $f$ can be seen to take $n$ dual values and return a dual value.  

The domain for dual number, $\dualDom$, is the domain $\realDom \times \realDom$. The first component is the standard part of a dual number (albeit interval valued instead of single-valued) and the second component is the infinitesimal part.
The domains and codomains of the maps, $\dual{\_}_\tau$, $\real{\_}_\tau$, and $\infi{\_}_\tau$ are defined by the following.

\begin{definition}~\label{def-sec-order}
By induction on the structure of a \emph{second-order function type} $\tau$, we define the following families of topological spaces:
\begin{itemize}
    \item $\realLine_{\delta} = \realLine$ and 
    $\realLine_{\tau_1 \to \tau_2}$  is the topological vector space of continuous functions from $\realLine_{\tau_1}$ to $\realLine_{\tau_2}$ with the compact-open topology;
%    \item $\realLine_{\delta} = \realLine$; $\realLine_{\vec{\tau}}$ is the product of the topological vector spaces  $\realLine_{\tau_1}  \ldots \realLine_{\tau_n}$; and $\realLine_{\vec{\tau} \to \delta}$  is the topological vector space of continuous functions from $\realLine_{\vec{\tau}}$ to $\realLine$ with the compact-open topology;
\item $\realLine^p_{\delta} = \realDom$ and 
$\realLine^p_{\tau_1 \to \tau_2}$  is the topological space of continuous 
%(locally Lipschitzian) 
functions from $\realLine_{\tau_1}$ to $\realLine^p_{\tau_2}$ with the Scott topology; 
% $\realLine^p_{\delta} = \realDom$;  $\realLine^p_{\vec{\tau} \to \delta}$  is the topological space of continuous %(locally Lipschitzian) 
%functions from $\realLine_{\vec{\tau}}$ to $\realDom$ with the Scott topology; 
% This differs from the previous space by including partial functions;
%\todo{Are these continuous partial functions?}
\item $\Dom_{\delta} =  \dualDom$ 
and $\Dom_{\tau_1 \to \tau_2}$ is the bounded complete domain 
$(\Dom_{\tau_1} \to \Dom_{\tau_2})$;
% $\Dom_{\delta} = \dualDom$; 
% $\Dom_{\vec{\tau}}$ is the product of the bounded complete domains $\Dom_{\tau_1} \ldots $
% ; and $\Dom_{\vec{\tau} \to \delta}$ is the bounded complete domain $(\Dom_{\vec{\tau}} \to \dualDom)$;
\item $\Dom^t_\delta$ is the subset of $\dualDom$ containing elements whose standard part is total, i.e., maximal;
 $\Dom^t_{\tau_1 \to \tau_2}$ is the subset of $\Dom_{\tau_1 \to \tau_2}$ containing functions mapping all elements in $\Dom^t_{\tau_1 \to \tau_2}$ to elements in $\Dom^t_{\tau_2}$. Functions in $\Dom^t_{\tau}$ are called \emph{standard maximal preserving}.
% \item $\Dom^t_\delta$ is the subset of $\dualDom$ containing elements whose standard part is total, i.e., maximal;
% \item $\Dom^t_{\vecsimilar{\tau} \to \delta}$ is the subset of $\Dom_{\vec{\tau} \to \delta}$ containing functions mapping all elements in $\Dom^t_{\vec{\tau}}$ to elements in $\dualDom^t$. Functions in $\Dom^t_{\vec{\tau} \to \delta}$ are called \emph{standard maximal preserving}.
\end{itemize}
\end{definition}

Notice that,  equipped with the compact-open topology, the topological vector space $\realLine_{\tau_1 \to \ldots \to \tau_n \to \delta}$ is homeomorphic to  the space $(\realLine_{\tau_1} \times \ldots \times \realLine_{\tau_n}) \to \realLine_{\delta}$~\cite[p. 261]{dugundji1966topology}. Moreover, the space of maximal elements of the listed bounded complete domains, in Definition~\ref{def-sec-order}, equipped with their relative Scott topology will be Polish, and hence Hausdorff topological vector spaces (\cite{lawson1997spaces}) and thus the results at the end of Section~\ref{directional-l-derivative}, and proved in~\cite[sections 2.1, 2.2]{DEG22}, can be used.

\begin{definition}
The functions:
$\dual{\_}_{\tau} : \realLine_\tau \to \Dom^t_\tau$, 
$\real{\_}_{\tau} %, \infim{\_}_\tau, \infip{\_}_{\tau} 
: \Dom^t_\tau \to \realLine_\tau$,   
$\infi{\_}_{\tau} : \Dom^t_\tau \to \realLine^p_\tau$ are inductively defined on the type $\tau$ by:
\[\begin{array}{cccc}
\dual{x}_{\delta} :=  x + \varepsilon 0 &
\real{x + \varepsilon x'}_{\delta} :=  x &
\infi{x + \varepsilon x'}_\delta =  x' &\mbox{and}
\end{array}\]
\[\begin{array}{l}
\dual{f}_{\vec{\tau} \to \delta} = h^\star  
\mbox{ with } h : \Dom^t_{\tau_1} \to \ldots \to \Dom^t_{\tau_n} \to \dualDom \mbox{ defined by:} \\ 
h(\vec{d}) = f (\overrightarrow{\real{d}})) + 
\varepsilon ((\lDeriv f)^\star (\overrightarrow{\real{d}}, \overrightarrow{\infi{d}}))  \\
\real{g}_{\vec{\tau} \to \delta}(\vec{x}) = \real{g (\overrightarrow{\dual{x}})}_\delta \hspace{2em}
\infi{g}_{\vec{\tau} \to \delta}(\vec{x}) = \infi{g (\overrightarrow{\dual{x}})}_\delta, 
\end{array}\]
where  $(\_)^\star$ is the envelope operator of Proposition~\ref{envelope}.  
In defining $h$, we need to use the envelope $(\lDeriv f)^\star$ because ${\infi{d_i}}$ can be a  partial real number, or a function returning partial real numbers. 
Since the topological space $\Dom^t_\delta$ is dense in $\Dom_\delta$, by an obvious generalization of Proposition~\ref{function-sp-ext}, for any first order type $\tau$,  $\Dom^t_\tau$ is dense in $\Dom_\tau$; it follows by Proposition~\ref{envelope} 
% and Corollary~\ref{ext-functionals} 
that the envelope $h^\star$ exists for both first order and second order functions.
The definition of $\dual{\_}_{\tau}$ can be seen as extension of the construction in~\cite[ Corollary 1]{DEG22} for extending a functional of type $(\realLine\to \realLine)\to \realLine$ to $(\realDom\to \realDom)\to \realDom$.
\end{definition}
\begin{proposition} \label{starOpDescription}
For any second order type $\tau = \vec{\tau'} \to \delta$, functionals $f_1, f_2 \in \realLine_\tau$ and list of values $\overrightarrow{x_1}, \overrightarrow{x_2} \in \realLine_{\vec{\tau'}}$, by the infinitesimal property of $\varepsilon$, we have:
\begin{enumerate}[label=(\roman*)]
    \item $f_1 = \real{\dual{f_1}_\tau + \varepsilon \dual{f_2}_\tau}_\tau$
    \item $f_2 = \infi{\dual{f_1}_\tau + \varepsilon \dual{f_2}_\tau}_\tau$
    \item $\lDeriv f_1 \, \overrightarrow{x_1} \, \overrightarrow{x_2} = 
    \infi{\dual{f_1}_\tau \, (\overrightarrow{\dual{x_1}} + \epsilon \overrightarrow{\dual{x_2}})}_\delta$
\end{enumerate}
\end{proposition}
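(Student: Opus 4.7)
The plan is to prove all three items simultaneously by induction on the second-order type $\tau = \tau_1' \to \cdots \to \tau_n' \to \delta$, where each $\tau_i'$ is either $\delta$ or a first-order function type. The engine of the argument is the single algebraic fact $\varepsilon^2 = 0$ together with the homogeneity identity $Lf(x,0)=0$ (a consequence of the calculus rules for $L$). The role of the induction is to propagate the defining properties of $\dual{\_}$, $\real{\_}$, $\infi{\_}$ through the higher-order arguments.

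For the base case $\tau = \delta$, the elements $f_1,f_2$ are just real numbers, and a direct computation
\[\dual{f_1}_\delta + \varepsilon\dual{f_2}_\delta = (f_1 + \varepsilon 0) + \varepsilon(f_2 + \varepsilon 0) = f_1 + \varepsilon f_2,\]
using $\varepsilon^2=0$, yields (i) and (ii); part (iii) is vacuous as the argument list is empty.

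For the inductive step with $\tau = \vec{\tau'} \to \delta$, I first unfold $\dual{f_i}_\tau = h_i^\star$ where $h_i(\vec{d}) = f_i(\vec{\real{d}}) + \varepsilon(Lf_i)^\star(\vec{\real{d}},\vec{\infi{d}})$, and observe that for every $\vec{d}\in\Dom^t_\tau$ the map $h_i$ is already Scott continuous at $\vec{d}$, so the envelope coincides with $h_i$ there (Proposition~\ref{envelope}); in particular this holds at the specific totals I will plug in. Using $\varepsilon^2=0$ once more gives
\[(\dual{f_1}_\tau + \varepsilon\dual{f_2}_\tau)(\vec{d}) = f_1(\vec{\real{d}}) + \varepsilon\bigl((Lf_1)^\star(\vec{\real{d}},\vec{\infi{d}}) + f_2(\vec{\real{d}})\bigr).\]
To conclude (i) and (ii), I substitute $\vec{d} = \vec{\dual{x}}$; by the inductive hypothesis applied to each $\tau_i'$ (items (i) and (ii) with $x$ in the role of $f_1$ and $0$ in the role of $f_2$), the real components are $\vec{x}$ and the infinitesimal components are zero. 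Homogeneity of $L$ then forces $(Lf_1)^\star(\vec{x},\vec{0})=0$, so the whole expression collapses to $f_1(\vec{x}) + \varepsilon f_2(\vec{x})$, from which (i) and (ii) follow by projecting with $\real{\_}_\delta$ and $\infi{\_}_\delta$. For (iii) I substitute $\vec{d} = \vec{\dual{x_1}} + \varepsilon \vec{\dual{x_2}}$ and invoke the inductive (iii) on each $\tau_i'$ to read off $\vec{\real{d}} = \vec{x_1}$ and $\vec{\infi{d}} = \vec{x_2}$; since these arguments are maximal and $Lf_1$ is Scott continuous by Lemma~\ref{scott-cont}, the envelope $(Lf_1)^\star$ agrees with $Lf_1$ there, and the infinitesimal component is exactly $Lf_1(\vec{x_1},\vec{x_2})$.

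The main obstacle is controlling the envelope operator cleanly at higher types: in general $h^\star$ is only guaranteed to equal $h$ at points of continuity of $h$, and for second-order $\tau$ the inputs $\vec{\infi{d}}$ can land in partial function spaces $\realLine^p_{\tau_i'}$ on which $Lf_1$ need not be defined at all. The inductive hypothesis is tailored precisely to sidestep this: for arguments of the special shape $\vec{\dual{x}}$ or $\vec{\dual{x_1}}+\varepsilon\vec{\dual{x_2}}$, the induction guarantees that the extracted real and infinitesimal parts are continuous totals, so $h$ is continuous at the relevant point and the envelope collapses. A small auxiliary check is needed that $\real{\_}_{\tau_i'}$ and $\infi{\_}_{\tau_i'}$ are themselves continuous at these inputs, which follows from the definition of $\Dom^t_{\tau_i'}$ and density of totals (the generalisation of Proposition~\ref{function-sp-ext} already invoked in the definition).
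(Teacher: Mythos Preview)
Your approach---induction on the structure of $\tau$, unfolding the definitions of $\dual{\_}$, $\real{\_}$, $\infi{\_}$, and using $\varepsilon^2=0$ together with $Lf(x,0)=0$---is exactly the one the paper has in mind; the paper's own argument is just the two-line sketch ``(i),(ii) by induction and a simple calculation; (iii) by induction using (i),(ii)''.

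One small correction to your write-up: in the inductive step for (iii) you say you ``invoke the inductive (iii) on each $\tau_i'$'' to obtain $\real{d_i}=x_{1,i}$ and $\infi{d_i}=x_{2,i}$ for $d_i=\dual{x_{1,i}}+\varepsilon\dual{x_{2,i}}$. That is not item (iii) at the lower type; it is precisely items (i) and (ii) at type $\tau_i'$ (with $x_{1,i}$ in the role of $f_1$ and $x_{2,i}$ in the role of $f_2$). Item (iii) at $\tau_i'$ is a statement about $L x_{1,i}$, which plays no role here. This matches the dependency the paper states: (iii) is \emph{derived from} (i) and (ii), not from lower instances of (iii). Apart from this mislabelling, the argument is sound and your discussion of why the envelope collapses at the relevant inputs is a useful elaboration of what the paper leaves implicit.
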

Note that for first-order types $\tau$, the functions $\dual{\_}_\tau$ are just set-theoretic functions since they are not continuous functions on the infinitesimal component.
In the above and in the following, we  use the point-wise extension of the multiplication by the dual number $\varepsilon$, and the addition operation $+$.
That is, if $*$ is an operation defined on the domain $D$, the operation $*$ on the domain $C \rightarrow D$, is defined by: 
\[
(f_1 * f_2)(c) = f_1(c) * f_2(c)
\]
and similarly for other operations and functions.

In the following, we will sometimes omit the type $\tau$ from $\dual{\_}_\tau$, $\real{\_}_\tau$, $\infi{\_}_\tau$ when the type of $\tau$ is clear from the context. We will also implicitly assume, where necessary, that any real number is automatically ``cast'' to a dual number.

Denote by $\St$ and $\In$ the envelopes of the functions  $\real{\_}_\delta$ and $\infi{\_}_\delta$ respectively.  They are functions in $\dualDom \rightarrow \realDom$ defined by:  $\St(x + \varepsilon x') = x$, $\In(x + \varepsilon x') = x'$.

% The above definition of $(\_)^\mathsf{d}$ can be generalised to functionals in a natural way. Consider $F:(\mathbb R \to \mathbb R) \to \mathbb R$. Define $F^\mathsf{d} : (\realDom \to \realDom) \times (\realDom \to \realDom) \to \realDom \times \realDom$ essentially the same way as in the first-order case: $F^\mathsf{d}(f, g) = (F(f), LF(f,g))$ where maximal extension is implicitly used. The discussion surrounding the weak distributivity of $(\_)^\mathsf{d}$ is valid for functionals as well.

\section{A language for differentiable functionals}\label{language}
%\subsection{Syntax}\label{syntax}

Next we present Dual PCF, a language with a primitive operator for the evaluation of directional derivatives of functionals.  The language is a simply typed $\lambda$-calculus extended with a suitable set of constants. 
%For some real number constructions, mainly the power series, it is useful to characterize the set of real values that belong to given interval, a power series converges not on the whole real line but just on a restricted interval of values.  To syntactically characterized the set of expression evaluating to the interval $[-1,1]$, we introduce a subtyping relations in the language $\subtype$, so we have a type of real values, $\pi$, a type for the intervals $[-1,1]$, $\iota$, with $\iota$ subtype of $\pi$, $\iota \subtype \pi$. 

% \subsection{Types}

The types of the Dual PCF are defined by the grammar: 
% \\
\[ \tau \ ::=   \ o \  \mid \ \nu \ \mid \  \pi \ \mid\  \delta \ % \mid \ \delta_n \
\mid \ \tau \rightarrow \tau
\] 
where $o$ is the type of booleans, $\nu$ is the type of natural numbers, $\pi$ is the type of real numbers,  and $\delta$ is the type of dual numbers.
The derivative operator is defined only on the type of dual numbers $\delta$. We assign to variable $x$ the type $\pi$ if we are not interested in evaluating the derivative with respect to $x$; values of type $\pi$ have implicitly an infinitesimal part equal to $0$.

The set of expressions in the language is defined by the grammar: 
\begin{equation} \label{grammar}
     e ::=  \lo{c} \ \mid\ x^\tau \ \mid\ e_1 e_2 \ \mid\ \lambda x^\tau . e 
\end{equation}
where $x^\tau$ ranges over a set of typed variables and $\lo{c}$ over a set of constants. 
For simplicity, here we present only a minimal set of basic constants, sufficient to express any other computable function.  In a real programming language this minimal set will be extended with other functions.
All constants defining functions on dual numbers, for example $\rmax : \delta \rightarrow \delta \rightarrow \delta$ have a corresponding version on real numbers $\rmax : \pi \rightarrow \pi \rightarrow \pi$, acting in the obvious way.  To avoid repetition, we present just the dual number versions, implicitly assuming the definition for the real number version. 

%Since we deal we a several types a numerical value, we introduces the symbol $\rho$ to denote a generic type for continuous numerical values, that is $\rho$  stands for one of the following types  $\iota, \ \pi,\  (\iota,\iota), \ (\pi,\iota),  \delta$.

%and we freely use overloading for the symbols  of the basic functions, that is the same symbol can represents description of the same operation on different domains.

% To avoid the definition of a different constants performing the same operation of the various subtypes of the dual number, we make use of overloading polymorphism and assign to each constant several types, for example by writing $\rmin : \rho \rightarrow \rho \rightarrow \rho $ we means $\rmin$ has each one of types: $\iota \rightarrow \iota \rightarrow \iota$, $\pi \rightarrow \pi \rightarrow \pi $, $(\iota, \iota) \rightarrow (\iota, \iota) \rightarrow (\iota, \iota) $, $(\pi, \iota) \rightarrow (\pi, \iota) \rightarrow (\pi, \iota) $, $\delta \rightarrow \delta \rightarrow \delta$.  These type assignments imply that $\rmin$ is function taking a pair of dual numbers and returning dual number and such that in the infinitesimal part of the arguments is zero, so will be infinitesimal part of the result, if the standard parts of the arguments lies in the unit interval, so will the standard part of the result, and so on.
 
 The basic constants in the language are as follows:
\begin{enumerate}[label=(\roman*)]
    \item  The three total arithmetic operations,  $ +, -, *   : \delta \rightarrow \delta \rightarrow \delta$.
\item  Division by a positive natural number, $ /   : \delta \rightarrow \nu \rightarrow \delta$.
\item  Minimum and maximum $\rmin, \rmax  : \delta \rightarrow \delta \rightarrow \delta$, evaluating the minimum and maximum  of two dual numbers.
\item Two casting functions (explicit conversion), from naturals to reals $\oper{in_\pi} : \nu \to \pi$, and from reals to duals $\oper{in_\delta} : \pi \to \delta$.
\item A zero-test on reals  $(0 <) : \pi \rightarrow o$, that cannot be applied to dual values.
This restriction assures that functions on dual numbers do not have points of discontinuity on maximal elements. For example, a function, from dual values to dual values, returning $0$ on strictly negative values and $1$ on strictly positive ones, will not be definable. This fact, in turn, is necessary to guarantee the correctness of the derivative operator.
\item A projection function $\pr :  \delta \rightarrow \delta$, projecting a value on the unit interval $[-1,1]$. On real values, the behaviour of $\pr$ is defined by $\pr(x) = \rmax(-1, \rmin(x , 1))$.
%   the function $\pr$ can be also typed by $\pr : (\pi, \iota) \rightarrow (\iota, \iota)$ and by $\pr : \pi \rightarrow \iota$
\item An integration functional $\integral_\delta: (\pi \rightarrow \delta) \rightarrow \delta$, giving the Riemann integral of a function on the interval $[0,1]$. 

%the standard part and the infinitesimal part of the integral are obtained by integrated separately % the standard part and the infinitesimal part of the function;
\item A supremum functional, $\sup: (\pi \rightarrow \delta) \rightarrow \delta$, evaluating the supremum of functions in $[0,1]$.
\item  {For second-order function types, $\tau = \vec{\tau} \to \delta$, a directional derivative operator, $\llderiv_{{\tau}}: (({\tau} \to \vec{\tau_\pi} \to \vec{\tau_\pi} \to \pi)$.  Each type $\tau_i$ in the list $\vec{\tau}$ must be equal to $\delta$, or be a first order type, in the form  $\sigma_1 \to \ldots \sigma_n \to \delta$, with $\sigma_1$ a ground type, }
while the type $\tau_\pi$ is recursively defined by $\delta_\pi = \pi$ and  $(\sigma_1 \to \ldots \sigma_n \to \delta)_\pi = \sigma_1 \to \ldots \sigma_n \to \pi$. Given a function $f$ in several arguments and returning a dual number, $\llderiv_{\vec{\tau}} f \vec{x} \vec{y}$ evaluates the derivative of $f$ at $\vec{x}$ along the direction $\vec{y}$. In the expression $\llderiv_{\vec{\tau}} f \vec{x} \vec{y}$, we need to add the syntactic condition that the derivative operator $\llderiv$ does not appear inside $f \vec{x} \vec{y}$.
% parametric over the type $\vec{\tau}$, along which the directional derivative is evaluated
\item The standard PCF constants on natural numbers, a sequential if-then-else test function and a fixed-point operator on arbitrary types $Y_\tau : (\tau \rightarrow \tau) \rightarrow \tau$.
\end{enumerate}

To ensure that for a function $f$ of type $\delta \to \delta$, for example, the infinitesimal part describes the derivative of the evaluated part, the language has restrictions on the way $\delta$ values are used. It is impossible to convert a dual into a real or to test whether a dual is less or greater than $0$. Consequently, all functions from dual numbers to Booleans are constant, so the if-then-else operator cannot be used to define functions on dual numbers that have no generalised derivative. As in \cite{DigEdalat13}, we have used the min/max operators as a safe alternative to if-then-else. 

\hypertarget{operational-semantics}{%
\subsection{Operational semantics}\label{operational-semantics}}
We define a small-step operational semantics. We note that we cannot use a standard PCF operational semantics as in \cite{MazzaPagani21} because in our work, unlike \cite{MazzaPagani21}, we implement exact computation on real numbers, so a real number cannot be defined by a single finite value. 
In \cite{DiGianan99,DigEdalat13}, an operational semantics to exact computation on real numbers is given by representing real numbers as streams of digits and using lazy evaluation to implement functions on them, but this has its drawbacks. It relies on parallel computation and is difficult to simulate in a programming language. In this paper, we propose an alternative approach, reminiscent of some work in \cite{bauer08,ShermanMC21}, with the advantage of a fairly direct translation into Haskell.

In the operational semantics, we use a set of basic constants for the rational intervals $[a,b]$, together with a set of basic constants on dual numbers, made up of pairs of rational intervals, $[a,b] + \varepsilon [a',b']$. We consider the infinite interval $(-\infty, + \infty)$ to be a special case of a rational interval. We avoid introducing these values directly in the main syntax of the language because they are partial values, and we prefer to have constants only for totally defined values, as is common in most programming languages. Using the functions $\oper{in_\pi} : \nu \to \pi$ and $\oper{in_\delta} : \pi \to \delta$ and the arithmetic operations, all rational values are readily available in the language.

In the operational semantics, we need to address the problem that by unfolding the fixed-point operator $Y_\tau$, on the one hand, one can obtain better and better approximations for a real value, but, on the other hand, the unfolding needs to be stopped at some point; otherwise, the computation diverges.  In exact real number computation, which deals with infinitary objects, one rarely has base cases in recursive definitions. For example, one can define a real number by the recursive equation $x = 1/4 + x/4$, and the infinite unfolding of this definition is an infinitary expression representing the value $1/3$.  The partial approximations of $1/3$ actually used in the computation are obtained by forcing a stop in the infinite unfolding. Similar considerations hold for the Riemann integral, by increasing the number of sub-intervals with which the unit interval is partitioned, one obtains better approximations of the integral, but the partition of the unit interval cannot be refined indefinitely.

To solve this problem, we extend the syntax, used in the operational semantics, by building expressions in the form $\langle e, n \rangle$ (or $\langle e, (m,n) \rangle$). The parameter $n$ (or $(m,n)$) represents a measure of the complexity of the computation along which the expression $e$ is going to be evaluated. Extending the syntax of the terms is a way to introduce in the syntax some information about the evaluation strategy that is useful in defining the reduction rules, but is not very meaningful in describing a function. One can avoid grammar extensions by adding extra information to the reduction rules, but the current approach is simpler to define.

The operational semantics allows to derive judgements in the form $\langle e, n \rangle \to_* [a,b] + \varepsilon [a',b']$, whose intended meaning is that with a computation bounded by a cost $n$, the expression $e$ reduces to the rational dual $[a,b] + \varepsilon [a',b']$.  
In more detail, Dual PCF has two forms of  the recursive operator $Y_\tau$: a bounded one, when $\tau$ is a \emph{continuous type}, a function type having a continuous range space, that is $\tau = \vec{\tau'} \to \pi$ or $\tau = \vec{\tau'} \to \delta$, and a standard one for any other value of type $\tau$.
% \temporaryRemoved{
Higher values of the parameter $n$ imply more effort in the computation so it will be always the case that if  $m \leq n$,  $(e, m) \to [a,b] + \varepsilon [a',b']$ and $(e, n) \to [c,d] + \varepsilon [c',d']$ then $[a,b] + \varepsilon [a',b'] \sqsubseteq [c,d] + \varepsilon [c',d']$. In other words, the evaluation of $\langle e, 0 \rangle, \langle e, 1 \rangle, \langle e, 2\rangle$, \ldots,  produces a sequence of intervals each one contained in the previous one and converging to the denotational semantics of $e$.
% }
This approach to the operational semantics is somewhat similar to and inspired by~\cite{ShermanMC21,bauer08}. 
% Moreover we have used the above ideas in a proof-of-concept implementation in Haskell of exact dual numbers and derivative operators that can be found in~\cite{anonymous21}.

Formally for the operational semantics, we consider an extended language obtained by adding extra constants and three extra production rules to the expression grammar of Dual PCF, as in Equation~(\ref{grammar}): 
\[
    e::=  \langle e, n \rangle  \ \mid \ \langle \integral , (m,n) \rangle   \ \mid \ 
            \langle \osup , (m,n) \rangle
%   \ \mid \ \ldots
   \]
with $n, m \in \nat$.

The set of constants is extended by: 
\begin{itemize}
\item  a set of constants $[a,b] : \pi $, and $[a,b] + \varepsilon [a',b'] : \delta$, with $[a,b]$ and $[a',b']$ rational intervals, or the value $(-\infty,+\infty)$;
% \item two constants, $\bot_o : o$, and $\bot_\iota: \iota$, explicitly encoding divergency;
\item a constant $\oper{In}$ for a  function $\delta \rightarrow \pi$ returning the infinitesimal part of a dual number.
\end{itemize}

% The values, ranged by $v$, are the standard Boolean and natural number constants for the types $o$ and $\nu$ respectively.  The values for type $\pi$ are the rational intervals, $[a,b]$, and for type $\delta$ pairs of rational intervals in the form $[a,b] + \varepsilon [a',b']$.  The values for function types are the $\lambda$-expressions and function constants of the language. 

The evaluation contexts are the standard ones for a call-by-name reduction:
\begin{align*}
E[\_] ::= & [\_] \ \mid\ E[\_] \, e  \ \mid\ E[\_] \lo{op} e \ 
\mid\ c \lo{op} E[\_]  
% \\ &
\ \mid\ \lo{if} E[\_]  \lo{if} e_1  \lo{else} e_2
    \ \mid\ \oper{In} E[\_] 
\end{align*} 
The reduction rules for generic terms are: 
$$
\langle e_1 e_2, n \rangle \to  \langle e_1, n \rangle e_2 
\hspace{1.5em}
\langle \lambda x . e_1, m \rangle e_2 \to  \langle e_1 [e_2/x], m \rangle
% \hspace{2em}
% \langle c , m \rangle e \to  c \langle e, m \rangle
% \hspace{2em}
% \langle c \, v, m \rangle e \to  c \, v \langle e, m \rangle
$$
The reduction rules for a generic binary operation on dual numbers $\operatorname{op}$ are: 
$$
\langle e_1 \, \operatorname{op} \, e_2, n \rangle \to \langle e_1, n \rangle \, \operatorname{op} \, \langle  e_2, n \rangle 
$$
together with the rules defining operations on rational intervals; e.g.: 
\begin{align*}
\langle x + \varepsilon x' \, , \, n \rangle  & \to x + \varepsilon x'
% \hspace{2em}
\\
(x + \varepsilon x')  \mathbf{+} (y + \varepsilon y') & \to (x + y) + \varepsilon (x' + y')
%\temporaryRemoved{
\\
(x + \varepsilon x')   \mathbf{-}  (y + \varepsilon y') & \to (x - y) + \varepsilon (x' - y')
\\
(x + \varepsilon x') \mathbf{*} (y + \varepsilon y') & \to (x * y) + \varepsilon (x * y' + y * x')
\\
(x + \varepsilon x')  \mathbf{/} 0 & \to (-\infty,+\infty) + \varepsilon (-\infty,+\infty) 
\\
 (x + \varepsilon x')   \mathbf{/} n & \to (x/n + \varepsilon x'/n) 
\\
  \rmax (x + \varepsilon x') (y + \varepsilon y') & \to x + \varepsilon x' \mbox{ if }x > y
\\
  \rmax (x + \varepsilon x') (y + \varepsilon y') & \to y + \varepsilon y' \mbox{ if } y > x
\\
  \rmax ((-\infty,+\infty) + \varepsilon x') (y + \varepsilon y') & \to (-\infty,+\infty) + \varepsilon (x' \sqcap y')
\\
  \rmax (x + \varepsilon x') ((-\infty,+\infty) + \varepsilon y') & \to (-\infty,+\infty) + \varepsilon (x' \sqcap y')
\\
  \rmax (x + \varepsilon x') (y + \varepsilon y') & \to I + \varepsilon (x' \sqcap y') \mbox{ otherwise, } % \\
  \mbox{ with } I = [\max(x^-,y^-), & \max(x^+,y^+)]
  \\
  \pr (x + \varepsilon x') & \to (-1 + \varepsilon 0) \mbox{ if } x < -1
\\
  \pr (x + \varepsilon x') & \to (1 + \varepsilon 0) \mbox{ if } x > 1
  \\
% \end{align*}
% \begin{align*}
  \pr (x + \varepsilon x') & \to (x + \varepsilon x') \mbox{ if } -1 < x < 1
\\
  \pr (x + \varepsilon x') & \to (x \sqcup [-1,1]) + \varepsilon (x' \sqcap 0) 
 \mbox{ otherwise }
 \end{align*}

The reduction rules for the functionals on dual numbers make use of the parameter $n$.  We have the following rules:
\begin{align*}
\langle \integral, n\rangle f  \to & \langle \integral, (n,n)\rangle f
% \\
\hspace{3em}
\langle \integral , (0,n)\rangle f \to \langle f, n \rangle ([0,1]) 
 \\
 \langle \integral (m,n)\rangle f   \to & 
    (\langle \integral, (m-1,n) \rangle  \lambda x \,.\, f( x / 2)) /2  
 + (\langle \integral, (m-1,n) \rangle \lambda x \,.\, f((x + 1)/ 2))/ 2
\temporaryRemoved{
\\
% \end{align*} 
% \begin{align*}
\langle \osup, n \rangle f    \to & \langle \osup, (n,n)\rangle f   
\\
\langle \osup, (0,n)\rangle f \to & \langle f, n\rangle ([0,1]) 
 \\
\langle \osup, (m,n)\rangle f  \to & \rmax 
(\langle \osup, (m-1,n)\rangle  \lambda x \,.\, f( x / 2)) \\
& (\langle \osup, (m-1,n)\rangle  \lambda x \,.\, f((x + 1)/ 2))}
\end{align*}
The operational semantics for the derivative operator is defined by:
\[
\langle \llderiv_{\vec{\tau}}, n \rangle  f \vec{e} \vec{e'} \to \oper{In} \langle f ({\vec{e} +_{\tau_i} (\cepsilon_\tau \vec{e'})}) , n \rangle 
\hspace{2em}
% \\
    \oper{In} ([a,b] + \varepsilon [a',b']) \to [a',b'] 
\]
where $+_\tau$ and $\cepsilon_\tau$ denote the expressions:  $+_{\sigma \to \tau} = \lambda f . \lambda g . \lambda x .  ( f x ) +_\tau (g x)$ and 
$\cepsilon_{\sigma \to \tau} = \lambda f . \lambda x . \cepsilon_\tau ( f x )$, while 
$\cepsilon (x + \varepsilon x')$ is a shorthand for  $(0 + \varepsilon) * (x + \varepsilon x')$.  
% In the last rule, the type $\tau$ is chosen accordingly to the type of the elements in $\vec{x}$, $\vec{y}$.

The operational semantics for the fixed-point $Y_\sigma$ operator on a continuous type $\sigma$ is defined by: 
\begin{align*}
\langle Y_{\sigma}, 0 \rangle f & \to \bot_\sigma  
% \\
\hspace{3em}
\langle Y_{\sigma}, (n + 1) \rangle f  \to \langle f (Y_{\sigma} f), n \rangle 
\end{align*}
  where $\bot_{\delta} = (-\infty, +\infty) + \varepsilon (-\infty, +\infty)$, 
%  \\
  $\bot_{\pi} = (-\infty, +\infty), \bot_{\sigma \rightarrow \tau} = \lambda x_\sigma . \bot_\tau$.
%\temporaryRemoved
{
The rules for the constants $+, -, *, /, \rmin, \rmax, \pr, \integral$, and $\osup$ acting on real values, are the obvious restriction of the corresponding rules for dual numbers.  

The reduction rules for the PCF constructors and constants are the standard ones, together with the rules:
\[ %\begin{align*}
\oper{if} \langle e, n \rangle \oper{then} \langle e_1, n \rangle  \oper{ else } \langle e_2, n \rangle % \\ & 
\langle c, n \rangle \to c
\] %\end{align*}
with $c$ any PCF constant, different from $Y_\sigma$, and with $\sigma$ a continuous type. }

\temporaryRemoved{
Note that, since we do not consider intervals in the form $(-\infty, a]$ or $[a, +\infty)$ as values, $\pr$ is not definable in terms of $\min$, $\max$. 
Moreover, $\pr$ is the only language constant that defines a function on partial real values mapping the bottom $(-\infty, +\infty)$ to a value different from $(-\infty, +\infty)$; therefore, it is necessary to recursively define real (dual) values and functions on real (dual) values. } 

The remaining rules can be found in the full version of the paper~\cite{DEG22}. 

\hypertarget{denotational-semantics}{%
\subsection{Denotational semantics}\label{denotational-semantics}}

The continuous Scott domain $\Dom_\tau$, used to give a semantic interpretation to expressions having arbitrary type $\tau$, is recursively defined by: $\Dom_o = \{ \lo{ff}, \lo{tt} \}_\bot$, $\Dom_\nu = \nat_\bot$, $\Dom_\pi = \realDom$, $\Dom_\delta = \dualDom$, $\Dom_{\sigma \to \tau} = (\Dom_\sigma \to \Dom_\tau)$. 

% Denotational semantics is given using the domain $\dualDom$ to give interpretation to the expressions having  a continuous type $\iota, \ \pi,\  (\iota,\iota), \ (\pi,\iota),  \delta$.

The semantic interpretation of any PCF constant is the usual one.  The general schema to give semantics to constants representing functions on dual numbers is the following: given a constant $\oper{c}$ of type $\tau$ that denotes a function $f_c$ on the real line ($\realLine$), the semantic interpretation of $\oper{c}$ is given by $\bsem{ c }$ defined by: 
\[ \bsem{ c } = \dual{ f_c }_\tau \]
To help the reader, we explicitly define the semantic interpretation of some of these constants:
% \emph{Here we define the semantic interpretation of some of the constants on dual numbers, for the others, the semantics interpretation should be obvious:}
%  \bsem{\dig }m n (x + \varepsilon x') = [\en_{\iota}(m), \en_{\iota}(n)](a x + b) + (ax')\varepsilon
%  \bsem{\mathsf{d_{a,b}} \,}(x + \varepsilon x') = [\en_{\iota}(m), \en_{\iota}(n)](a x + b) + (ax')\varepsilon
% \\ \bsem{\add }\ n\ (x + \varepsilon x') = ((x + \en_pi(n)) + \varepsilon x') ;
\begin{align*}
 &  \bsem{\rmin }(x + \varepsilon x') (y + \varepsilon y') = 
  (\min\,x\,y) + \varepsilon
  \begin{cases} 
    x' & x < y \\ 
    y' & y < x \\ 
    x' \sqcap y' & \text{o/w} 
    \end{cases} 
\\
%  &  \bsem{(\_*\_)} (x + \varepsilon x') (y + y' \varepsilon) = x * y + \varepsilon (x * y' + y * x') \\
%   & \bsem{(+ \varepsilon) } (x + \varepsilon x') = x+\varepsilon \\
%\end{align*}
%\begin{align*}
 \temporaryRemoved{
 &  \bsem{\osup} f =  [a,b] + \varepsilon \bigsqcap\left\{\, \In(f(x)) \, \mid \, \right.
 x \in [0,1],  \St(f(x)) \consist [a,b]\left.\right \}  \\
 &  \hspace{10ex}\mbox{ with } a = \sup_{x \in [0,1]} (\St(f(x)))^-  \mbox{ and } b = \sup_{x \in [0,1]} (\St(f(x)))^+  \\ }
 &  \bsem{\integral } f =  \int^\mathsf{d}_{[0,1]} f(x) dx 
\end{align*} 
Integration on the dual domain
is reduced as
\[\int^\mathsf{d}_{[0,1]} f(x) dx  = \int_{[0,1]}^\star \St(f(x)) dx + \varepsilon \int_{[0,1]}^\star \In(f(x)) dx,\] 
where $\int^\star_{[0,1]}:(\interval[0,1]\to\realDom )\to \realDom$ is the envelope of the Riemann integral functional $\int_{[0,1]}:([0,1]\to \realLine)\to\realLine$ as in Proposition~\ref{envelope}; it coincides with integration constructor developed in Real PCF and in interval analysis~\cite{edalat2000integration},~\cite{moore1966interval}. It sends a continuous function of type $\interval[0,1]\to \realDom$ to an interval in the domain of reals $\realDom$ and extends the Riemann integral in the sense that if $f:[0,1]\to \realLine$ is continuous then $\int_{[0,1]} f^\star(x)\,dx=\int_0^1 f(x)\,dx$, where as usual we identify a real number with its singleton. Note that the integration constructor, using the above method, can compute the value of $\int_{[a,b]}f(x)\,dx$ by a simple change of variable invoked by the linear re-scaling $x=h(y)=(b-a)y+a$ so that 
\begin{equation} \label{change-var}
   \int_{[a,b]}f(x)\,dx=(b-a)\int_{[0,1]}f\circ h^\star(y)\,dy. 
\end{equation}

For clarity, to distinguish the classical Riemann integral from the domain-theoretic integral, we always denote the classical Riemann integral of $f:[0,1]\to \realLine$ over any interval $[a,b]$ by $\int_a^b f(x)\,dx$ while $\int_{[u,v]}g(x)\,dx$, i.e., with the range of integration as a subscript to the integral sign, always denotes the extended interval-valued Riemann integral for a continuous function $g:\realDom\to \realDom$. 

The semantic interpretation of the derivative operator $\llderiv_{\vec{\tau}}$  is defined by:  
\[\bsem{\llderiv_{\vec{\tau}}} f \vec{d} \vec{e} = \In(f (\vec{d} + \varepsilon \vec{e})) \]
The interpretation of the other constants that cannot be obtained by the general scheme is the following: 
\begin{align*}
% & \bsem{\llderiv_{\vec{\tau}}} f \vec{d} \vec{e} = \In(f (\vec{d} + \varepsilon \vec{e}))
% \\
&  \bsem{\oper{in_\delta}} x = x + \varepsilon 0 \\
&  \bsem{(0 <)} x = \begin{cases} \ttt,& x > 0 \\ \ff,& x < 0 \\ \bot,& \text{otherwise} \end{cases}
&  \bsem{\oper{Y}_\sigma } f  = \bigsqcup_{i \in \nat} f^i (\bot_\sigma) 
\end{align*}
% \temporaryRemoved{
We point out that quite often dual numbers or functions on dual numbers obtained by using the recursion operator $\oper{Y}$ have an unbounded infinitesimal part, meaning that, depending on the type, the infinitesimal part is $\bot = (-\infty, +\infty)$, or is the function that maps every element to $\bot$. 
A simple example is the following recursive definition of the value $1$:
$\oper{Y} (\lambda x^\delta \,.\, (\pr x^\delta + 1)/2)$.
This fact can be explained as follows: the semantic interpretation functions are linear on the infinitesimal parts, and a linear function when applied to the bottom value $(-\infty,+\infty)$ returns either $0$ (if the linear map is identically $0$) or the bottom value itself. It follows that each element in the chain $(f^i (\bot_\sigma))_{i\in \nat}$, whose least upper bound gives the semantics interpretation of $Y f$, has an unbounded infinitesimal part.  

A solution for this problem, as in~\cite{DigEdalat13}, consists of introducing a second type of dual values, $\delta^l$, having the infinitesimal part bounded by the interval $[-1,1]$.  The basis functions on the type $\delta^l$ need to be non-expansive. Therefore most of the basic functions defined on $\delta$ must be replaced by non-expansive versions of them. For example, addition $+$ is replaced by a function evaluating the average of two values. To motivate this restriction, notice that by using addition it is possible to build a function that doubles its argument: $\lambda x . x + x$; this function maps $0 + \varepsilon 1$ to $0 + \varepsilon 2$, and therefore cannot have type $\delta^l \to \delta^l$.   Inside the type $\delta^l$, it is possible to use the fixed point operator to obtain functions with informative infinitesimal parts. The functions obtained in this way can later be embedded in the larger type $\delta$.  For lack of space and to focus on the main subject of this paper, which is evaluating derivatives of second-order functionals, we do not fully present this solution and refer the interested reader to \cite{DigEdalat13}.  % For practical purposes, a simpler, but less general solution, can be obtained by adding a sufficiently rich set of basic constants, representing for example, the standard analytic functions on dual numbers. 
% }

%   where the starting element for the fixed point construction is defined by: $\bot'_{(\pi. \pi)} = (\realLine, \realLine)$, $\bot'_{(\pi. \iota)} = (\realLine, [-1,1])$,  $\bot'_{(\iota. \iota)} = ([-1,1], [-1,1])$, $\bot'_\pi = (\realLine, 0)$,$\bot'_\iota = ([-1,1], 0)$, and $\bot'_{\sigma \rightarrow \tau} = \bot'_\sigma \searrow \bot`_\tau$.  
%   The proof that the above definition is correct, that is $\langle f^i (\bot'_\sigma) \rangle_{i \in \nat}$ is chain, can be done by showing, by structural induction on the terms, that for any term $e$, having type $\sigma$, $\bot'_\sigma \subseteq \esem{e}_\rho$. 

The semantic interpretation function \(\mathcal E\) is defined, by structural induction, in the standard way:
\[\begin{aligned}
  \esem{c}_\rho = \bsem{c} & \hspace{2em}
  \esem{x}_\rho = \rho(x)\\
  \esem{e_1 e_2}_\rho = \esem{e_1}_\rho(\esem{e_2}_\rho)  & \hspace{2em}
  \esem{\lambda x^\sigma.e}_\rho = \lambda d \in \mathcal D_\sigma.\esem{e}_{(\rho[d/x])}\\
%   \esem{\mathsf{if} \ e \ \mathsf{then} \ e_1 \ \mathsf{else} \ e_2}_\rho &= \begin{cases} \esem{e_1}_\rho,& \esem{e}_\rho=\ttt\\ \esem{e_2}_\rho &\esem{e}_\rho=\ff\\ \bot,& \esem{e}_\rho=\bot \end{cases} \\
%   \esem{\mathsf{pif}_{\iota} \ e \ \mathsf{then} \ e_1 \ \mathsf{else} \ e_2}_\rho &= \begin{cases} \esem{e_1}_\rho,& \esem{e}_\rho=\ttt\\ \esem{e_2}_\rho &\esem{e}_\rho=\ff\\ \esem{e_1}_\rho \sqcap \esem{e_2}_\rho,& \esem{e}_\rho=\bot \end{cases}\\
\end{aligned}\]

\subsection{Adequacy}\label{adequacy}

The correspondence between the denotational and the operational semantics is shown by the following result. 
For a closed expression $e$ of type $\delta$,
let us denote by \([a,b] + \varepsilon[a',b'] \ll Eval(e)\) the property that there exists a natural number $n$ and a dual rational interval \([c, d, ] + \varepsilon[c',d']\) such that \( \langle e, n \rangle \to^* [c, d] + \varepsilon[c',d']\) and 
$[a,b] + \varepsilon [a',b'] \ll [c,d] + \varepsilon [c',d']$.

\begin{theorem}~\label{sound-complete}
On type $\delta$ the operational semantics is sound and complete with respect to the denotational semantics, that is for any 
closed expression $e : \delta$, for any partial rational dual number $[a,b] + \varepsilon [a',b']$, we have:
$[a,b] + \varepsilon [a',b'] \ll \esem{e}$ iff $[a,b] + \varepsilon [a',b'] \ll Eval(e)$.
\end{theorem}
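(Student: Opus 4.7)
The plan is to prove both directions by the standard logical relations method, adapted to the bounded-evaluation style of the operational semantics. Soundness (if $[a,b]+\varepsilon[a',b'] \ll Eval(e)$ then $[a,b]+\varepsilon[a',b'] \ll \esem{e}$) comes from a straightforward induction on the number of reduction steps, observing that every single-step reduction only enriches the denotation: the bounded $Y_\sigma$ unfolding at depth $n$ produces $f^n(\bot_\sigma) \sqsubseteq \bigsqcup_i f^i(\bot_\sigma) = \bsem{Y_\sigma}f$, while the halving rules for $\integral$ and $\osup$ yield lower approximants of the domain-theoretic integral and supremum via additivity of the Riemann integral under the linear rescaling of Equation~(\ref{change-var}) and Scott continuity of the envelope operators.

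For completeness I would introduce, by induction on types, a logical relation $\triangleleft_\tau$ between $\Dom_\tau$ and closed expressions of type $\tau$ in the extended operational language. At each ground type $\gamma \in \{o,\nu,\pi,\delta\}$, the definition is $d \triangleleft_\gamma e$ iff for every basis element $d_0 \ll d$ there exist $n \in \nat$ and a ground constant $c$ with $\langle e,n\rangle \to^* c$ and $d_0 \sqsubseteq \bsem{c}$; at a function type, $f \triangleleft_{\sigma\to\tau} e$ iff $f(d') \triangleleft_\tau e\,e'$ whenever $d' \triangleleft_\sigma e'$. Two auxiliary lemmas are needed: admissibility (closure under directed suprema), which follows at ground types from the interpolation property of $\ll$ and lifts routinely to function types, and a substitution lemma so that the fundamental lemma can be stated for open terms.

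The heart of the argument is the fundamental lemma: every closed term $e : \tau$ satisfies $\esem{e} \triangleleft_\tau e$, proved by structural induction on $e$. The arithmetic constants, $\rmin$, $\rmax$, $\pr$, $(0<)$, the casting maps and the PCF primitives are routine. For the bounded $Y_\sigma$, admissibility combined with the identity $\esem{Y_\sigma f} = \bigsqcup_i f^i(\bot_\sigma)$ and the fact that $\langle Y_\sigma,n\rangle f$ operationally unwinds into (roughly) $f^n(\bot_\sigma)$ gives the case immediately. For the directional derivative operator, the reduction rule is essentially a syntactic transcription of $\bsem{\llderiv_{\vec\tau}} f \vec{d} \vec{e} = \In(f(\vec{d} + \varepsilon\vec{e}))$, so the case closes once one checks that the term-level combinators $+_\tau$ and $\cepsilon_\tau$ realise pointwise addition and scaling by $\varepsilon$ inside the logical relation, and that $\oper{In}$ is related to $\In$.

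The main obstacle, I expect, will be the $\integral$ and $\osup$ constants, whose reduction threads two counters $(m,n)$ governing the depth of dyadic bisection and the cost of evaluating the integrand on each subinterval. To show that the semantic functional is logically related to the operational constant, I would take an arbitrary $d_0 \ll \bsem{\integral}f$ and use Scott continuity of $\int^\star_{[0,1]}$ to produce a finite dyadic partition together with finitely many lower approximants of $f$ on each piece realising $d_0$; the inductive hypothesis on $f$ supplies operational approximants on each piece, and choosing $m$ large enough for the partition and $n$ large enough to dominate the uniform operational cost on every subinterval closes the case. An entirely parallel argument, using Scott continuity of the envelope of $\sup$, handles $\osup$. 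Once the fundamental lemma is established, the theorem at type $\delta$ is immediate: specialising $\triangleleft_\delta$ to $\esem{e}$ yields the required reduction sequence whenever $[a,b]+\varepsilon[a',b'] \ll \esem{e}$, and soundness supplies the converse.
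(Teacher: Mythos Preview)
Your proposal is correct and follows essentially the same route as the paper: the paper also uses computability predicates (i.e., a type-indexed logical relation extended to function types by preservation) and verifies each constant, singling out the bounded fixed-point $Y_\sigma$ as the paradigmatic case. The only cosmetic difference is that the paper packages both directions into a single bi-implicational predicate $\Comp_\tau$ at ground types, whereas you split off soundness as a separate induction on reductions; and the paper leaves the $\integral$/$\osup$ cases implicit while you sketch the dyadic-bisection argument explicitly.
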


% The proof uses the standard technique of computability predicates.

\section{Some functions and functionals in Dual PCF} \label{examples}

In this section, we will give various examples of functions and functionals expressible in our language. In some cases, we will also show how to use the operational semantics to compute the derivatives of these functionals. Some of these examples are motivated by actual areas of application. In many cases (for instance, \cite{chebop}), the problem of solving an integral or differential equation can be reduced to finding the roots of an integral or differential operator; being able to evaluate the derivative of a functional is required for such problems:

\subsection{Absolute value function}
The absolute value function $f(x) = |x|$ can be written in the language as $\lambda x. \,\rmax(x,-x)$. The domain-theoretic directional derivative of this function at $0$ is then correctly evaluated by the following reduction:
    $$\begin{aligned}
    \llderiv_{\delta} f\, 0\, 1 &\to \In(f (0 + \varepsilon 1))
    \to \In(\max(0 + \varepsilon 1, 0 - \varepsilon 1)) 
    &\to \In(0 + [-1,1] \varepsilon) % \\ &
    \to [-1,1]
    \end{aligned}$$
\subsection{Comparison with Chebyshef software} From~\cite[section 3.2]{frechet}, consider taking the directional derivative of the operator $G = \lambda g.\, \lambda x. \, x + g(x)^2$ at the point $f = \lambda u. \, u^2$ in the direction $k$. We refer to the operational semantics:$$\begin{aligned}
    & \llderiv_{\delta \to \delta, \, \delta} G (\lambda u. \, u^2)\, y\, k\, 0 % \\ &
    \to \In(G (\lambda u.\, u^2 + \varepsilon  k(u)) (y + \varepsilon 0))\\ 
    &\to \In(\lambda x .\, x + (x^2 + \varepsilon  k(x)) * (x^2 + \varepsilon k(x))) (y + \varepsilon 0)) \\
    &\to \In(y + (y^4 + \varepsilon 2  y^2 k(y))) % \\ &
    \to 2 y^2  k(y)
    \end{aligned}$$
    So in other words, we have that $LG(f,k) = \lambda y. 2y^2 k(y)$. This is the same result as was obtained by the software system in the above paper, except that their autodiff procedure is far more involved than ours.
    \subsection{Laplace and Fourier series} We can express in the language all continuous linear functionals like the Fourier-series finding operator and the Laplace transform of functions with compact support; see Section~\ref{lin-func}. The dual-number formalism implies that given a continuous linear functional $F$ and function $f$, we have: $F(f + \cepsilon g) = F(f) + \cepsilon F(g)$ for any function $g$. So we can deduce: $\lDeriv F(f,g) = F(g)$.
    \subsection{Lagrangian} Consider a functional of the form $F(f) = \int_0^1 {\mathfrak {L}}(t,f(t),f'(t))\,dt$ where ${\mathfrak{ L}}: \mathbb R^3 \to \mathbb R$, called the Lagrangian, and $f : \mathbb R \to \mathbb R$ are differentiable. This is what is called an action functional that is used for finding the time evolution of a system in analytical mechanics and quantum mechanics (\cite{kibble2004classical,sicm,weinberg2015lectures}).
    Assuming a term $\mathfrak{L} : \delta \rightarrow \delta \rightarrow \delta \rightarrow \delta$,  in our language $F$ is expressed as  $\lambda f \,.\, \integral (\lambda t \, . \, {\mathfrak {L}} (\oper{in_\delta} \, t) (f (\oper{in}_\delta \, t))(\llderiv_\delta f \, t \, 1))$.
     Our language can not be used to differentiate such a functional directly (because we cannot differentiate expressions that themselves contain the derivative operator). Instead, consider $G(g) = F\left(\int g\right) = \int_0^1 {\mathfrak {L}}(t, \int_0^t g(u)\, du, g(t))\, dt$. 
    The functional $G$ can be expressed in our language as \\
    $\lambda g \,.\, \integral (\lambda t \, . \, {\mathfrak L} \, (\oper{in}_\delta \, t) \,  (\integral (\lambda u \,.\, (\oper{in_\delta} \, t) \ast (g (\oper{in_\delta} (u * t))))) \, (g (\oper{in}_\delta \, t)))$, and can be differentiated (along $g$ and along the variables appearing in $\mathfrak{L}$). 
    
    Whenever the functional $G$ is locally stationary at $g$, the functional $F$ is locally stationary at the function $x \mapsto \int_0^x g(t)\,dt$. We can also fix the value of $\int_0^1 g(t)\,dt$ using Lagrange multipliers, to ensure that the starting and ending points of a particle are fixed, as is usually required when applying the principle of stationary action.
    \subsection{Solving initial value problems (IVP).}
Consider the 1d problem,
\begin{equation}\label{ivp} \dot{y'}(x) = v(y(x)),\qquad y(0)=0  % \qquad (1\leq i\leq n),
\end{equation}
where $v:O\to \realLine$ is a continuous one dimensional vector field in an open neighbourhood $O\subset \realLine$ of the origin. In our language the solution of the initial value problem can be expressed in the following way. Let $e_v : \pi \rightarrow \pi$, be an expression defining the function $v$ in its domain of definition. The solution of IVP is given by the expression:
\[Y (\lambda f \,.\,  \lambda x \, . \, \integral \lambda t \, . \,  x * \operatorname{pr}_M ( e_v ( f (t * x))))\]
where $\operatorname{pr}_M$ is a function projecting the real line onto the interval $[-M,M]$ and defined as $\lambda x . (\oper{in}_\pi M) * \pr x / M$, with $M$ a rational constant. The proof of correctness is given in~\cite{DEG22}. By using currying, the above construction can be easily extended to solve the IVP in $\realLine^n$.
\subsection{Legendre-Fenchel transform} Consider a function $f:[0,1] \to \mathbb R$. 
    %\todo{The extension looks strange, why not just apply the transform to a function R->R}
    Extend it to real values outside $[0,1]$ by setting $f(x) = +\infty$ whenever $x \in \mathbb R \setminus [0,1]$. We may define the \emph{Legendre-Fenchel transform} (\cite{convex}) of such functions as $F(f)(p) = \sup_{x \in [0,1]}(px - f(x))$. The resulting functional is definable in our language and is differentiable in the generalised sense we consider.
    \subsection{Thomas-Fermi kinetic energy functional} The Thomas-Fermi kinetic energy functional¬(\cite{kineticenergy}) is a classic example of a non-linear functional in quantum theory. It expresses in an approximate way the total kinetic energy of the electrons in a region of space purely in terms of their density distribution $n$. It is given by $$T(n)=C_{\rm kin}\int [n(\mathbf{r})]^{5/3}\ d^3r,$$ where $C_{\rm kin}$ is a constant involving the mass of an electron.
    If we suppose that $n : \mathbb R^3 \to \mathbb R$ is zero outside $[0,1]^3$, we can express the above functional in our language as:
    $$T(n) = C_{\rm kin} \int_0^1 \int_0^1 \int_0^1 (n(x,y,z))^{5/3}\,dx\,dy\,dz.$$

\section{Local consistency in the dual domain}\label{consistency-section}

To use dual numbers in a useful way, one needs to ensure that the infinitesimal value can actually be used to evaluate the derivative of a function.  In other words, one needs to prove that automatic differentiation provides correct results.  The problem of correctness of automatic differentiation is often considered in the literature; see for example~\cite{abadiPlotkin19} and~\cite{MazzaPagani21}. In our setting, where we admit also partial values, we require that the information about a function contained in the value part is not in contradiction with that of the derivative contained in the infinitesimal part.  We call this notion \emph{local consistency}.   
\temporaryRemoved{
For easier understanding, we first define local consistency on the domain of single-variable functions on the dual domain and then we generalize the notion to functions over several arguments and then to functionals. 
  Finally we prove that the derivative operator in our language correctly models the domain-theoretic directional derivative. We say that an interval $[a,b]\subset\realLine$ is {\em proper} if $a<b$.

\begin{definition}
A continuous function $f : \dualDom \rightarrow \dualDom$ is  \emph{locally consistent} if for all proper rational intervals  $[a,b] ,[ c,d]$,  
\begin{align*}
\mbox{(i)}\quad\St(f([a,b] + \varepsilon[c, d])) &= \St(f([a,b] + \varepsilon 0))   \\ 
\mbox{(ii)}\quad\In(f([a,b] + \varepsilon[c, d])) &\consist \frac{\St(f(b)-f(a))}{b-a} [c, d]. 
\end{align*}
\end{definition} 
The first equation states that the standard part of the output does not depend on the infinitesimal part of the input, while the second relation allows one to deduce information about the derivative using the value of the infinitesimal part. We will mainly concern ourselves with this second relation.  We call a function {\em standard-robust} if the first equation holds; in fact all functions we encounter will be standard-robust. It can be easily seen from (ii) that a constant function with constant value $I+\varepsilon J$ is locally consistent iff $J$ contains zero.
Note also that if $f$ is locally consistent then the following relation holds, 
$$ f([a,b] + \varepsilon[c, d]) \consist  (\operatorname{St}(f([a, b])) + \varepsilon \frac{f(b)-f(a)}{b-a} [c, d]) $$
which can be seen as an extension of Equation~(\ref{consistent}) to partial -- as opposed to total -- dual numbers.

%   Let $a < b$ be two real numbers. We need that $$\operatorname{In}(f([a,b] + \varepsilon[c,d])) \consist [c,d] \operatorname{St}\left(\frac{f(b) - f(a)}{b - a}\right)$$

In the remainder of the paper, we will use two alternative characterisations of local consistency, given by the lemma below. 
\begin{lemma} \label{localConsLemma}
For a  continuous, standard-robust function  $f: \dualDom \rightarrow \dualDom$ the following three statements are equivalent:
\begin{itemize}
   \item[(i)] The function $f$ is locally consistent.
    \item[(ii)] For any pair of rational intervals $[a,b], [ c,d]$ and for any pair of rational numbers $a', b'$ with $a \leq a' < b' \leq b$,  
$$
\In(f([a,b] + \varepsilon[c, d]))  \consist \frac{\St (f(b')-f(a'))}{b'-a'} [c, d].
$$
\item[(iii)]
For any pair of rational intervals $I, I'$ and any positive rational numbers $r$, 
$$
\In(f(I \meet (I + r I') + \varepsilon I'))  \consist \frac{\St (f(I + r I')-f(I))}{r}.
$$
\end{itemize} 
\end{lemma}

We call a generalised dual number $[a,b] + \varepsilon [c,d]$ \emph{standard maximal} if the standard part, $[a,b]$, is a maximal generalised real number, that is a single point interval.
% We call a function $f : \dualDom \rightarrow \dualDom$ \emph{standard maximal preserving} if it sends standard maximal values to standard maximal values.  Given a standard maximal preserving function $f :  \dualDom \rightarrow \dualDom$, we associate to $f$ a corresponding function on the real line $\real{f}: \realLine \rightarrow \realLine$, define as the restriction of $f$ to real numbers, that is the standard part of $f ([x,x] + \varepsilon 0)$ is the degenerate interval $[\real{f} (x), \real{f} (x)]$
 
\begin{proposition} \label{localConsistencyDerivative} 
Let $f : \dualDom \rightarrow \dualDom$ be a standard-robust, standard maximal preserving and continuous function. 
\begin{itemize} 
\item[(i)] The function $f$  is  locally consistent  if and only if for all maximal elements $x+\varepsilon x'$:
     \[ \In(f(x + \varepsilon x')) \sqsubseteq \lDeriv \real{f}(x, x') .\]
\item[(ii)]
If $f$ is locally consistent and is either maximal or 
maps maximal elements to maximal elements,
then for any maximal element $x + \varepsilon x'$ we have:
     \begin{equation}\label{eq-max} \In(f(x + \varepsilon x')) = \lDeriv \real{f}(x, x'). \end{equation}
     and the restriction of $f$ to maximal elements is a differentiable function. 
\end{itemize}

\end{proposition}
% \todo{There are typos in the proof. The first part has to prove an ``if and only if'' statement. To avoid confusion, we should start by each assumption and derive the other}

The first item of Proposition~\ref{localConsistencyDerivative} can be extended to second-order functions (functionals) on the dual domain in the following way.
We first extend the notion of local consistency to functionals by induction on the structure of the domain.  }

\begin{definition}
An element $x + \varepsilon x' : \dualDom$ is \emph{locally consistent} if $x'$ is an interval containing $0$. Given a function type $\vec{\tau} \to \delta$ with dual numbers $\delta$ as its codomain, 
a continuous functional $F : \Dom_{\vec{\tau} \to \delta}$ is  \emph{locally consistent} if it satisfies:
\begin{itemize}
    \item[(i)] it is \emph{standard-robust}, i.e., for any pair of lists  of standard robust values $\vec{d}, \vec{d'}$, 
\[
  \St(\vec{d}) = \St(\vec{d'}) \ \Rightarrow  \ \St(F(\vec{d})) = \St(F(\vec{d'})),
\]
\item[(ii)] for any pair of lists of  standard maximal preserving, locally consistent values $\vec{d}, \vec{d'}$, and rational $r>0$, % \todo{r>0?}
% such that $d_i + r d_i'$ exists
\[
     \In(F({(\vec{d} \meet (\vec{d} + r \vec{d'})) + \varepsilon \vec{d'}}))
     \consist  \frac{\St (F(\vec{d} + r \vec{d'}) - F(\vec{d}))}{r}.
\]\end{itemize}
\end{definition}

\begin{lemma}~\label{loc.bound.f}
Suppose $F : \Dom_{\vec{\tau} \to \delta}$ is a standard-robust and standard maximal preserving continuous functional. If $\lDeriv\real{F}(\real{\vec{d}}, \real{\vec{d'}})$ is bounded for some pair $({\vec{d}},{\vec{d'}})$ of lists of locally consistent and standard preserving elements, then the restriction of $\real{F}$ to the space of maximal preserving elements is locally Lipschitzian at $(\real{\vec{d}},\real{\vec{d'}})$. 
\end{lemma}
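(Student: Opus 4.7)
The plan is to reduce the conclusion to the Scott continuity of $L\real{F}$ (Lemma~\ref{scott-cont}) by first transporting the given boundedness from the direction $\real{\vec{d'}}$ to the origin using the homogeneity and sub-additivity calculus rules for $L$. Via the standard currying isomorphism mentioned in the paper, $\real{F}$ may be identified with a continuous real-valued map on the product Hausdorff topological vector space $\realLine_{\tau_1}\times\cdots\times\realLine_{\tau_n}$, so Lemma~\ref{scott-cont} applies directly and $L\real{F}$ is Scott continuous on this product.

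Write $L\real{F}(\real{\vec{d}},\real{\vec{d'}})=[a,b]$, which is a compact interval by hypothesis. Applying homogeneity with $t=-1$ gives $L\real{F}(\real{\vec{d}},-\real{\vec{d'}})=-[a,b]=[-b,-a]$, again a compact interval. Then, using sub-additivity with $0=\real{\vec{d'}}+(-\real{\vec{d'}})$,
\[
L\real{F}(\real{\vec{d}},0) \;\subseteq\; L\real{F}(\real{\vec{d}},\real{\vec{d'}}) + L\real{F}(\real{\vec{d}},-\real{\vec{d'}}) \;=\; [a-b,\, b-a],
\]
so $L\real{F}(\real{\vec{d}},0)$ is itself a compact interval. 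Choose any compact $C\in\realDom$ with $L\real{F}(\real{\vec{d}},0)\subseteq C^\circ$, i.e.\ $C\ll L\real{F}(\real{\vec{d}},0)$. By Scott continuity, $(L\real{F})^{-1}(\dua C)$ is an open neighbourhood of $(\real{\vec{d}},0)$ in the product topology, and hence contains a basic product open set $O\times O'$ with $\real{\vec{d}}\in O$ and $O'$ a neighbourhood of the origin. For every $(y,y')\in O\times O'$ we then have $L\real{F}(y,y')\subseteq C$, i.e.\ $C\sqsubseteq L\real{F}(y,y')$, which is exactly the boundedness clause in the definition of locally Lipschitzian at $\real{\vec{d}}$.

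The main obstacle I expect is essentially bookkeeping: one must verify that, despite $\vec{\tau}$ containing first-order function components, the currying really delivers a continuous $\real{F}$ on a Hausdorff topological vector space so that Lemma~\ref{scott-cont} genuinely applies, and that the calculus rules for $L$---stated in Section~\ref{directional-l-derivative} for a single real-valued variable $v$---extend coordinatewise to the product, which should follow routinely from linearity of $\liminf$/$\limsup$ under scalar multiples and finite sums. The direct application of Scott continuity to the pair $(\real{\vec{d}},\real{\vec{d'}})$ itself also yields boundedness of $L\real{F}$ on a neighbourhood of $(\real{\vec{d}},\real{\vec{d'}})$, which covers the reading in which ``locally Lipschitzian at $(\real{\vec{d}},\real{\vec{d'}})$'' is interpreted as boundedness near that specific pair.
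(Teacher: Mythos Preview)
Your argument is correct and follows the same overall strategy as the paper: invoke the Scott continuity of $L\real{F}$ (Lemma~\ref{scott-cont}) to produce the open neighbourhoods required by the definition of locally Lipschitzian. The one genuine difference is in how you move from the given direction $\real{\vec{d'}}$ to a neighbourhood of the origin. The paper applies Scott continuity at $(\real{\vec{d}},\real{\vec{d'}})$ and, implicitly via homogeneity as in the proof of Proposition~\ref{1dim}, obtains open sets $O_j'$ containing the whole segment $[-\real{d'_j},\real{d'_j}]$, hence in particular the origin. You instead use homogeneity together with the sub-additivity rule $Lf(x,y)+Lf(x,z)\sqsubseteq Lf(x,y+z)$ to conclude directly that $L\real{F}(\real{\vec{d}},0)$ is compact, and then apply Scott continuity at that single point. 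Your route is slightly cleaner: it avoids the implicit compactness argument along the segment and works uniformly regardless of whether the coordinates $\tau_j$ are $\delta$ or first-order function types. Note also that your worry about the calculus rules ``extending coordinatewise'' is unnecessary: those rules are stated for $f:X\to\realLine$ with $X$ an arbitrary Hausdorff topological vector space, so once you have identified $\real{F}$ with a map on the product $\realLine_{\tau_1}\times\cdots\times\realLine_{\tau_n}$ they apply as written.
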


The following result states that local consistency exactly characterizes the set of functions on $\dualDom$ for which the infinitesimal part gives a correct approximation of the directional derivative of the function represented by the standard part.

\begin{proposition} \label{localConsistencyDerivative2}
A standard-robust and standard maximal preserving continuous functional $F : \Dom_{\vec{\tau} \to \delta}$ % D_1 \to \cdots \to D_n \to \dualDom$
% \begin{enumerate} 
% \item $F$ 
is  locally consistent if and only if for any pair of lists of locally consistent and standard maximal preserving elements $\vec{d}$ and $ \vec{d'}$, we have:
\begin{equation}\label{directional-approx} \In(F(\vec{d} + \varepsilon \vec{d'})) \sqsubseteq 
\lDeriv \real{F}(\real{\vec{d}}, \real{\vec{d'}}). \end{equation}
% \item 
% If $F, \vec{d}, \vec{d'}$ are \emph{maximal} among the locally consistent standard maximal preserving values then
% \[ \In(F(\vec{d} + \varepsilon \vec{d'})) =  
% \lDeriv \real{f}(\real{\vec{d}}, \real{\vec{d'}})) \]
% \item 
% If $F$ is locally consistent and for any pair of lists of locally consistent and maximal preserving elements $\vec{d}, \vec{d'}$  then 
% the functional $\real{F}$ has Gateaux derivative  and for any pair of lists of maximal preserving elements $\vec{d}, \vec{d'}$
% \[ \In(F(\vec{d} + \varepsilon \vec{d'})) = \varepsilon \lDeriv \real{F}(\vec{d}, \vec{d'})) \]

% \end{enumerate}

\end{proposition}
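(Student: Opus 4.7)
The plan is to prove the two directions separately; since $\sqsubseteq$ in $\realDom$ is reverse inclusion, (\ref{directional-approx}) is equivalent to the set containment $\lDeriv \real{F}(\real{\vec{d}}, \real{\vec{d'}}) \subseteq \In(F(\vec{d} + \varepsilon \vec{d'}))$. The bridge between the two sides is that local consistency forces the finite difference quotient $q(y,z,r) := \frac{\real{F}(y+rz) - \real{F}(y)}{r}$ to lie in $\In(F((\vec{e} \meet (\vec{e} + r \vec{e'})) + \varepsilon \vec{e'}))$ for any standard maximal preserving locally consistent $\vec{e}, \vec{e'}$ with $\St(\vec{e}) = y$ and $\St(\vec{e'}) = z$, since the right-hand side of the local consistency inequality reduces to exactly $q(y,z,r)$ under standard maximality.

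For the forward direction, fix a rational interval $\beta \ll \In(F(\vec{d} + \varepsilon \vec{d'}))$ in $\realDom$. By Scott continuity of $(\vec{d}, \vec{d'}) \mapsto \In(F(\vec{d} + \varepsilon \vec{d'}))$ there are basis elements $\tilde{\vec{d}} \ll \vec{d}, \tilde{\vec{d'}} \ll \vec{d'}$ with $\beta \ll \In(F(\tilde{\vec{d}} + \varepsilon \tilde{\vec{d'}}))$. For each $y$ close to $\real{\vec{d}}$ and $z$ close to $\real{\vec{d'}}$ in the relevant topological vector spaces, I construct standard maximal preserving locally consistent $\vec{e}_y, \vec{e'}_z$ whose standard parts are $y, z$ and whose infinitesimal parts are inherited from $\tilde{\vec{d}}, \tilde{\vec{d'}}$; for sufficiently small $r > 0$ a direct calculation then gives $(\vec{e}_y \meet (\vec{e}_y + r \vec{e'}_z)) + \varepsilon \vec{e'}_z \sqsupseteq \tilde{\vec{d}} + \varepsilon \tilde{\vec{d'}}$. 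Applying local consistency to $\vec{e}_y, \vec{e'}_z$ and using monotonicity of $F$ then places $q(y,z,r)$ inside $\beta^\circ$ for all such $y, z, r$, so by Definition~\ref{directionalDerivative} we get $\lDeriv \real{F}(\real{\vec{d}}, \real{\vec{d'}}) \subseteq \beta$; intersecting over all such $\beta$ via the interpolation property in $\realDom$ yields the target containment.

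For the backward direction, fix locally consistent standard maximal preserving $\vec{d}, \vec{d'}$ and $r > 0$, and set $\vec{e}_r := \vec{d} \meet (\vec{d} + r\vec{d'})$. If $\In(F(\vec{e}_r + \varepsilon \vec{d'}))$ is unbounded, consistency is automatic. Otherwise, for each $t$ on the line segment $[\real{\vec{d}}, \real{\vec{d}} + r \real{\vec{d'}}]$ the inequality $t + \varepsilon \real{\vec{d'}} \sqsupseteq \vec{e}_r + \varepsilon \vec{d'}$ together with monotonicity of $F$ and the assumed (\ref{directional-approx}) applied to the standard maximal preserving pair $(t + \varepsilon 0, \real{\vec{d'}} + \varepsilon 0)$ give boundedness of $\lDeriv \real{F}(t, \real{\vec{d'}})$, so by Lemma~\ref{loc.bound.f} $\real{F}$ is locally Lipschitzian at $t$; compactness of the segment then upgrades this to local Lipschitzianness on an open set containing it. Theorem~\ref{meanValueTheorem} supplies $t$ on the segment and $M \in \partial \real{F}(t)$ with $M(\real{\vec{d'}}) = q(\real{\vec{d}}, \real{\vec{d'}}, r)$, and by Theorem~\ref{dir-thm}(i) this value lies in $\lDeriv \real{F}(t, \real{\vec{d'}}) \subseteq \In(F(t + \varepsilon \real{\vec{d'}})) \subseteq \In(F(\vec{e}_r + \varepsilon \vec{d'}))$ (the last inclusion from monotonicity), delivering the required consistency.

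The main obstacle is the construction of the standard maximal preserving approximants $\vec{e}_y, \vec{e'}_z$ in the higher-order case of the forward direction, where basis elements of $\Dom_{\vec{\tau}}$ are finite lubs of single-step functions rather than rational dual intervals: one must choose $y$ as a standard maximal function whose pointwise values sit inside the step standard intervals of $\tilde{\vec{d}}$ while remaining close to $\real{\vec{d}}$ in the compact-open topology, and then verify that the lattice inequality $(\vec{e}_y \meet (\vec{e}_y + r \vec{e'}_z)) + \varepsilon \vec{e'}_z \sqsupseteq \tilde{\vec{d}} + \varepsilon \tilde{\vec{d'}}$ persists for sufficiently small $r > 0$ after the $\varepsilon$-perturbation mixes the standard part of $\tilde{\vec{d'}}$ into the infinitesimal component.
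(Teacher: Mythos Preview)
Your proposal is correct and follows essentially the same strategy as the paper's proof: in the forward direction you approximate $\vec d,\vec d'$ from below, apply the defining consistency relation of $F$ to standard maximal preserving refinements, observe that the difference quotient is then a single real number so $\consist$ becomes $\sqsubseteq$, and pass to the limit using Scott continuity; in the backward direction you rule out the unbounded case trivially and then invoke the mean value theorem to trap the difference quotient inside the directional $L$-derivative at an interior point of the segment.

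The one place where your argument differs from the paper's is the backward direction: you apply Theorem~\ref{meanValueTheorem} directly to $\real{F}$ on the segment $[\real{\vec d},\real{\vec d}+r\real{\vec d'}]$ in the topological vector space, which forces you to manufacture local Lipschitzianness on an open set containing the whole segment via a compactness argument. The paper instead pulls everything back to one dimension by setting $g(t)=\real{F}(\real{\vec d}+t\real{\vec d'})$ on $[0,r]$, uses the chain rule (Proposition~\ref{chain}) to bound $\lDeriv g(t,1)$ from the already-bounded $\lDeriv\real{F}$, and then applies the one-dimensional mean value theorem to $g$. The reduction via the chain rule is slightly cleaner since Proposition~\ref{1dim} gives local Lipschitzianness of $g$ immediately, but your route is equally valid.

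Your last paragraph correctly identifies the genuine subtlety in the forward direction for higher-order $\tau_i$: one needs that the standard parts of standard maximal preserving, locally consistent $\vec e$ with $\tilde{\vec d}\ll\vec e$ sweep out a full open neighbourhood of $\real{\vec d}$ (so that the $\limsup/\liminf$ in Definition~\ref{directionalDerivative} is computed over all nearby $y,z$). The paper handles this by simply asserting that the sets $O_n=\{\real{\vec e}:\vec d_n\ll\vec e\}$ form a neighbourhood system of $\real{\vec d}$, which rests on the topological embedding of Proposition~\ref{top-embed}; you do not need to construct a specific $\vec e_y$ for every $y$, only to know that such $\vec e$ exist for all $y$ in a neighbourhood.
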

If the left hand side of Equation~(\ref{directional-approx}), evaluates to a non-bottom element then by~\cite[Lemma 1]{DEG22}, $f$ is locally Lipschitzian and thus by \cite[Theorem 1(i)]{DEG22} implies that the left hand side of Equation~(\ref{directional-approx}) computes approximations to the support function of the generalised subgradient of the standard part of $F$. 
\section{Correctness of the derivative operator}\label{correctness}

By Proposition~\ref{localConsistencyDerivative2}, on locally consistent functions,  one can evaluate the derivative of a function by evaluating the function on dual numbers.  We aim now to prove that the semantic interpretations of the functions definable in the language are locally consistent.  

To obtain these results we need to extend the notion of consistency to all functions definable in our language, and in particular to functions of higher order than 2, and then prove that local consistency is preserved by the basic constructors of the language: these being function composition, $\lambda$-abstraction, lub of chains.  In order to obtain these results it is convenient to introduce a second, equivalent definition of consistency that is more suitable for extension to higher types.  Note that with the present definition of local consistency, the simple statement that composition of two locally consistent functions is locally consistent does not have a straightforward proof.  The second definition of consistency is based on logical relations. 
 Logical relations are a standard proof technique used in the semantics of functional languages; they are used for proving that the semantic interpretation of terms satisfies some desired properties. A general introduction to logical relations can be found in \cite{Mit96}, while in \cite{BartheCDG20,DigEdalat13}, logical relations are used in a way similar to our work. 
 
We define a set of logical relations which, if they are preserved by a function $f$, imply that the function $f$ is locally consistent. For any rational number $r>0$ let $R_{\delta}^{r}$ be a ternary relation over generalised dual numbers $\dualDom$ defined by:  $R_{\delta}^{r}(x_1, x_2, x_3)$ holds whenever 
 \[\St(x_3) \sqsubseteq \St(x_1) \sqcap \St(x_2) \mbox{ and } \In(x_3) \consist \frac{\St (x_2 - x_1)}{r}\] 
 or, equivalently, 
 $R_{\delta}^{r}(I_1 + \varepsilon I_1', I_2 + \varepsilon I_2', I_3 + \varepsilon I_3')$ holds whenever $I_3 \sqsubseteq I_1 \sqcap I_2$ and $r I'_3 \consist {I_2 - I_1}$.
 
For the other ground domains, $\Dom_o$ and $\Dom_\nu$, $R_{o}^{r}$ and $R_{\nu}^{r}$ are defined as follows: 
$R_{\nu}^{r}(n_1, n_2, n_3)$ holds whenever $n_3 \sqsubseteq n_1 \sqcap n_2$, and $n_1, n_2$ are consistent. 
The rationale behind this definition consists in repeating the definition of $R_{\delta}^{r}$  by considering Boolean values  and natural numbers as having a hidden infinitesimal part equal to zero.
%, and by defining on them a minus operation, returning a real value, in the obvious way for the natural numbers, and as $\ttt - \ff = 1$ for the Boolean values.  

The relations  are extended inductively to higher order domains in the usual way for logical relations:  $R_{\sigma \to \tau}^{r}(f_1, f_2, f_3)$ iff for every $d_1, d_2, d_3 \in \Dom_\sigma$, the relation $R_{\sigma}^{r}(d_1, d_2, d_3)$ implies $R_{\tau}^{r}(f_1(d_1), f_2(d_2), f_3(d_3))$.

\begin{definition}
An element $f$ in the domain $\Dom_\sigma$ is \emph{logically consistent} if it is self-related by $R_{\sigma}^{r}$, i.e. $R_{\sigma}^{r}(f, f, f)$, for any positive rational number $r$. We call a constant $c$ in the language \emph{logically consistent} if its semantic interpretation $\bsem{c}$ is logically consistent.
% \\[.1ex]
\end{definition}
 \begin{proposition}~\label{equivalenceConsistency1}
Any first-order function $f : \Dom_\tau$  is locally consistent if and only if it is logically consistent.
\end{proposition}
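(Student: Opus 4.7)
The strategy is to specialise the universally quantified $R^{r}_\delta$-related triples appearing in the logical relation to instances that match the inputs of the local consistency definition, and to handle the extra generality of the logical relation through monotonicity of $f$ and (the Scott-continuous extension of) standard robustness.

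\medskip
\noindent For $(\Leftarrow)$, suppose $f$ is logically consistent.  To extract standard robustness, given locally consistent $\vec d, \vec d'$ with $\St(\vec d)=\St(\vec d')$, apply $R^{r}(f,f,f)$ to the triple $(\vec d,\vec d',\vec d)$: the premise $R^{r}_\delta(d_i,d'_i,d_i)$ holds because $0 \in \In(d_i)$, and the conclusion forces $\St(f(\vec d)) \supseteq \St(f(\vec d'))$; swapping the roles of $\vec d$ and $\vec d'$ yields equality.  For the derivative clause, given standard maximal preserving locally consistent $\vec d, \vec d'$ and $r>0$, set
\[d_{3,i} := (\St(d_i) \sqcap (\St(d_i) + r\St(d'_i))) + \varepsilon\,\St(d'_i),\]
and check $R^{r}_\delta(d_i, d_i + rd'_i, d_{3,i})$ (all clauses hold because $\St(d_i),\St(d'_i)$ are singletons).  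Logical consistency supplies $\In(f(\vec d_3)) \consist \frac{\St(f(\vec d + r \vec d') - f(\vec d))}{r}$.  A direct computation, using $0 \in \In(d_i), \In(d'_i)$, shows $d_{3,i} \sqsupseteq e_i := (d_i \sqcap (d_i + r d'_i)) + \varepsilon d'_i$; monotonicity of $f$ then enlarges $\In(f(\vec d_3))$ to $\In(f(\vec e))$, preserving the consistency, which is exactly condition (ii).

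\medskip
\noindent For $(\Rightarrow)$, assume $f$ is locally consistent and take a componentwise $R^{r}_\delta$-related triple $(\vec d_1, \vec d_2, \vec d_3)$.  For the standard-part clause, set $\hat d_{3,i} := \St(d_{3,i}) + \varepsilon \bot$; then $\hat d_{3,i} \sqsubseteq d_{1,i}, d_{2,i}$ in both components (the first by the logical relation, the second since $\bot$ is minimal), so monotonicity of $f$ gives $\St(f(\vec{\hat d}_3)) \sqsubseteq \St(f(\vec d_1)) \sqcap \St(f(\vec d_2))$; standard robustness, extended from $\Dom^t$ to arbitrary dual values via Scott continuity of $f$, identifies $\St(f(\vec{\hat d}_3)) = \St(f(\vec d_3))$.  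For the infinitesimal-part clause, pick pointwise representatives $x_{j,i} \in \St(d_{j,i})$ and $y_{3,i} \in \In(d_{3,i})$ with $r y_{3,i} = x_{2,i} - x_{1,i}$, as permitted by the consistency clause of $R^{r}_\delta$.  Applying local consistency (ii) to the standard maximal locally consistent pair $\vec d := \vec x_1 + \varepsilon 0$ and $\vec d' := \vec y_3 + \varepsilon 0$ yields
\[ \In(f(\vec{\hat e})) \consist \tfrac{1}{r} \St\bigl(f(\vec x_2 + \varepsilon 0) - f(\vec x_1 + \varepsilon 0)\bigr),\]
with $\hat e_i = (x_{1,i} \sqcap x_{2,i}) + \varepsilon y_{3,i}$.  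One verifies $\hat e_i \sqsupseteq d_{3,i}$ componentwise, so monotonicity gives $\In(f(\vec{\hat e})) \subseteq \In(f(\vec d_3))$; and the standard maximal $\vec x_j + \varepsilon \In(\vec d_j) \sqsupseteq \vec d_j$ together with standard robustness gives $\St(f(\vec x_j + \varepsilon 0)) \subseteq \St(f(\vec d_j))$.  These two inclusions transfer the consistency from $\vec{\hat e}$ to $\vec d_3$ and from the pointwise difference quotient to the full interval difference quotient, establishing $R^{r}_\delta(f(\vec d_1), f(\vec d_2), f(\vec d_3))$.

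\medskip
\noindent The main obstacle is bridging the interval generality of $R^{r}_\delta$ --- where standard and infinitesimal parts of the inputs can both be arbitrary intervals --- with the pointwise (standard maximal) nature of condition (ii) of local consistency.  This is resolved by the pointwise selection of witnesses inside the overlap $r\In(d_{3,i}) \cap (\St(d_{2,i}) - \St(d_{1,i}))$ guaranteed by $R^{r}_\delta$, combined with standard robustness and monotonicity of $f$ to relate pointwise and interval-valued images.
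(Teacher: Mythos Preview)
Your argument follows the paper's own proof closely: both directions hinge on instantiating the logical relation with carefully chosen triples and then pushing the resulting $R^r_\delta$-relation through monotonicity of $f$ and downward closure of $\consist$.  In particular, your $(\Rightarrow)$ direction is essentially identical to the paper's: you select pointwise witnesses $x_{1,i},x_{2,i},y_{3,i}$ from the consistency clause of $R^r_\delta$, apply local consistency (ii) at those points, and transfer the result to the interval-valued inputs via monotonicity and standard robustness --- exactly the paper's ``$\vec{I_0}\sqsupseteq\vec I$, $\vec{K'_0}\sqsupseteq\vec{K'}$'' step.

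There is one genuine gap in your $(\Leftarrow)$ direction.  To establish standard robustness you apply $R^r(f,f,f)$ to the triple $(\vec d,\vec d',\vec d)$, whose third-component premise $r\In(d_i)\consist 0$ forces $0\in\In(d_i)$.  Thus you only prove $\St(f(\vec d))=\St(f(\vec d'))$ when the inputs are locally consistent.  The definition of standard robustness, however, quantifies over all (standard-robust) dual values, and the paper's proof respects this: it uses the triple $(I_i+\varepsilon J_i,\,I_i+\varepsilon J_i,\,I_i+\varepsilon K_i)$ with $I_i$ a \emph{non-degenerate} interval, so that $I_i-I_i$ has positive width and $rK_i\consist I_i-I_i$ holds for \emph{arbitrary} $K_i$ once $r$ is small enough.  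This yields $\St(f(\vec I+\varepsilon\vec K))\sqsubseteq\St(f(\vec I+\varepsilon\vec J))$ for arbitrary $\vec J,\vec K$, and continuity extends it from non-degenerate to all $\vec I$.  Your argument needs this modification; the rest is sound.
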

% Here we aim to connect logical consistency on functionals to the notion of directional (Gateaux) derivative of functionals, ultimately proving that our language can be used to correctly evaluate the Gateaux derivative of functionals.  
Next, we have the following implication regarding functionals. We conjecture that the reverse implication also holds, but since the reverse implication is not required in this work, we avoid considering it here.  

\begin{proposition}\label{functional-cons}
Any second-order function $F : D_{\vec{\tau} \to \delta}$ is locally consistent if it is logically consistent.
\end{proposition}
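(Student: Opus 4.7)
The plan is to derive both clauses of local consistency from the hypothesis $R_{\vec{\tau}\to\delta}^{r}(F,F,F)$ for every rational $r>0$, paralleling the proof of Proposition~\ref{equivalenceConsistency1} at first order. The key intermediate step is a \emph{witness lemma}: for each type $\tau_i$ appearing in $\vec{\tau}$ (so either $\delta$ or a first-order type) and each pair of standard maximal preserving, locally consistent $d_i, d_i' \in \Dom^t_{\tau_i}$, the triple
$$
\bigl(d_i,\ d_i + r\, d_i',\ (d_i \meet (d_i + r\, d_i')) + \varepsilon\, d_i'\bigr)
$$
satisfies $R_{\tau_i}^{r}$. I prove this by induction on $\tau_i$. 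At the base case $\tau_i = \delta$ a direct unpacking suffices: the standard part of the third component equals $\St(d_i) \meet \St(d_i + r d_i')$ as required, and using $\varepsilon^{2}=0$ its infinitesimal part is $(\In(d_i) \meet (\In(d_i) + r\In(d_i'))) + \St(d_i')$, which is consistent with $\St(d_i') = \St((d_i + r d_i') - d_i)/r$ because local consistency gives $0 \in \In(d_i)$ and $0 \in \In(d_i')$, so $0$ lies in the meet.

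For $\tau_i$ a first-order function type the three constructions are pointwise, and the verification of $R_{\tau_i}^{r}$ reduces, by applying the triple to any $R_{\sigma_j}^{r}$-related list of arguments, to the already-established case at $\delta$; here the hypotheses that $d_i, d_i'$ are standard maximal preserving and locally consistent ensure that the applications land in $\Dom^t_\delta$ with infinitesimal parts containing $0$, which is exactly what the base case requires.

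With the witness lemma established, clause (ii) of local consistency is immediate: feeding the componentwise witnesses into $R_{\vec{\tau}\to\delta}^{r}(F,F,F)$ produces $R_{\delta}^{r}$ on the outputs, whose second conjunct is precisely the required $\consist$-inequality. For clause (i) (standard-robustness), take $d_{3,i} := \St(d_i) + \varepsilon\, 0$, which is trivially locally consistent, and show by a small variant of the witness lemma that when $\St(d_i) = \St(d_i')$ the triple $(d_i, d_i', d_{3,i})$ satisfies $R_{\tau_i}^{r}$ for every $r>0$. Logical consistency then yields $\In(F(\vec{d_3})) \consist \St(F(\vec{d'}) - F(\vec{d}))/r$ for all $r>0$; since the left-hand side does not depend on $r$ while the right-hand side would diverge as $r\to 0^{+}$ if the difference were non-zero, the two standard parts must agree.

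The principal obstacle is the inductive step of the witness lemma at function types, where one must reason carefully about how the pointwise meet $f \meet g$ and the scalar-$\varepsilon$ multiplication interact with the logical relation when the functions are applied to an $R_{\sigma}^{r}$-related triple of arguments. The verification hinges on using the standard maximal preserving and locally consistent hypotheses to certify that every pointwise evaluation falls within the scope of the base case at $\delta$, so that the meet/$\varepsilon$ construction in the function space genuinely lifts the base-case construction along application.
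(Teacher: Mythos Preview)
Your overall architecture matches the paper's: a ``witness lemma'' (the paper's Lemma~\ref{lem-func}) feeding into $R^{r}_{\vec\tau\to\delta}(F,F,F)$. However, the inductive step of your witness lemma has a genuine gap. When $\tau_i$ is a first-order function type and you apply the triple to $R^{r}$-related arguments $(a_1,a_2,a_3)$, the resulting triple is
\[
\bigl(d_i(a_1),\ d_i(a_2)+r\,d_i'(a_2),\ (d_i(a_3)\meet(d_i(a_3)+r\,d_i'(a_3)))+\varepsilon\,d_i'(a_3)\bigr),
\]
which is \emph{not} an instance of the base-case pattern $(e,\,e+re',\,(e\meet(e+re'))+\varepsilon e')$: the three slots involve $d_i$ and $d_i'$ evaluated at \emph{different} arguments $a_1,a_2,a_3$. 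Knowing only that $d_i(a_3),d_i'(a_3)$ lie in $\Dom^t_\delta$ with infinitesimal part containing $0$ is not enough to relate $d_i(a_1)$ to $d_i(a_3)$, etc. The missing ingredient is the \emph{logical} consistency of $d_i$ and $d_i'$ (available from local consistency via Proposition~\ref{equivalenceConsistency1}), which gives $R^{r}_\delta(d_i(a_1),d_i(a_2),d_i(a_3))$ and $R^{r}_\delta(d_i'(a_1),d_i'(a_2),d_i'(a_3))$; the paper's proof of Lemma~\ref{lem-func} then combines these to establish both clauses of $R^{r}_\delta$ for the displayed triple.

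Your standard-robustness argument also has a problem: if $\In(F(\vec{d_3}))=(-\infty,+\infty)$ then the consistency relation $\In(F(\vec{d_3}))\consist \St(F(\vec{d'})-F(\vec{d}))/r$ holds for every $r>0$ trivially, so letting $r\to 0^{+}$ yields nothing. The correct route (as in the ``if'' half of Proposition~\ref{equivalenceConsistency1}) is to use the \emph{first} clause of $R^{r}_\delta$: from $R^{r}_{\tau_i}(d_i,d_i,d_i')$ one gets $\St(F(\vec{d'}))\sqsubseteq\St(F(\vec{d}))$, and swapping roles gives the reverse inequality.
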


Note that, with the single exception of $\llderiv_{\vec{\tau}}$, all the constants in the language are logically consistent. The proof is routine for almost all  constants.  To prove that the fixed-point operator preserves the above relations, one shows that the bottom elements are self-related by $R_\sigma^{r}$, and that the relation is closed under the lub of chains. Note that $(<0)$ preserves the relation when applied to the domain $\Dom_{\pi}$, but it has no logically consistent extension to the domain $\Dom_{\delta}$.

Using the technique of logical relations \cite{Mit96}, it is straightforward to show:

\begin{proposition}
 The semantic interpretation $\esem{e}$ of any closed expression $e: \tau$ not containing $\llderiv$ is logically consistent.
\end{proposition}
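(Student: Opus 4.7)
The plan is to prove a stronger statement by structural induction on the expression $e$: for any (possibly open) expression $e : \tau$ not containing $\llderiv$, any positive rational $r$, and any three environments $\rho_1, \rho_2, \rho_3$ such that $R^{r}_{\sigma}(\rho_1(x), \rho_2(x), \rho_3(x))$ holds for every free variable $x : \sigma$ of $e$, we have $R^{r}_{\tau}(\esem{e}_{\rho_1}, \esem{e}_{\rho_2}, \esem{e}_{\rho_3})$. Taking $e$ closed gives the desired conclusion. This strengthening is essentially forced by the $\lambda$-abstraction case: to show the logical relation at a function type one must invoke the induction hypothesis at suitably extended environments.

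The base cases are immediate. For a variable $x^{\sigma}$ the conclusion is exactly the hypothesis on the environments. For a constant $c$, the hypothesis $c \neq \llderiv$ together with the previous paragraph of the paper asserts that $\bsem{c}$ is logically consistent, i.e.\ $R^{r}_{\tau}(\bsem{c}, \bsem{c}, \bsem{c})$, and $\esem{c}_{\rho} = \bsem{c}$. The application case $e_1\,e_2$ follows routinely by applying the induction hypothesis to $e_1$ and to $e_2$ and unfolding the definition of $R^{r}_{\sigma \to \tau}$. For $\lambda x^{\sigma}.e$, pick any triple $(d_1,d_2,d_3)$ with $R^{r}_{\sigma}(d_1,d_2,d_3)$; the extended environments $\rho_i[d_i/x]$ still satisfy the hypothesis, so the induction hypothesis yields $R^{r}_{\tau}(\esem{e}_{\rho_1[d_1/x]}, \esem{e}_{\rho_2[d_2/x]}, \esem{e}_{\rho_3[d_3/x]})$, which is precisely what is required by the definition of the logical relation at function type.

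The delicate point, flagged in the paragraph immediately preceding the proposition, is the constant $Y_{\sigma}$. I would discharge it in two lemmas that together constitute the main conceptual obstacle of the proof. First, for every type $\sigma$ the bottom element $\bot_{\sigma}$ is self-related by $R^{r}_{\sigma}$: at ground types this is immediate since $\St(\bot) = \bot$ and $\In(\bot) = \bot$ trivially satisfy the containment and consistency clauses, and at function types one uses the fact that $\bot_{\sigma \to \tau}$ is the constant $\bot_{\tau}$ function together with self-relatedness of $\bot_{\tau}$. Second, $R^{r}_{\sigma}$ is closed under suprema of directed sets: at ground types this follows from Scott continuity of $\St, \In$ and the fact that $\sqsubseteq$, $\sqcap$ and $\uparrow$ are closed under lubs of directed sets in $\realDom$, and at function types from pointwise-ness of the order. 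Given these two ingredients, a side induction on $i$ shows that each $f^{i}(\bot_{\sigma})$ is self-related whenever $f$ is, using the application case already handled; the lub closure then yields logical consistency of $\bsem{Y_{\sigma}} f = \bigsqcup_{i} f^{i}(\bot_{\sigma})$, so $Y_{\sigma}$ is itself logically consistent. This completes the treatment of all constants other than $\llderiv$, and the main induction then goes through.
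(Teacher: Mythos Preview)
Your proof is correct and is precisely the standard ``fundamental lemma'' argument the paper is invoking when it writes ``Using the technique of logical relations \cite{Mit96}, it is straightforward to show''; you have simply spelled out what the paper leaves implicit, including the two admissibility lemmas for $Y_\sigma$ that the paper sketches in the paragraph preceding the proposition. One small wording slip: in the $Y_\sigma$ case you phrase the side induction as ``each $f^{i}(\bot_{\sigma})$ is self-related whenever $f$ is'', but to establish $R^{r}_{(\sigma\to\sigma)\to\sigma}(Y_\sigma,Y_\sigma,Y_\sigma)$ you actually need the stronger (and equally routine) statement that $R^{r}_{\sigma}(f_1^{i}(\bot),f_2^{i}(\bot),f_3^{i}(\bot))$ whenever $R^{r}_{\sigma\to\sigma}(f_1,f_2,f_3)$; your argument works verbatim for this.
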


\begin{corollary}
The semantic interpretation $\esem{e}$ of any closed expression $e$ having second-order function type and not containing $\llderiv$ is locally consistent.
\end{corollary}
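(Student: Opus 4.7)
The plan is essentially a one-line combination of the two preceding results, so I would present it as a direct deduction rather than a substantive new argument. First I would appeal to the immediately preceding Proposition, which establishes that for any closed expression $e$ of type $\tau$ not containing $\llderiv$, the semantic interpretation $\esem{e}$ is logically consistent, i.e.\ $R_{\tau}^{r}(\esem{e},\esem{e},\esem{e})$ for every positive rational $r$. This uses the standard logical-relations argument: all basic constants of the language (with the sole exception of $\llderiv_{\vec{\tau}}$) are logically consistent, application and $\lambda$-abstraction preserve the logical relations by their definition on function types, and the fixed-point operator is handled by noting that bottom elements are self-related and that the relations are closed under lubs of chains.

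Next, I would restrict attention to the case at hand, where the type of $e$ is a second-order function type $\vec{\tau} \to \delta$. By Proposition~\ref{functional-cons}, logical consistency of a second-order functional implies its local consistency. Applying this to $\esem{e}$ immediately yields the conclusion, namely that $\esem{e}$ is locally consistent in the sense of the definition from Section~\ref{consistency-section}.

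There is no real obstacle here: the work has already been carried out in proving the preceding Proposition (the logical-relations induction over the term structure) and Proposition~\ref{functional-cons} (the implication from logical to local consistency for functionals). The corollary is simply the composition of these two facts, specialised to second-order types. The only point that deserves a brief remark in the write-up is why the hypothesis ``not containing $\llderiv$'' is essential: it is exactly because $\llderiv_{\vec{\tau}}$ is the unique constant of the language whose semantic interpretation fails to be logically consistent, so without this restriction the inductive step for constants in the preceding proposition would break.
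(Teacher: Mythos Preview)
Your proposal is correct and matches the paper's approach exactly: the corollary is an immediate combination of the preceding Proposition (logical consistency of $\esem{e}$ for $\llderiv$-free terms) with Proposition~\ref{functional-cons} (logical consistency implies local consistency for second-order functionals). The paper does not spell this out further, treating it as self-evident from these two results, just as you do.
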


% The notion of logical consistency scales well to higher orders with all the desirable properties including preservation by composition, currying, uncurrying, and lub of chains. However for functionals on the real domain, the notion of logical consistency becomes quite technical, and does not seem to have a clear connection with standard notions in analysis.  

% \begin{proposition}
% Let $F : (\dualDom \rightarrow \dualDom) \rightarrow \dualDom$ and $f, g : (\dualDom \rightarrow \dualDom)$ be continuous and locally consistent. The \emph{directional derivative} of $F$ in $f$ along the direction $g$ is the infinitesimal  part of $F(f + \varepsilon g)$ 
% \end{proposition}

\begin{corollary}
The derivative operator $\llderiv$ is sound, i.e., for any closed expression $\llderiv F (\vec{f})(\vec{g})$, if $F$ is a second-order function, $\esem{F}, \esem{\vec{f}}$ are standard maximal preserving, and $\llderiv$ is not contained in $F, \vec{f}, \vec{g}$ then 
\[
\esem{\llderiv F (\vec{f})(\vec{g})} \sqsubseteq \lDeriv \real{\esem{F}} (\real{\esem{\vec{f}}}, \real{\esem{\vec{g}}})
\]
% For any expression in the language $e$ containing only consistent constants, it is possible to  evaluate the directional derivative of $\esem{e}$
\end{corollary}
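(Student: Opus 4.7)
My plan is to unfold the denotational semantics of $\llderiv$ and then invoke the local-consistency characterisation of the directional derivative, Proposition~\ref{localConsistencyDerivative2}, applied to the semantic value $\esem{F}$. By the semantic clause $\bsem{\llderiv_{\vec{\tau}}}\, f\, \vec{d}\, \vec{e} = \In(f(\vec{d} + \varepsilon\vec{e}))$ and compositionality of $\esem{\_}$, the left-hand side of the claimed inequality reduces to $\In\bigl(\esem{F}(\esem{\vec{f}} + \varepsilon \esem{\vec{g}})\bigr)$, so the goal collapses to comparing this value with $\lDeriv\real{\esem{F}}(\real{\esem{\vec{f}}},\real{\esem{\vec{g}}})$.

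Next I would assemble the hypotheses required by Proposition~\ref{localConsistencyDerivative2}. Since $F$ has second-order function type and does not contain $\llderiv$, the preceding corollary delivers local consistency of $\esem{F}$, while the theorem's own assumptions supply standard-robustness and standard maximal preservation of $\esem{F}$. For the base points $\esem{\vec{f}}$, standard maximal preservation is explicit in the hypothesis, and local consistency follows because $\vec{f}$ avoids $\llderiv$: by the proposition just before the corollary, $\esem{\vec{f}}$ is logically consistent, so Proposition~\ref{equivalenceConsistency1} (and, at ground type $\delta$, the observation that self-relatedness $R_\delta^{r}(d,d,d)$ forces $\In(d)\consist 0$) yields local consistency. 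The same logical-relations argument gives local consistency of $\esem{\vec{g}}$. The required inequality then follows immediately by instantiating Proposition~\ref{localConsistencyDerivative2} with the lists $\vec{d}=\esem{\vec{f}}$ and $\vec{d'}=\esem{\vec{g}}$.

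The main obstacle I anticipate lies in the side-condition that the directions $\esem{\vec{g}}$ are also \emph{standard maximal preserving}, a clause demanded by Proposition~\ref{localConsistencyDerivative2} but not explicitly granted for $\vec{g}$ by the corollary. Handling it amounts to tracing through the logical-relations apparatus of Section~\ref{correctness} and checking that the $\llderiv$-free fragment of Dual PCF produces only standard-maximal-preserving values when applied to standard-maximal-preserving inputs; equivalently, one verifies that standard maximal preservation is closed under application, $\lambda$-abstraction, and lubs of chains, so that it propagates along with logical consistency in the fundamental-lemma induction. Once this bookkeeping is discharged, the combination of the semantic unfolding and Proposition~\ref{localConsistencyDerivative2} closes the proof, and Theorem~\ref{dir-thm}(i) then interprets the right-hand side concretely as approximations to values of the support function of the generalised subgradient of $\real{\esem{F}}$.
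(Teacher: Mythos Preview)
Your approach is correct and is exactly what the paper intends: the corollary is stated without proof because it follows immediately by unfolding the semantic clause for $\llderiv$ and invoking Proposition~\ref{localConsistencyDerivative2}, with the local consistency of $\esem{F}$, $\esem{\vec{f}}$, $\esem{\vec{g}}$ supplied by the preceding proposition and corollary on logical/local consistency of $\llderiv$-free terms. Your observation that Proposition~\ref{localConsistencyDerivative2} also demands standard maximal preservation of the direction list $\esem{\vec{g}}$, a hypothesis not made explicit in the corollary's statement, is a genuine point about the paper's formulation rather than a flaw in your argument; the paper appears to leave this implicit.
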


Since our language contains the if-then-else operator, and it is a well-known problem that the if-then else constructor produces an inconsistent result with automatic differentiation, the above result may appear contradictory. Note, however, that there are restrictions on the functions that can be defined on dual numbers. In particular, it is impossible to convert a dual number into a real number, or to test whether a dual value is less or greater than 0. Consequently, all functions from duals to Booleans are constant, so the if-then-else operator cannot be used to define functions on duals that have no generalised derivative. As in \cite{DigEdalat13}, we have used the min/max operators as a safe alternative to if-then-else.

\section{Definability of Linear functionals}\label{lin-func}
We show that any computable continuous linear functional on $C_0(\realLine)$, the set of continuous real functions that vanish at infinity, can be expressed in our language. We start with continuous linear functionals on $([0,1]\to \realLine)$.
\begin{proposition}\label{R-rep}
If $F:([0,1]\to {\mathbb R})\to {\mathbb R}$ is a continuous linear functional then there exist two right continuous non-decreasing maps $g_i^\dagger:[c_i,d_i]\to[0,1]$ with $i=1,2$ such that for any continuous map $f:[0,1]\to {\mathbb R}$ we have 
\[F(f)=\int_{c_1}^{d_1} f\circ g^\dagger_1(x)\,dx -\int_{c_2}^{d_2} f\circ g^\dagger_2(x)\,dx.\]
\end{proposition}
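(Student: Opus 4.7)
The plan is to apply the Riesz representation theorem to write $F$ as integration against a signed Borel measure, split that measure via the Jordan decomposition into two positive parts, and then convert each resulting positive-measure integral into a Lebesgue integral on a subinterval of $\realLine$ by means of the inverse-CDF (quantile) transform.

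In detail, since $[0,1]$ is compact the compact-open topology on $([0,1]\to\realLine)$ coincides with the sup-norm topology, so $F$ is a bounded linear functional on $C([0,1])$. By the Riesz--Markov--Kakutani theorem there is a signed regular Borel measure $\mu$ on $[0,1]$ with $F(f)=\int_{[0,1]} f\,d\mu$. The Jordan decomposition gives $\mu=\mu_1-\mu_2$ with each $\mu_i$ a finite positive Borel measure. I would then set $c_i=0$, $d_i=\mu_i([0,1])$, let $G_i:[0,1]\to[0,d_i]$ be the distribution function $G_i(x)=\mu_i([0,x])$ (non-decreasing and right-continuous), and define the right-continuous pseudo-inverse
\[
g_i^\dagger(y)=\inf\{x\in[0,1]:G_i(x)>y\},\qquad y\in[c_i,d_i),
\]
with $g_i^\dagger(d_i)=1$. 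A direct argument, using the right-continuity of $G_i$, shows that $g_i^\dagger:[c_i,d_i]\to[0,1]$ is non-decreasing and right-continuous, giving the maps required by the statement.

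The main technical step is the change-of-variables identity
\[
\int_{c_i}^{d_i} f\circ g_i^\dagger(y)\,dy\;=\;\int_{[0,1]} f\,d\mu_i,
\]
valid for every continuous $f:[0,1]\to\realLine$. I would establish it by showing that the pushforward of Lebesgue measure on $[c_i,d_i]$ under $g_i^\dagger$ equals $\mu_i$. The key computation is the set identity
\[
\{y\in[c_i,d_i]:g_i^\dagger(y)\le x\}=[0,G_i(x)],
\]
again a direct consequence of right-continuity of $G_i$. The pushforward therefore assigns mass $G_i(x)=\mu_i([0,x])$ to every initial segment $[0,x]$, and a finite Borel measure on $[0,1]$ is determined by its values on such segments, so the two measures agree. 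Subtracting the resulting identities for $i=1,2$ yields exactly the claimed representation of $F(f)$.

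The only non-routine step is the verification of the pushforward identity for the pseudo-inverse (delicate because $\mu_i$ may have atoms, so $G_i$ may be discontinuous on the left and constant on intervals); the existence of the representing measure and the Jordan decomposition are classical results that I would cite without reproof.
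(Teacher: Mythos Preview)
Your proposal is correct and follows the same overall strategy as the paper: invoke Riesz to represent $F$ as integration against a signed measure, split into positive parts, and reduce each positive integral to a Riemann integral via the quantile (inverse-CDF) transform $g_i^\dagger$. The paper's proof of the key change-of-variables identity (its Lemma on $\int_a^b f\,dg=\int_c^d f\circ g^\dagger\,dx$) is carried out elementarily by comparing Darboux lower and upper sums for the Riemann--Stieltjes and Riemann integrals, whereas you use the cleaner measure-theoretic pushforward argument; both establish the same fact, and your route is the standard probabilistic one. One small correction: your displayed set identity $\{y:g_i^\dagger(y)\le x\}=[0,G_i(x)]$ is not literally a set equality in general (when $G_i$ is constant to the right of $x$ the endpoint $G_i(x)$ may be excluded), but the Lebesgue measure of the set is $G_i(x)$ in every case, which is all you need.
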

By employing the envelopes of $g^\dagger_1$ and $g^\dagger_2$ as in Proposition~\ref{envelope}, we can then deduce the following result. 
\begin{theorem}~\label{clf-int} The continuous linear functional $F:([0,1]\to {\mathbb R})\to{\mathbb R}$ has a conservative extension ${F}^\star:(\interval[0,1]\to \interval {\mathbb R})\to \interval{\mathbb R}$ given by
\[{F^\star}(f)=\int_{[c_1,d_1]}f\circ (g_1^\dagger)^\star(x)\,dx-\int_{[c_2,d_2]}f\circ (g_2^\dagger)^\star(x)\,dx.\]
Similarly, the embedding $\dual{F}$ of $F$ in the dual domain $\Dom_{(\delta \to \delta) \to \delta}$ is given by:
\[\dual{F}(f)=\int^{{\mathbf{d}}}_{[c_1,d_1]}f \dual{(g_1^\dagger)^\star \real{x}}dx - \int^{\mathbf{d}}_{[c_2,d_2]}f \dual{(g_2^\dagger)^\star \real{x}}dx.\]
\end{theorem}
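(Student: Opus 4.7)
I would split the proof into two halves, matching the two displayed formulas.

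\emph{First formula $F^\star$.} Starting from the representation in Proposition~\ref{R-rep}, the envelope operator of Proposition~\ref{envelope} applies to the right-continuous non-decreasing maps $g_i^\dagger:[c_i,d_i]\to[0,1]$, since Proposition~\ref{envelope} requires only that the source be a map from a dense subspace and does not assume continuity. This yields Scott continuous $(g_i^\dagger)^\star:\interval[c_i,d_i]\to\interval[0,1]$. Composition with any $f\in(\interval[0,1]\to\interval\mathbb R)$ and subsequent domain-theoretic Riemann integration, as introduced in Section~\ref{denotational-semantics}, are Scott continuous operations, so $F^\star$ is a well-defined Scott continuous functional. To establish conservativity, i.e.\ $F^\star(f^\star)=F(f)$ for continuous $f:[0,1]\to\mathbb R$, I would observe that $g_i^\dagger$, being monotone, is continuous outside a countable (hence Lebesgue-null) set. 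At every continuity point $x$ of $g_i^\dagger$, $(g_i^\dagger)^\star(x)=\{g_i^\dagger(x)\}$ and so $f^\star\circ(g_i^\dagger)^\star(x)$ is the singleton $\{f(g_i^\dagger(x))\}$. At a jump point the value is a proper interval but contributes only on a Lebesgue-null set, so by the coincidence of the domain-theoretic Riemann integral with the classical Riemann integral~\cite{edalat2000integration},
\[\int_{[c_i,d_i]}f^\star\circ(g_i^\dagger)^\star(x)\,dx=\int_{c_i}^{d_i}f\circ g_i^\dagger(x)\,dx\]
as singletons, and the subtraction over $i=1,2$ yields $F(f)$ by Proposition~\ref{R-rep}.

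\emph{Second formula $\dual{F}$.} By the construction in Section~\ref{star-op}, $\dual{F}=h^\star$ where $h(d)=F(\real{d})+\varepsilon(\lDeriv F)^\star(\real{d},\infi{d})$ on the dense subspace $\Dom^t_{\delta\to\delta}$ of standard maximal preserving elements. Linearity and continuity of $F$ force $\lDeriv F(f,g)=F(g)$ for maximal $f,g$ directly from Definition~\ref{directionalDerivative}, because the difference quotient $(F(f+rg)-F(f))/r$ equals $F(g)$ identically. Thus on the dense subspace, $h(f)=F(\real{f})+\varepsilon F(\infi{f})$. I would then verify that the displayed formula is Scott continuous in $f$ (composition of continuous operations: the envelope $(g_i^\dagger)^\star$, the canonical embedding $\dual{(\,\cdot\,)}_\delta$, application, and the dual integration constructor) and that it agrees with $h$ on the dense subspace. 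For the agreement, I would use the decomposition
\[\int^{\mathbf d}_{[a,b]}G(x)\,dx=\int_{[a,b]}^\star\St(G(x))\,dx+\varepsilon\int_{[a,b]}^\star\In(G(x))\,dx\]
and apply the first half separately to the standard and infinitesimal components, obtaining $F(\real{f})+\varepsilon F(\infi{f})=h(f)$. Uniqueness of Scott continuous extensions from a dense subspace then identifies the formula with $h^\star=\dual{F}$.

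\emph{Main obstacle.} The principal technical point is the treatment of the jumps of $g_i^\dagger$: the envelope assigns proper interval values at these countably many points, and one must appeal to the standard result that the domain-theoretic Riemann integral agrees with the classical one on such functions so that no spurious interval width leaks into the output. Everything else reduces to routine continuity bookkeeping and to the linearity-based identity $\lDeriv F = F$, which is essentially immediate from the definitions.
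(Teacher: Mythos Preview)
Your overall structure matches the paper's: the second half (linearity gives $\lDeriv F(f,g)=F(g)$, then split the dual integral into standard and infinitesimal parts and reduce to the first half) is essentially what the paper does.

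For the first half you take a genuinely different route. The paper proves two dedicated lemmas: Lemma~\ref{env-comp} shows $f^\star\circ h^\star=(f\circ h)^\star$ for continuous $f$ and non-decreasing $h$, and Lemma~\ref{int-interval} then shows $\int_c^d f\circ h\,dx=\int_{[c,d]} f^\star\circ h^\star\,dx$ by checking directly that the Darboux lower and upper sums coincide (the envelope merely fills in the closed gap at each jump, which does not change the inf or sup over a partition sub-interval). Your argument instead appeals to the fact that the discontinuity set of $g_i^\dagger$ is Lebesgue-null and cites the agreement of the domain-theoretic and classical Riemann integrals. This is valid provided the reference you cite really covers the case of a Scott continuous map that is singleton-valued off a null set; the paper's Darboux approach is more self-contained and avoids that dependency.

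There is one genuine slip in your second half. You conclude by invoking ``uniqueness of Scott continuous extensions from a dense subspace''. No such uniqueness holds: the constant $\bot$ functional is Scott continuous and trivially below $h$ on the dense subspace. What Proposition~\ref{envelope} gives is that $h^\star$ is the \emph{greatest} continuous function below $h$, so agreement on the dense subspace only yields that your formula is $\sqsubseteq \dual{F}$. To get equality you must argue maximality of the formula --- for instance by observing that the dual integration constructor and the envelope operator used in the formula are themselves maximal extensions, so the composite is the envelope of $G$. The paper is admittedly terse here (``the result now follows by the definition of envelope and integration on the dual domain''), but replacing ``uniqueness'' with the correct maximality reasoning is necessary for your argument to close.
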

Theorem~\ref{clf-int} reduces the definability of a computable linear functional $F$ to the definability of computable functions $(g_1^\dagger)^\star$ and $(g_2^\dagger)^\star$. The latter result is proved in \cite{DiGianan99,Esc96}, for a calculus whose expressing power, on first order functions, is equivalent to the present calculus. Thus, in our calculus computable linear functionals of type $([0,1]\to\realLine)\to\realLine$ are definable. Next, consider $C_0(\realLine)$. We show in~\cite{DEG22}, using Riesz-Markov representation theorem and Hahn decomposition of finite signed measures, that Theorem~\ref{clf-int} can be extended to continuous linear functionals on $C_0(\realLine)$. Therefore, any computable continuous linear functional on $C_0(\realLine)$ can be expressed in our language. In particular, this holds for any continuous linear functional on $C_0(\realLine)$ equipped with the compact-open topology, since the compact-open topology is weaker that the sup norm topology on $C_0(\realLine)$. 

\section{Conclusions and future work}
We have for the first time developed a language to compute the derivatives of functionals on real-valued functions. Dual PCF contains, as basic primitives, the operations of integration and sup. Using this language it is possible to numerically evaluate the directional derivative  of functionals. We have also reduced the problem of definability of continuous linear functionals to the previously solved problem of definability of real functions. 

The present work can be expanded in different directions. As an immediate application, one can properly implement the primitives of Dual PCF in a programming language.  This will provide a simple and new method to numerically evaluate the derivative of functionals.  In this paper, to ensure correctness of computations, and to avoid dealing with errors induced by floating point arithmetic, we assume that we are computing with exact real numbers. However, there would be no problems in defining our primitive operations over floating point numbers instead.  One just needs to encode the infinitesimal part of a dual number, representing the directional derivative of a function, by an interval.

We conjecture that using supremum and integration all computable functionals, and not just the computable linear functionals,  would be definable.  

We may extend our language to support nested differentiation. A common approach is to introduce a set of ``independent'' infinitesimals, each like $\varepsilon$ in that they square to zero, but such that their mutual products are not zero \cite{ManzyukPRRS19}.

% A potentially important direction to explore is constructing a similar (or possibly identical) language for reverse-mode automatic differentiation. It's conceivable that a very similar object to the dual numbers can be used for this, which might deserve to be called the ``codual numbers''. See \cite{codual-github} for a draft of this idea, and \cite{codual-question} for a Stack Exchange discussion.

% Another reverse-mode related idea would be to consider whether reverse mode can be used to compute the \emph{gradient} of a functional instead of the \emph{directional derivative}.

%Another direction to pursue is to adapt our formalism to \emph{reverse mode} automatic differentiation instead of \emph{forward mode}. We propose doing this using a variant of the algebra of dual numbers which has recently been suggested, called the \emph{codual numbers}. The codual numbers are a subset of $\mathbb R \times {\mathbb R}^{\mathbb R}$ for which the second component is a linear map. A multiplication operation can be defined on codual numbers by $(u,du)\times(v,dv) = (uv, \lambda k. \, u\,dv(k) + v\,du(k))$ which expresses forward mode autodiff, or by $(u,du)\times(v,dv) = (uv, \lambda k. \,dv(uk) + du(vk))$ which expresses reverse mode. Denotationally, these are the same,  but operationally they are not. See \cite{codual-github} for a draft implementation, and \cite{codual-question} for a Stack Exchange discussion.

%\newpage

\bibliography{dualNumbersMFPS}
\bibliographystyle{entics}
\end{document}